\newcommand\void[1]       {}
\newtheorem{thm}{Theorem}
\newtheorem{defn}[thm]{Definition}
\newtheorem{prop}[thm]{Proposition}
\newtheorem{cor}[thm]{Corollary}
\newtheorem{rema}[thm]{Remark}
\newtheorem{lemma}[thm]{Lemma}
\newtheorem{conj}[thm]{Conjecture}
\newtheorem{expl}[thm]{Example}
\numberwithin{equation}{section}
\numberwithin{thm}{section}
\newcommand\nn             {\nonumber \\}
\newcommand\be            {\begin{equation}}
\newcommand\ee            {\end{equation}}
\newcommand\bea           {\begin{eqnarray}}
\newcommand\eea         {\end{eqnarray}}
\newcommand\bnu          {\begin{enumerate}}
\newcommand\enu          {\end{enumerate}}
\newcommand\BF           {\mathrm{BF}}
\newcommand\TO           {\EuScript{TO}}
\newcommand\CHom      {\mathcal{H}\mathrm{om}}
\newcommand\Hom           {\mathrm{Hom}}
\newcommand\id            {\mathrm{id}}
\newcommand\op          {\mathrm{op}}
\newcommand\ev          {\mathrm{ev}}
\newcommand\hilb   {\EuScript{H}\mathrm{ilb}}
\newcommand\vect    {\EuScript{V}\mathrm{ect}}
\newcommand\rep     {\EuScript{R}\mathrm{ep}}
\newcommand\fun     {\EuScript{F}\mathrm{un}}
\newcommand\alg     {\mathrm{Alg}}
\newcommand\prebf  {\mathrm{BF}^{pre}}
\newcommand\one    {\mathbf{1}}
\newcommand\tf      {\vec{f}}
\newcommand{\rev} {\mathrm{rev}}
\newcommand{\bulk}   {\underline{\sl bulk} }
\newcommand{\pf}{\begin{proof}}
\newcommand{\epf}{\end{proof}}
\newcommand\Cb            {\mathbb{C}}
\newcommand\Zb            {\mathbb{Z}}
\newcommand\EA           {\EuScript{A}}
\newcommand\EB           {\EuScript{B}}
\newcommand\EC           {\EuScript{C}}
\newcommand\ED           {\EuScript{D}}
\newcommand\EE          {\EuScript{E}}
\newcommand\EF          {\EuScript{F}}
\newcommand\EG         {\EuScript{G}}
\newcommand\EM          {\EuScript{M}}
\newcommand\EN         {\EuScript{N}}
\newcommand\EX         {\EuScript{X}}
\newcommand\EY         {\EuScript{Y}}
\newcommand\EZ         {\EuScript{Z}}
\newcommand\Z           {\mathcal{Z}}
\begin{document}

\begin{center} \LARGE
Boundary-bulk relation for topological orders \\ 
as the functor mapping higher categories to their centers
\\
\end{center}

\vskip 2em
\begin{center}
{\large
Liang Kong$^{a,b,c}$,\,  
Xiao-Gang Wen$^{d,e}$,
Hao Zheng$^{f}$\,
~\footnote{Emails:
{\tt  kong.fan.liang@gmail.com, wen@dao.mit.edu, hzheng@math.pku.edu.cn}}}
\\[2em]
$^a$ Department of Mathematics and Statistics\\
University of New Hampshire, Durham, NH 03824, USA
\\[1em]
$^b$ Center of Mathematical Sciences and Applications\\ 
Harvard University, Cambridge, MA 02138, USA
\\[1em] 
$^c$ Institute for Advanced Study (Science Hall) \\
Tsinghua University, Beijing 100084, China
\\[1em]
$^d$ Department of Physics, Massachusetts Institute of Technology, \\
Cambridge, Massachusetts 02139, USA
\\[1em]
$^e$ Perimeter Institute for Theoretical Physics, \\
Waterloo, Ontario, N2L 2Y5 Canada
\\[1em]
$^f$ Department of Mathematics, Peking University,\\
Beijing, 100871, China
\end{center}

\vskip 3em

\begin{abstract}
In this paper, we study the relation between topological orders and their
gapped boundaries.  We propose that the bulk for a given gapped boundary theory
is unique.  It is actually a consequence of a microscopic definition of a local topological order, which is a 
(potentially anomalous) topological order defined on an open disk. 
Using this uniqueness, we show that the notion of ``bulk'' is equivalent to the notion of center in
mathematics. We achieve this by first introducing the notion of a morphism
between two local topological orders of the same dimension, then proving that
the bulk satisfying the same universal property as that of the center in
mathematics. We propose a classification (formulated as a macroscopic definition) of $n+$1D local topological orders by unitary multi-fusion $n$-categories, 
and explain that the notion of a morphism between two local topological orders 
is compatible with that of a unitary monoidal $n$-functor in a few low dimensional cases. 
We also explain in some low dimensional cases that this classification is compatible with the result of ``bulk = center". In the end, we explain that above boundary-bulk relation is only the first layer of a
hierarchical structure which can be summarized by the functoriality of the bulk
(or center). This functoriality also provides the physical meanings of some well-known mathematical results on fusion 1-categories. This work can also be
viewed as the first step towards a systematic study of the category of local
topological orders, and the boundary-bulk relation actually provides a useful
tool for this study.



\end{abstract}

\newpage

\tableofcontents

\section{Introduction}


In this work, we study the boundary-bulk relation of local topological orders in any
dimensions via higher categories. We avoid a lengthy introduction to
topological orders \cite{wen89}. Instead, we direct readers to many online
resources for this vast and fast-growing topics (see for example a long list of
physics references in \cite{kong-wen}). A local topological order is a potentially anomalous  
topological order on the open $n$-disk $D^n$ as the space manifold\footnote{It can be viewed as a TQFT restricted to an open disk in the sense of Morrison-Walker \cite{mw,walker}.} (see Def.\,\ref{def:TO} and Remark\,\ref{rema:global-TO}, \ref{rema:spins}). Since we only study local topological orders in this work, unless specified otherwise, we abbreviate ``a local topological order" to ``a topological order" throughout this work without further announcement. 


\medskip
There are at least two types of problems to study in the field of topological
orders. The first type is to construct and classify all topological orders.
The second type is to study relations
among all topological orders, such as phase transitions. To study all these
relations among all topological orders as a whole amounts to study the category
$\TO$ of (simple) topological orders.  The topological orders of the same space-time
dimension $n$, or $n$D topological orders, form an obvious subcategory of
$\TO$, denoted by $\TO_n$.  
There are many natural but in-equivalent definitions of $\TO_n$. 
In this work, three different and in-equivalent definitions $\TO_n^{closed-wall}$, $\TO_n^{fun}$ and $\TO_n^{wall}$ appear. 

\medskip 
One type of relation that is important in physics is the relation between the boundary physics and the bulk
physics. Mathematically, the boundary-bulk relation for topological orders with gapped boundaries is a relation between $\TO_n$ and $\TO_{n+1}$ (in some sense a functor $\TO_n \to \TO_{n+1}$ see
Sec.\,\ref{sec:Z-functor}). All three definitions of $\TO_n$ play interesting roles in this relation. This work can also be viewed as the first step towards a systematic study of the category $\TO_n$. The boundary-bulk relation actually provides a useful tool for this study.

The topological excitations on gapped boundaries were first studied in the 2+1D toric code model (\cite{kitaev1}) by Bravyi and Kitaev in \cite{bravyi-kitaev}. It was latter generalized to Levin-Wen models (\cite{lw-mod}) with gapped boundaries in \cite{kitaev-kong}, where the topological excitations on a boundary of such a lattice
model were shown to form a unitary fusion 1-category $\EC$, and those in the bulk form a
unitary braided fusion 1-category which is given by the Drinfeld center
$Z_2(\EC)$ of $\EC$ (see also \cite{lan-wen}). But this work does not address the
question if the bulk theory for a given boundary theory is unique. In
\cite{fsv}, it was shown model-independently that among all possible bulks associated to the same boundary theory $\EC$, the bulk given by $Z_2(\EC)$ is the universal one (a terminal object).
One way to complete the proof of the uniqueness of the bulk is to view the
gapped boundary as a consequence of anyon condensation \cite{bs} of a given bulk theory $\ED$ to the trivial phase. This idea leads to a classification of gapped boundaries for abelian topological theories \cite{kapustin-saulina,levin,bjq}. This result, together with results in \cite{fsv}, implies the uniqueness of the bulk for abelian topological theories. The proof for general 2+1D topological orders appeared in the mathematical theory of anyon condensation developed in \cite{kong-anyon}, in
which it was shown that such a condensation is determined by a Lagrangian
algebra $A$ in $\ED$, and $\EC$ is monoidally equivalent to the category
$\ED_A$ of $A$-modules in $\ED$.\footnote{In \cite{kong-anyon}, the bootstrap
analysis shows that $\EC$ must be a sub-fusion-category of $\ED_A$, and the
boundary-bulk relation was used as a supporting evidence for $\EC=\ED_A$. But it
is just physically natural that all possible quasi-particles that can be
confined on the boundary, i.e. objects in $\ED_A$, should all survive on the
boundary after the condensation. In other words, $\EC=\ED_A$ is a natural
physical requirement.} Moreover, we have $Z_2(\ED_A)\simeq \ED$. This completes
the proof of a part of the {\it bulk-boundary relation} in 3D, which says that the 3D
bulk theory $\ED$ of a given 2D boundary $\EC$ is unique and given by the
Drinfeld center of $\EC$.

\smallskip Does this bulk-boundary relation hold in higher dimensions?  The
gapped boundary of a non-trivial topological order should be viewed as an
anomalous topological order. Following \cite{chen-gu-wen}, a microscopic
definition of anomalous topological order was proposed in \cite{kong-wen} (see
a refined version in Def.\,\ref{def:TO}). It follows immediately from this
definition that the bulk of a given gapped boundary must be unique
\cite[Sec.VII.C]{kong-wen}. But this result is highly non-trivial from a
macroscopic point of view. Macroscopically, 
up to invertible topological orders \cite{kong-wen}, the local order parameters of a
topological state of matter are given by the fusion-braiding (and spins) properties of its topological excitations in the open disk. 
A complete set of these local order parameters uniquely determines the universal class of the topological state up to invertible topological orders. Therefore, all such sets should give a classification of all topological orders up to invertible topological orders. Since higher categories can encode the fusion-braiding properties of excitations in an efficient way \cite{kong-wen}, this idea leads us to propose a classification of $n+$1D topological orders by unitary multi-fusion $n$-categories (see Def.\,\ref{def:unitary-fusion-n-cat}). For convenience, we formulate this classification proposal as a macroscopic definition of an $n+$1D topological order (see Def.\,\ref{def:cat-TO}). Whether the microscopic definition is equivalent
to the macroscopic one is a highly non-trivial and important open
problem. For this reason, we would like to refer to the uniqueness of the bulk
for a given gapped boundary as the {\it unique-bulk hypothesis}, and the unique
bulk is denoted by \bulk. One of the main goals of this work is to prove, assuming the
unique-bulk hypothesis, that the \bulk of a given gapped boundary is the center of the boundary theory in a mathematical sense. 
This result does not depend on how we define a topological order precisely. 

The main idea of our proof is to introduce the notion of a morphism
between two topological orders (see Def.\,\ref{def:morphism-2}). 
This notion is defined physically and independent
of any microscopic/macroscopic definition of a topological order, 
but can be viewed as a special physical realization of a unitary $(n+$$1)$-functor (see Def.\,\ref{def:unitary-n-functor}). 
All simple $n+$1D topological orders together with such morphisms defines the category
$\TO_{n+1}^{fun}$. Using such morphisms, we are able to show in Sec.\,\ref{sec:universal} that the \bulk
satisfies the same universal property as that of the center (of an algebra) in
mathematics. 
By assuming that the new morphism
coincides with usual notions of morphisms in mathematics, we obtain that the \bulk coincides with
the usual notion of center (see Sec.\,\ref{sec:center}). Conversely, by assuming \bulk= center,
we also show that our new notion of a morphism is
compatible with usual notions of morphisms in mathematics in a few low dimensional cases (see
Sec.\,\ref{sec:morphism=morphism}). 


\medskip 
Actually, \bulk = center is the main tool that allows us to pin down the categorical formulation of an $n$D topological order (see Def.\,\ref{def:cat-TO}). Moreover, \bulk = center is only the first layer of the hierarchical structure of a rather complete boundary-bulk relation. In Sec.\,\ref{sec:Z-functor}, we propose a stronger version of the {\it unique-bulk hypothesis} (see Fig.\,\ref{fig:unique-bulk-hypothesis}), which allows us to define the Morita/Witt equivalence of topological orders and closed/anomalous domain walls. It also provides a natural explanation of the so-called duality-defect correspondence (in 3D) as a part of the second layer of the complete boundary-bulk relation. We explain that this hierarchical structure of the boundary-bulk relation is nothing but the functoriality of the \bulk (or center). 

\begin{rema} {\rm 
A similar boundary-bulk relation was first discovered in 2D rational conformal field theories, where it was also called open-closed duality. In particular, the uniqueness of the bulk was first proved in \cite{fjfrs} (in the FRS framework) and in \cite{cardy-alg} (in a modified Segal's framework), duality-defect correspondence in \cite{ffrs-defect,dkr1} and the functoriality of the bulk or center in \cite{dkr2,dkr3}. For topological orders, the functoriality of the bulk or center was proposed earlier in 3D Levin-Wen models \cite{kong-icmp}.
}
\end{rema}

We give some remarks in Sec.\,\ref{sec:conclusion} on what our results suggest to a possible condensation theory and a theory of topological orders enriched by invertible 1-codimensional walls in higher dimensions. The main results of this work are physical and not mathematically rigorous. In Sec.\,\ref{sec:TO-fun} and Sec.\,\ref{sec:universal}, however, we would like to borrow the mathematical terminologies of lemma, proposition and theorem to highlight our physics results, which are often supported and illustrated by mathematically rigorous results in Examples. In Sec.\,\ref{sec:def-unitary-n-cat}, we give a mathematical definition of a unitary $n$-category; in Sec.\,\ref{sec:universal-higher-morphism}, we discuss the universal property of the \bulk with higher morphisms; in Sec.\,\ref{sec:weak-n-morphism}, we introduce the notion of a weak morphism between topological orders.

\begin{rema} \label{rema:f-alg}  {\rm
The fusion-braiding properties encoded in an higher category can only describe topological excitations living in an open disk, thus define only a local topological invariant. To obtain global topological invariants on closed manifolds, we need glue local topological orders via factorization homology \cite{lurie3,aft,afr}. We will do that in \cite{ai} (see also Remark\,\ref{rema:global-TO}).
}
\end{rema}


\noindent {\bf Acknowledgement}: We thank Yinghua Ai for teaching us some basics of factorization algebras, Dan Freed for pointing out issues related to framing anomalies, and Jacob Lurie for explaining to us some properties of the notion of center for $E_k$-algebras in symmetric monoidal $\infty$-categories. In particular, Remark\,\ref{rema:lurie} is due to Jacob Lurie. We also thank J\"{u}rgen Fuchs and Kevin Walker for helpful discussion and helps on references. LK is supported by the Basic Research Young Scholars Program and the Initiative Scientific Research Program of Tsinghua University, and by NSFC under Grant No. 11071134, and by the Center of Mathematical Sciences and Applications at Harvard University. X-G.W is supported by NSF Grant No.  DMR-1005541 and NSFC 11274192 and by the BMO Financial Group and the John Templeton Foundation. Research at Perimeter Institute is supported by the Government of Canada through Industry Canada and by the Province of Ontario through the Ministry of Research. HZ is supported by NSFC under Grant No. 11131008.

\section{Definitions of topological orders} \label{sec:def-TO}

In this section, we propose a microscopic definition of a topological order and a categorical classification, which is formulated as a macroscopic definition of a topological order. 

\subsection{A microscopic definition of a topological order}  \label{sec:def-TO-mic}

\begin{figure}[tb]
\begin{center}
\raisebox{-1.5em}{
\includegraphics[scale=0.6]{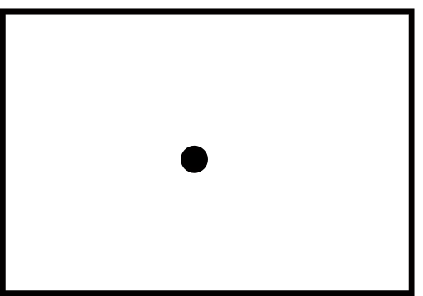}}
\quad\quad\quad $ \Longrightarrow$  \quad\quad\quad
\raisebox{-1.3em}{
\includegraphics[scale=0.5]{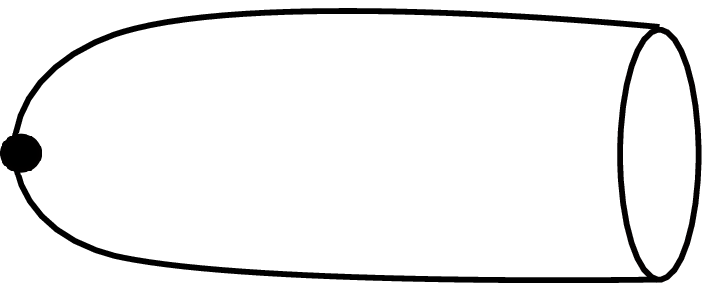}}
\end{center}
\caption{Dimensional reduction:
A defect of codimension 2 (or higher) in a local bosonic Hamiltonian lattice model, depicted as the dark dot in the picture, can be viewed as a boundary of a new lattice model obtained by wrapping the space around the defect such that the whole system looks like a cigar, which viewed from far away is equivalent to a system with a boundary of codimension 1.
}
\label{fig:dim-reduction}
\end{figure}

We only briefly discuss a physical definition of an $n+$1D topological order on an open $n$-disk from a microscopic point of view. The focus of this work is a macroscopic one. 

\medskip
A (potentially anomalous) topologically order can always be realized as a
gapped defect in a higher dimensional bosonic Hamiltonian lattice model defined
on an open disk in space with only short range interactions. Certain gapped properties
need to be satisfied when we take the large-size limit \cite{zeng-wen}. Using
the dimensional reduction depicted in Fig.\,\ref{fig:dim-reduction}, we can see
that any topological order can always be realized as a gapped
boundary of a one-dimensional higher bosonic Hamiltonian lattice model with only short
range interactions. A topological order is an equivalence class of such lattice
realizations. 

Let $L$ be an open $n$-disk on the boundary (a sphere) of a closed $n+$1-disk
$\overline{D}^{n+1}$ (see Fig.\,\ref{fig:def-TO}). In the following, we define the
notion of an $n+$1D (potentially anomalous) topological order on the space
manifold $L$ (or equivalently, on the space-time manifold
$L\times \mathbb{R}$) as an equivalence class of Hamiltonian lattice models defined on
the space manifold $\overline{D}^{n+1}$. 

\begin{defn} \label{def:TO} {\rm
Consider two bosonic Hamiltonian lattice models $H$ and $H'$ with short ranged
interactions defined on a closed $n+$1-disk $\overline{D}^{n+1}$ with an $n$-sphere
boundary depicted in Fig.\,\ref{fig:def-TO} such that both models have liquid
gapped ground states (\cite{zeng-wen}) and the boundary is gapped. We say that
these two lattice models realize the same $n+$1D topological order on the open
$n$-disk $L$ if for any neighborhood $U$ of the boundary $L$ with large enough but finite thickness (see
Fig.\,\ref{fig:def-TO}) such that 
the restriction of $H$ in $U$, denoted by
$H|_U$, can be deformed smoothly to $H'|_U$
without closing the gap. In other words, there is a smooth family $H_t$ for
$t\in [0,1]$ without closing the gap such that $H_0=H$, and $H_r$ and $H_s$
differ only in $U$ for $s,t\in [0,1]$ and $H_1|_U=H'|_U$.  }
\end{defn}

\begin{figure}[tb]
$$
\begin{picture}(100, 125)
   \put(0,0){\scalebox{1.3}{\includegraphics{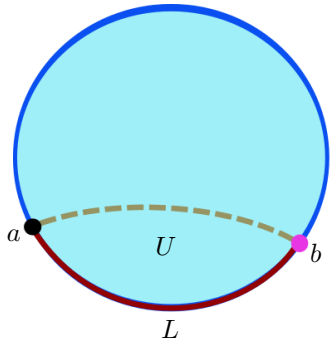}}}
   \put(0,0){
     \setlength{\unitlength}{.75pt}\put(-18,-40){
     \put(10, 77)     { $a$}
     \put(163,66)    { $b$ }
     \put(85, 70)     { $U$}
     \put(88,28)        { $L$}
     }\setlength{\unitlength}{1pt}}
  \end{picture}
$$
\caption{Consider a bosonic Hamiltonian lattice model defined on the closed $n$-disk $\overline{D}^{n+1}$ and an open $(n-1)$-disk $L$ in $\partial \overline{D}^{n+1}=S^n$. The disk $L$ is depicted as an open line segment between $a$ and $b$ on $\partial \overline{D}^{n+1}$, where $a \cup b = \partial L = S^{n-1}$. The lattice model in a neighborhood of $L$ determines a topological order on the open $n$-disk $L$ (see Def.\,\ref{def:TO}). 
}
\label{fig:def-TO}
\end{figure}


We denote $n+$1D topological orders by $\EA_{n+1}, \EB_{n+1}, \EC_{n+1}, \ED_{n+1}$, etc.. If the space-time dimension is clear from the context, we abbreviate $\EC_{n+1}$ as $\EC$.

\begin{defn} \label{def:closed-TO-1}{\rm
If an $n+$1D topological order (on an open $n$-disk $L$) $\EC_{n+1}$ can be realized by a bosonic Hamiltonian lattice model defined on $S^n$ (Fig.\,\ref{fig:def-TO} with an empty bulk), it is called {\it closed} (or {\it anomaly-free}). If otherwise, it is called {\it anomalous}.  } 
\end{defn}

The topological orders defined in \cite{chen-gu-wen,zeng-wen} are closed in our sense.  
In physics, topological orders are often defined on closed manifolds. One can easily generalize Def.\,\ref{def:TO} to topological orders on closed manifolds. For example, if we replace $L$ by the entire boundary $S^n=\partial \overline{D}^{n+1}$ in Def.\,\ref{def:TO} (see Fig.\,\ref{fig:def-TO}), we obtain a microscopic definition of a topological order on  $S^n$. We don't need them in this work. To study the category of topological orders, one also need generalize Def.\,\ref{def:TO} to a microscopic definition of multiple topological orders connected by gapped domain walls and walls between walls, etc. This goes beyond the scope of this paper. We hope to come back to this issue in the future. 

\begin{rema} {\rm
The notion of a closed local topological order seems relate to a different notion, which is also called a ``local topological order" and was introduced  in \cite{MZ}. It is interesting to know their relation. 
}
\end{rema}



\subsection{Unitary multi-fusion $n$-categories}

The definitions of a unitary $n$-category and a unitary $n$-functor for $n\geq 0$ are given in Sec.\,\ref{sec:def-unitary-n-cat} (see Def.\,\ref{def:unitary-n-cat} and Def.\,\ref{def:unitary-n-functor}).

\begin{defn} \label{def:unitary-fusion-n-cat} {\rm
A {\it unitary fusion $n$-category} for $n\geq 0$ is a unitary $(n+$$1)$-category with a unique simple object $\ast$. We also identify it with the unitary $n$-category $\hom(\ast,\ast)$.  We define a {\it unitary multi-fusion $n$-category} to be the $n$-category $\hom(x,x)$ for an (not necessarily simple) object $x$ in a unitary $(n+$$1)$-category. 
}
\end{defn}

Note that a unitary multi-fusion $n$-category $\EC$ can be viewed either as an $(n+$$1)$-category or an $n$-category. But we always denote it by $\EC_{n+1}$ because $n+$1 is the spacetime dimension of the associated topological order (see Def.\,\ref{def:cat-TO}), or simply by $\EC$ if the subscript is clear from the context. For a unitary multi-fusion $n$-category $\EC_{n+1}$, $\id_x$ (or $\id_\ast$ in the fusion case) is also called the {\it tensor unit}, denoted by $1_\EC$. A unitary multi-fusion $n$-category $\EC_{n+1}$ is fusion iff $1_\EC$ is simple.

\begin{rema} \label{rema:def-TO} {\rm 
Although a unitary fusion $n$-category can be viewed as an $n$-category or as an $(n+$$1)$-category (with a unique simple object), there are subtle differences between these two views when we discuss the notion of center (see Def.\,\ref{def:Z-center} and Remark\,\ref{rema:lurie}). 
} \end{rema}

\begin{defn} {\rm
For $n\geq 0$, a unitary multi-fusion $n$-category is called {\it indecomposable} if it is not a direct sum of two such categories. 
}\end{defn}

Note that a unitary fusion $n$-category is automatically an indecomposable unitary multi-fusion $n$-category. 

\begin{expl} {\rm
We give a few examples of unitary multi-fusion $n$-categories. 
\bnu
\item A unitary multi-fusion $0$-category $\EC_1$ is the $\hom(x,x)$ for an object $x$ in a unitary 1-category. It is nothing but a finite dimensional semisimple $C^\ast$-algebra, i.e. a finite direct sum of matrix algebras. It is indecomposable if $\EC_1$ is a simple algebra, i.e. a matrix algebra. It is fusion if $\EC_1=\Cb$. 
\item A unitary multi-fusion $1$-category $\EC_2$ is a unitary abelian semisimple rigid monoidal 1-category with finitely many simple objects \cite{eno2002}.  It is a unitary fusion 1-category if the tensor unit $1_\EC$ is simple, i.e. $\hom_\EC(1_\EC,1_\EC)\simeq \Cb$. 
\item The trivial unitary $(n+$$1)$-category $\one_{n+1}$ consists of a unique simple object $\ast$, a unique simple $k$-morphism $\id_\ast^k$ for $1\leq k\leq n$, where $\id_\ast^k$ is defined inductively by $\id_\ast^1=\id_\ast$ and $\id_\ast^k:=\id_{\id_\ast^{k-1}}$. The category $\one_{n+1}$ is the simplest unitary fusion $n$-category. 

\enu
}
\end{expl}

\begin{defn} {\rm
A {\it unitary braided fusion $n$-category} for $n\geq 0$ is a unitary $(n+$$2)$-category with a unique simple object $\ast$ and unique simple 1-morphism $\id_\ast$. 
We also identify it with the unitary $n$-category $\hom(\id_\ast, \id_\ast)$.
}
\end{defn}

Note that a unitary braided fusion $n$-category $\EC$ can be viewed either as an $(n+$$2)$-category or an $(n+$$1)$-category or an $n$-category. But we always denoted 
it as $\EC_{n+2}$ because $n+$2 is the spacetime dimension of the associated topological order (see Def.\,\ref{def:cat-TO}). In this case, the tensor unit of $\EC_{n+2}$ is $\id_\ast^2$, and is also denoted by $1_\EC$. 

\begin{expl} {\rm
We give a few examples of unitary braided fusion $n$-categories. 
\bnu
\item A unitary braided fusion $0$-category $\EC_2$ can be viewed as a unitary 2-category with a unique simple object $\ast$ and a unique simple $1$-morphism $\id_\ast$, or equivalently,  the 1-dimensional $C^\ast$-algebra $\hom_{\EC_2}(\id_\ast,\id_\ast)=\Cb$. 
\item A unitary braided fusion $1$-category is unitary abelian semisimple rigid braided monoidal 1-category with finitely many simple objects (see for example \cite{dgno}). 
\item $\one_{n+2}$ is the simplest example of a unitary braided fusion $n$-category. 
\enu
}
\end{expl}

\begin{defn} \label{def:Z-center} {\rm
The $Z_{n+1}$-center of a unitary multi-fusion $n$-category $\EC_{n+1}$, denoted by $Z_{n+1}(\EC_{n+1})$, is defined by the category $\fun_{\EC|\EC}(\EC,\EC)$ of unitary $\EC$-$\EC$-bimodule $n$-functors, where $\EC_{n+1}$ is viewed as a monoidal $n$-category instead of an $(n+$$1)$-category. 
}
\end{defn}

\begin{conj} \label{conj:f-Z-bf}
For an indecomposable unitary multi-fusion $n$-category $\EC_{n+1}$ ($n\geq 0$), its $Z_{n+1}$-center $Z_{n+1}(\EC_{n+1})$ is a unitary braided fusion $n$-category, and $Z_{n+1}(\one_{n+1})\simeq\one_{n+2}$.
\end{conj}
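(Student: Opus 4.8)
The plan is to prove Conjecture \ref{conj:f-Z-bf} in two parts: first that $Z_{n+1}(\EC_{n+1})$ is a unitary braided fusion $n$-category for indecomposable $\EC_{n+1}$, and second the normalization $Z_{n+1}(\one_{n+1})\simeq\one_{n+2}$. For the braided structure, I would start from Def.\,\ref{def:Z-center}, viewing $\EC_{n+1}$ as a monoidal $n$-category and taking $Z_{n+1}(\EC_{n+1})=\fun_{\EC|\EC}(\EC,\EC)$. The key structural input is that the category of $\EC$-$\EC$-bimodule endofunctors of $\EC$ carries a canonical $E_2$ (braided) structure: this is the standard Drinfeld-center phenomenon, and in the higher-categorical setting it is an instance of the Deligne/$E_k$-center of an $E_k$-algebra acquiring an $E_{k+1}$-structure. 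Concretely, composition of bimodule functors gives one monoidal structure, and the half-braidings coming from the bimodule constraints (the natural isomorphisms $F(c\otimes-)\cong c\otimes F(-)$ coherently interchanged with $F(-\otimes c)\cong F(-)\otimes c$) supply the braiding; one verifies the hexagon axioms from the coherence data of the bimodule structure. The unitarity is inherited because all functors and natural transformations in sight are required to be unitary by the definitions in Sec.\,\ref{sec:def-unitary-n-cat}.

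The more delicate point is \emph{fusion}, i.e. that the tensor unit $\id_{\id_\ast}$ of $Z_{n+1}(\EC_{n+1})$ is simple, equivalently that $\hom(1,1)\simeq\one_{n}$ (a single simple object with only identity higher morphisms, so scalar at the top). The tensor unit is the identity bimodule functor $\id_\EC$ equipped with its tautological half-braiding, and an endomorphism of this unit in the center is precisely a $\EC$-$\EC$-bimodule natural endotransformation of $\id_\EC$ compatible with the braiding data, which at the lowest level is the datum of a family $\eta_c\in\hom(c,c)$ natural and monoidal in $c$, i.e. an element of the \emph{Müger/categorical center} of the unit. Here indecomposability of $\EC_{n+1}$ must be used: it forces $\hom_\EC(1_\EC,1_\EC)$ to have no nontrivial central idempotents, so the only bimodule-central endomorphisms of $\id_\EC$ are scalars, and inductively the higher endomorphism spaces collapse to $\one$. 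This is the analogue of the classical fact that the Drinfeld center of an indecomposable multifusion $1$-category is a (connected) braided fusion category, and I would phrase it as such, citing the $1$-categorical case (e.g. \cite{eno2002}) as the base of an induction on $n$.

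For the normalization $Z_{n+1}(\one_{n+1})\simeq\one_{n+2}$, I would compute directly: when $\EC_{n+1}=\one_{n+1}$ has a single simple object $\ast$ and only identity higher morphisms, a $\EC$-$\EC$-bimodule $n$-functor $\one_{n+1}\to\one_{n+1}$ is determined up to unique isomorphism by where it sends $\ast$, and the bimodule constraints are automatically satisfied because every relevant morphism is an identity; hence $\fun_{\one|\one}(\one_{n+1},\one_{n+1})\simeq\one_{n+1}$ as an $n$-category, and reassembling it with its (trivial) braiding as an $(n+2)$-category with a unique simple object and a unique simple $1$-morphism gives exactly $\one_{n+2}$. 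The low-dimensional checks ($n=0$: $Z_1(\Cb)=\Cb$; $n=1$: $Z_2(\vect)\simeq\vect$, recovering the usual Drinfeld center of the trivial fusion category) serve as sanity checks and should be recorded as Examples.

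The main obstacle is the simplicity-of-the-unit (fusion) claim, because it is exactly the step where the purely formal $E_k\Rightarrow E_{k+1}$ center construction does \emph{not} automatically yield connectedness: one genuinely needs the hypothesis that $\EC_{n+1}$ be indecomposable, and one needs a workable higher-categorical notion of ``simple object'' and of the decomposition of a multifusion $n$-category into indecomposable blocks. In the absence of a fully developed theory of unitary $n$-categories this is where rigor breaks down, which is why the statement is posed as a conjecture rather than a theorem; I would therefore prove it completely only for $n=0,1$ (where it reduces to known results about $C^\ast$-algebras and fusion $1$-categories) and argue the general case heuristically by induction, flagging the connectedness/indecomposability step as the crux.
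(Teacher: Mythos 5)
There is nothing in the paper to compare your proposal against: the statement is posed explicitly as a conjecture, and the paper offers no proof. All it records is that the cases $n=0,1$ are known (the $Z_1$-center being the ordinary center of an algebra over $\Cb$, the $Z_2$-center being the Drinfeld center) and that the braided ($E_2$) structure on the $E_1$-center is automatic by Lurie's general theory (Remark\,\ref{rema:E_12-alg}). Your proposal matches this stance exactly: the formal $E_1\Rightarrow E_2$ argument for the braiding, the direct computation $\fun_{\one|\one}(\one_{n+1},\one_{n+1})\simeq\one_{n+1}$ for the normalization (consistent with the paper's examples $Z_1(\Cb)=\Cb$ and $Z_2(\hilb)\simeq\hilb$), and the honest identification of the fusion/connectedness step as the genuine open point requiring a developed theory of unitary $n$-categories. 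Restricting a rigorous proof to $n=0,1$ and flagging the rest as heuristic is precisely the epistemic position the paper itself takes.

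One local inaccuracy in your crux paragraph is worth correcting. You write that indecomposability ``forces $\hom_\EC(1_\EC,1_\EC)$ to have no nontrivial central idempotents.'' For a genuinely multi-fusion $\EC_{n+1}$ the tensor unit decomposes, $1_\EC=\oplus_i 1_i$, so $\hom_\EC(1_\EC,1_\EC)\cong\Cb^k$ with $k>1$ and plenty of idempotents; indecomposability does not change this. What indecomposability actually buys is that a bimodule-natural endotransformation $\eta$ of $\id_\EC$, which a priori assigns an independent scalar to each block, must assign a single common scalar: naturality with respect to the nonzero morphism spaces connecting the blocks (which exist precisely because $\EC$ is indecomposable) identifies the scalars. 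This is the mechanism behind the classical fact that the Drinfeld center of an indecomposable multi-fusion $1$-category has simple unit, and it is the statement you should induct on, rather than a claim about idempotents in $\hom_\EC(1_\EC,1_\EC)$ itself.
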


This conjecture is known to be true for $n=0,1$. When $n=0$, the $Z_1$-center is the usual notion of the center of an algebra over $\Cb$; when $n=1$, the $Z_2$-center is the Drinfeld center. 

\begin{defn} {\rm
For $n\geq 0$, a unitary multi-fusion $n$-category $\EC_{n+1}$ is called {\it closed} if its $Z_{n+1}$-center is trivial, i.e. $Z_{n+1}(\EC_{n+1})\simeq \one_{n+2}$.
}
\end{defn}

\begin{expl}  {\rm
A closed unitary multi-fusion $n$-category $\EC_{n+1}$ is indecomposable. 
\bnu
\item When $n=0$, $\EC_1$ is closed iff it is indecomposable. 
\item When $n=1$, $\EC_2$ is closed iff it is monoidally equivalent to the category $\fun(\EM, \EM)$ of unitary 1-functors for a unitary $1$-category $\EM$; 
\enu
}
\end{expl}

\begin{defn} \label{def:centralizer} {\rm
For $n\geq 0$, the {\it centralizer} of a unitary braided fusion $n$-category $\EC_{n+2}$, denoted by $\EC_{n+2}'$, is defined to be the subcategory of $\EC_{n+2}$ containing morphisms that have trivial double braidings with all morphisms.  
} 
\end{defn}

\begin{rema} {\rm
We pretend that the meaning of a double braiding is clear in Def.\,\ref{def:centralizer}. Actually, the centralizer $\EC_{n+2}'$ of $\EC_{n+2}$ can be equivalently defined by the universal property (see Example\,\ref{expl:E2-center}). By Lurie's results  (see Remark\,\ref{rema:lurie}), it can also be equivalently defined by the $n$-category $\hom_{Z_{n+2}(\EC_{n+2})}(1_{Z_{n+2}(\EC_{n+2})}, 1_{Z_{n+2}(\EC_{n+2})})$, where $1_{Z_{n+2}(\EC_{n+2})}$ is the identity bimodule functor $\id_\EC$ in $\fun_{\EC|\EC}(\EC,\EC)$. It always contains $\one_{n+2}$ as a unitary subcategory. 
}
\end{rema}

\begin{rema} \label{rema:E_12-alg} {\rm 
A theory of higher algebras was developed by Jacob Lurie \cite{lurie2}. A unitary multi-fusion $n$-category can be viewed as an $E_1$-algebra object (satisfying additional unitary conditions) in the symmetric monoidal $\infty$-category $\text{Cat}_{(\infty,n)}$ of $(\infty,n)$-categories with some additional structures. The $Z_n$-center defined in Def.\,\ref{def:Z-center} is an $E_1$-center of an $E_1$-algebra, and is automatically an $E_2$-algebra \cite{lurie2}. A unitary braided fusion $n$-category can be viewed as an $E_2$-algebra object in $\text{Cat}_{(\infty,n)}$. The centralizer of a unitary braided fusion $n$-category is an $E_2$-center of an $E_2$-algebra, and is automatically an $E_3$-algebra. 
}
\end{rema}

\begin{defn} {\rm
For $n\geq 0$, a unitary braided fusion $n$-category $\EC_{n+2}$ is called {\it non-degenerate} if its centralizer is trivial, i.e. $\EC_{n+2}'\simeq\one_{n+2}$. 
}
\end{defn}

\begin{conj} \label{conj:closed-f=bf}
For $n\geq 0$, a unitary fusion $(n+$$1)$-category $\EC_{n+2}$ is closed if and only if it is a non-degenerate unitary braided fusion $n$-category. 
\end{conj}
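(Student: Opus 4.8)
The plan is to prove both implications through Lurie's higher-algebraic description of centers and centralizers recalled in Remark \ref{rema:E_12-alg}, using the one structural identity that does the real work: when $\EC_{n+2}$ is a braided fusion $n$-category, the remark following Definition \ref{def:centralizer} records (after Lurie) that its centralizer is the endomorphism category of the tensor unit of its $E_1$-center,
$$\EC_{n+2}' \simeq \hom_{Z_{n+2}(\EC_{n+2})}(1_{Z_{n+2}(\EC_{n+2})}, 1_{Z_{n+2}(\EC_{n+2})}).$$
Throughout I would treat $\EC_{n+2}$ as an $E_1$-algebra (a fusion $(n+1)$-category) whose $E_1$-center $Z_{n+2}(\EC_{n+2})$ is, by Conjecture \ref{conj:f-Z-bf}, a braided fusion $(n+1)$-category; closedness of $\EC_{n+2}$ is by definition the triviality $Z_{n+2}(\EC_{n+2}) \simeq \one_{n+3}$ of this braided fusion $(n+1)$-category.

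For the bookkeeping I would first isolate a reduction lemma: a braided fusion $(n+1)$-category $\EZ$ is trivial, $\EZ \simeq \one_{n+3}$, if and only if its underlying monoidal $(n+1)$-category $\hom_\EZ(\id_\ast, \id_\ast)$ has a unique simple object and the endomorphism $n$-category $\hom_\EZ(1_\EZ, 1_\EZ)$ of its tensor unit is trivial. The ``only if'' is immediate; the ``if'' holds because the braided fusion structure already forces a unique simple object and a unique simple $1$-morphism at the bottom two levels, so that triviality is controlled entirely by the top $(n+1)$-categorical datum. Granting this, the centralizer identity yields the implication $(\Leftarrow)$ up to one point: if $\EC_{n+2}$ is a non-degenerate braided fusion $n$-category, then $\hom_{Z_{n+2}(\EC_{n+2})}(1,1) \simeq \EC_{n+2}' \simeq \one$ is trivial, and it remains only to check that $Z_{n+2}(\EC_{n+2})$ has no simple object beyond its unit, i.e. that it is ``connected.'' This connectedness is exactly the categorical shadow of remote detectability: in an anomaly-free order every nontrivial bulk excitation must be detected by a braiding, so a center with trivial unit-endomorphisms can carry no nontrivial simple object. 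I would argue this first in the two verified cases and then in general from remote detectability.

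For $(\Rightarrow)$ I would proceed in two steps. First, assuming $\EC_{n+2}$ closed, I must upgrade it from a fusion $(n+1)$-category to a braided fusion $n$-category, i.e. show it has a unique simple $1$-morphism; physically a nontrivial simple $1$-morphism is a transparent codimension-$2$ excitation that would force a nontrivial contribution to $Z_{n+2}(\EC_{n+2})$, contradicting $Z_{n+2}(\EC_{n+2}) \simeq \one_{n+3}$, and I would make this precise by tracking how simple $1$-morphisms of $\EC_{n+2}$ produce simple objects of the center. Second, once $\EC_{n+2}$ is known to be a braided fusion $n$-category, the centralizer identity reads $\EC_{n+2}' \simeq \hom_{\one_{n+3}}(1,1) \simeq \one$, which is precisely non-degeneracy. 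The two low-dimensional anchors are $n=0$, where $Z_1$ is the ordinary center of a finite-dimensional semisimple $C^\ast$-algebra and the statement collapses to $\EC_1 \simeq \Cb$, and $n=1$, where it becomes the foundational statement that a braided fusion $1$-category $\EC_3$ is non-degenerate---equivalently (M\"uger) its centralizer $\EC_3'$ is trivial, equivalently its Drinfeld center factorizes as $\EC_3 \boxtimes \overline{\EC_3}$ with reversed braiding---if and only if the fusion $2$-category it deloops to has trivial center $Z_3(\EC_3) \simeq \one_4$, i.e. realizes an anomaly-free $2+1$D order (a modular tensor category).

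The main obstacle is the connectedness input needed in both directions---the ``unique simple object'' clause of the reduction lemma in $(\Leftarrow)$ and the unique-simple-$1$-morphism claim in $(\Rightarrow)$. Each asks one to control the simple objects of a center purely from the triviality of the endomorphism category of its unit, and this is genuinely where remote detectability must be converted from a physical principle into a categorical statement; the centralizer identity handles everything else essentially formally. Compounding this, the whole argument is contingent on Conjecture \ref{conj:f-Z-bf} and on Lurie's center/centralizer relation holding in the unitary higher-categorical setting, both of which are established only for $n \leq 1$, so for $n \geq 2$ the proof is necessarily physical, with the two low-dimensional cases serving as the rigorous evidence.
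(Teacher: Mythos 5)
Your proposal matches the paper's own treatment of this statement, which the paper presents explicitly as a conjecture admitting no complete mathematical proof beyond the trivial case $n=0$: like you, the paper obtains the ``only if''/non-degeneracy part from Lurie's centralizer identity $\EC_{n+2}' \simeq \hom_{Z_{n+2}(\EC_{n+2})}(1_{Z_{n+2}(\EC_{n+2})},1_{Z_{n+2}(\EC_{n+2})})$ of Remark~\ref{rema:lurie} (which also warns, as you correctly note, that $Z(\hom_\EC(1_\EC,1_\EC))$ does not recover $Z(\EC)$, so connectedness is genuinely extra data), and it relegates exactly the connectedness statements you isolate to the physics of remote detectability, packaged in Remark~\ref{rema:hCat} via the homotopy categories $\mathrm{h}_k\EC$: a nontrivial $1$-morphism cannot be braided with anything, hence cannot occur in a closed order. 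One minor correction that does not affect the substance: the simple $1$-morphisms to be ruled out are $1$-codimensional domain walls (the paper's convention is that an $l$-morphism is an $l$-codimensional excitation), not ``transparent codimension-$2$ excitations,'' and the paper's argument is that such walls are undetectable because they admit no braiding at all, rather than that they inject nontrivial simple objects into the center.
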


The proof of the ``only if" part follows from Remark\,\ref{rema:lurie}. The complete mathematical proofs of Conjecture\,\ref{conj:closed-f=bf} are not known to us except the trivial case $n=0$. But we provide a physical explanation of this conjecture in Remark\,\ref{rema:hCat}. 

\begin{defn} {\rm
For $n\geq 0$, a unitary braided fusion $n$-category $\EC_{n+2}$ is called {\it exact} if there is an indecomposable multi-fusion $n$-category $\ED_{n+1}$ such that $\EC_{n+2}\simeq Z_{n+1}(\ED_{n+1})$. 
}
\end{defn}

For $n=0$, the only unitary braided fusion $0$-category $\EC_2$ is $\Cb$, and is exact. For $n=1$, the exact unitary braided fusion 1-categories are called ``non-chiral" topological orders, and the quotient $\{ closed \}/\{ exact \}$ 
is nothing but the Witt group \cite{dmno}.

\subsection{A macroscopic definition of a topological order} \label{sec:def-TO-mac}

Macroscopically, up to invertible topological orders \cite{kong-wen}, the local order parameters of a topological state of matter are given by the fusion-braiding (and spins) structures of its topological excitations. Note that these fusion-braiding properties only make sense locally (on an open disk), thus give topological invariants on an open disk. 
For example, the ``double braiding" of two particles on a 2-sphere is physically ambiguous. It was well-known that these local order parameters (without the spin structures see Remark\,\ref{rema:spins}) can be encoded efficiently in the data and axioms of an higher category \cite{baez} (see also \cite{kong-wen}). In this subsection, we propose a classification of $n+$1D topological orders\footnote{A topological order here was called a $\BF$-category in \cite{kong-wen}. It determines a topological order up to invertible topological orders \cite{kong-wen,freed}, which are topological orders without topological excitations.} by unitary (multi-)fusion $n$-categories. For convenience, we formulate this classification as the macroscopic definition of a topological order. 

\medskip
The mathematical definition of a unitary fusion $n$-category is based on that of a unitary $(n+$$1)$-category, which is defined in Appendix (see Def.\,\ref{def:unitary-n-cat}). Here, we only remind readers of some basic ingredients (and their physical meanings) of a unitary $(n+$$1)$-category without being very precise. In a unitary $(n+$$1)$-category, the physical meaning of a 1-morphism is a disk-like (see Remark\,\ref{rema:disk-stratification}) 1-codimensional excitation (also called a domain wall); an $l$-morphism, denoted by $g^{[l]}$ (or $g$ if the superscript ${}^{[l]}$ is clear from the context), is an $l$-codimensional excitation; an $(n+$$1)$-morphism is an instanton localized on a point on the time axis. The spaces of instantons are all finite dimensional. An $(n-1)$-morphism $h$ is simple if $\hom(h,h)\simeq\Cb$ and an $l$-morphism $g$ is simple if $\id_g$ is simple (defined inductively). For $l\geq 0$, each $l$-morphism is a direct sum of simple $l$-morphisms, and there are only finitely many simple $l$-morphisms. For $k<l$, an $l$-morphism $g^{[l]}$ in $\hom(h^{[k]}, h^{[k]})$ is an excitation nested on the $k$-codimensional excitation $h^{[k]}$. The composition of two $l$-morphisms corresponds to the fusion of two excitations. 
Each $l$-morphism $g$ has a two-side dual $\bar{g}$, which should be viewed as the anti-excitation of $g$. For any two non-isomorphic simple $k$-morphisms $f$ and $g$, $\hom(f,g)=0$. The physical meaning of this condition was explained in \cite{kong-wen}. The category $\one_{n+1}$ is the smallest unitary $(n+$$1)$-category consisting of a unique simple object $\ast$ and a unique simple $k$-morphism $\id_\ast^k$ for $k=1,2,\dots, n$.

\begin{rema} \label{rema:disk-stratification}  {\rm
Microscopically, a $p$-space-dimensional excitation can be defined in a lattice model with additional Hamiltonian term $\Delta H$, which is non-zero only in a neighborhood of a submanifold $M^p$ \cite{kong-wen}. But in an $n+$1D topological order, a $p$-space-dimensional excitation only makes sense on an open $p$-disk in general \cite{kong-wen} because a submanifold is not a local concept.  Therefore, by a $p$-dimensional excitation in a topological order, we always mean a $p$-disk-like excitation. If a $p$-dimensional excitation defined on a closed submanifold $M^p$ happens to be small, i.e. lying in an open $n$-disk, we should choose a cellular decomposition of $M^p$ and view it as a disjoint union of $k$-disk-like excitations for $0\leq k\leq p$. For example, a string-like excitation on $M^1=S^1=\{ e^{i\theta} | \theta \in [0,2\pi)\}$ can be viewed as the disjoint union of two 1-disk-like excitations on $(0,\pi)$ and $(\pi,2\pi)$ and two 0-disk-like excitations at $\theta=0$ and $\theta=\pi$. 
An $n+$1D topological order should produce topological invariants for disk-stratified open $n$-disks \cite{afr}. 
}
\end{rema}

To determine the categorical description of a potentially-anomalous $n+$1D topological order $\EC_{n+1}$, we first determine that of a closed $n+$2D topological order $\EB_{n+2}$, then that of $\EC_{n+1}$ as a boundary of $\EB_{n+1}$. If $\EC_{n+1}$ is also realized as a $k$-codimensional defect of an $n+k$D closed topological order $\EA_{n+k}$ for $k>1$, it automatically inherits a categorical description from that of $\EA_{n+k}$ (see Remark\,\ref{rema:codim}). These two descriptions of $\EC_n$ can be different. Therefore, we expect that different categorical descriptions of a topological order according to different codimensions (see Remark\,\ref{rema:codim}). The description we give in Def.\,\ref{def:cat-TO} is the 1-codimensional description. 

We also distinguish two types of topological orders: simple and
composite. Most of the topological orders studied in physics
are simple topological orders. Composite topological orders are direct sums of
simple topological orders. They naturally occur as the result of a dimensional
reduction or a fine tuning of a system near multi-phase transition points in
the phase diagram \cite{kong-wen,zeng-wen}. A direct sum of two simple
topological orders $\EA_n$ and $\EB_n$ means that the system has accidental
degenerate ground states, and is in either the $\EA_n$-state or the $\EB_n$-state. 
Additional perturbations such as applying external fields often push the system
to select one ground state from the two.

In a closed (or anomaly-free) $n+$2D topological order, excitations should be able to detect each other via braidings \cite{kong-wen}. Since 1-codimensional excitations can not be braided with any excitations, it implies that the only simple 1-codimensional excitation is the trivial one. Therefore, a closed $n+$2D topological order should be a unitary braided fusion $n$-category (an $E_2$-algebra), denoted by $\EC_{n+2}$, if we assume an $(n+$$2)$-stability condition discussed in the next paragraph. Moreover, its braidings should be able to detect all excitations. This condition is equivalent to the condition that $\EC_{n+2}$ has a trivial centralizer. Therefore, we obtain that a simple closed $n+$2D topological order should be given by a non-degenerate unitary braided fusion $n$-category. The boundary-bulk relation (studied in later sections) suggests that the boundary theory of the bulk phase $\EC_{n+2}$ should be given by an $E_1$-algebra such that its $E_1$-center (recall Remark\,\ref{rema:E_12-alg}) is $\EC_{n+2}$. Then Conjecture\,\ref{conj:f-Z-bf} suggests that a simple (potentially-anomalous) $n+$1D topological order, as a boundary of a simple closed $n+$2D topological order, can be described by an indecomposable unitary multi-fusion $n$-category (an $E_1$-algebra).


An $(n+$$1)$D topological state $\EC_{n+1}$ is $(n+$$1)$-stable if the vacuum degeneracy on $S^n$ (as the space manifold) is trivial \cite{kong-wen}. Mathematically, it amounts to the stability condition that the tensor unit of $\EC_{n+1}$ is simple, or equivalently, $\EC_{n+1}$ is a unitary fusion $n$-category. If the vacuum degeneracy on $S^n$ is non-trivial, the topological state is $(n+$$1)$-unstable, and it can flow to the $(n+$$1)$-stable ones under the local perturbations of the Hamiltonians. The reason for $(n+$$1)$-unstable topological states to have natural categorical descriptions is that unitary multi-fusion $n$-categories naturally appear as stable higher dimensional phases. They naturally occur in the processes of dimensional reduction. 
We give a few examples of 3-unstable topological states in Example\,\ref{expl:n-category}, \ref{expl:toric-code}, \ref{expl:lw-mod}.

\medskip
We propose the following classification of topological orders (up to invertible topological orders \cite{kong-wen,freed}). For convenience, we formulate this classification proposal as the macroscopic definition of a topological order. 
\begin{defn} \label{def:cat-TO} {\rm  
For $n\geq 0$, 
a (potentially anomalous) $n+$1D topological order is defined by a unitary multi-fusion $n$-category $\EC_{n+1}$. 
\bnu
\item The trivial $n+$1D topological order is given by $\one_{n+1}$. 
\item It is {\it simple} if $\EC_{n+1}$ is indecomposable; it is {\it composite} if otherwise. 
\item It is {\it $(n+$$1)$-stable} if $\EC_{n+1}$ is a unitary fusion $n$-category; it is {\it $(n+$$1)$-unstable} if otherwise. 
\item It is {\it closed} (or {\it anomaly-free}) if $\EC_{n+1}$ is a closed, i.e. $Z_{n+1}(\EC_{n+1})=\one_{n+2}$; it is {\it anomalous} if otherwise. 
\item For $n\geq 1$, it is $(n+$$1)$-stable and closed if and only if $\EC_{n+1}$ is a non-degenerate unitary braided fusion $(n-$$1)$-category. 
\enu
}
\end{defn}

\begin{rema} \label{rema:micro-macro} {\rm 
It is not known if the microscopic and macroscopic definitions of a topological order are equivalent (up to invertible topological orders \cite{kong-wen,freed}). To find equivalent microscopic and macroscopic definitions is a fundamental open problem in the study of this subject. Superficially speaking, Def.\,\ref{def:closed-TO-1} is compatible with Def.\,\ref{def:cat-TO} in the sense that the physical notion of the \bulk is equivalent to the mathematical notion of the center as we will show later. 

} \end{rema}

\begin{expl} \label{expl:n-category} {\rm
We give some examples of topological orders in low dimensions. 
\bnu



\item A 0+1D topological order $\EC_1$ is given by a unitary multi-fusion $0$-category, which is nothing but $\hom(x,x)$ for an object in a unitary $1$-category. Note that $\hom(x,x)$ is a finite dimensional $C^\ast$-aglebra, which is a direct sum of matrix algebras. If $\EC_1$ is simple, it is just a matrix algebra. It is $1$-stable if and only if $\EC_1$ is fusion, i.e. $\EC_1\simeq \Cb$. We also denote the unique $1$-stable 1D topological order by $\one_1$, i.e. $\one_1=\Cb$. As a $1$-category, $\one_1=\hilb$, where $\hilb$ is the category of finite dimensional Hilbert spaces. 

Note that if $\EC_1$ is simple, we $\EC_1$ is a matrix algebra, i.e. $\EC_1=M_{k\times k}(\Cb)$. This matrix algebra can be viewed as the local operator algebra of a quantum system $\Cb^k$ defined on a point ($D^0$ as the space manifold) with Hamiltonian given by the identity matrix. If $k>1$, this system is degenerate. But this degeneracy is not $1$-stable. By perturbing the Hamiltonian, one can easily lift this degeneracy.

\item Although we have not included the categorical definition of a $0$D topological order in Def.\,\ref{def:cat-TO}, we can add it by hands. Since a 1-stable closed 1D topological order is just 1-dimensional algebra $\Cb$ (an $E_1$-algebra), a 0D topological order, viewed as a wall between two 1D topological orders, is just a finite dimensional Hilbert space $V$ together with a distinguished element $v\in V$, i.e. an $E_0$-algebra $(V,v)$.

\item A simple $2$D topological order can be physically realized on an open 1-disk in the boundary of a local Hamiltonian model on the 2-disk (\cite{bravyi-kitaev,kitaev-kong}). It has particle-like topological excitations, which can be fused. Therefore, it can be described by a unitary multi-fusion 1-category $\EC_2$. When the tensor unit $1_\EC$ is simple, i.e. $\hom(1_\EC, 1_\EC)=\Cb$, $\EC_2$ is a unitary fusion 1-category. In this case, the space of ground states of the local Hamiltonian model on $S^1$ is given by $\hom(1_\EC, 1_\EC)$ (\cite{kong-wen,ai}), and is not degenerate. So it is 2-stable. All unitary fusion 1-categories can be realized as the boundary theories of Levin-Wen models \cite{kitaev-kong}. We give some examples of 2-unstable phases, which are stable as higher dimensional phases, in Example\,\ref{expl:toric-code} and Example\,\ref{expl:lw-mod}.

The trivial 2D topological order $\one_2$, which is also closed. If $\EC_2$ contains a non-trivial particle-like excitation, it cannot be detected by other excitations because there is no braiding. It needs a 2+1D bulk to detect it. Therefore, $\EC_2$ describes an anomalous topological order. Therefore, $\one_2$ is the only closed 2-stable 2D topological order. 


\item A simple $3$-stable $3$D topological order can be described by a unitary fusion 2-category, or equivalently, a unitary 3-category $\EC_3$ with a unique simple object $\ast$. 1-morphisms are string-like excitations, 2-morphisms are particle-like excitations and 3-morphisms are instantons. When $\EC_3$ is closed, there is no simple 1-morphisms (string-like excitations) other than the trivial one $\id_\ast$ because they can not be braided with other excitations\footnote{In this case, string-like excitations are still possible as constructed in \cite{kitaev-kong}, but they can all be obtained from condensations of point-like excitations \cite{kong-anyon}. Therefore, such string-like excitations must be excluded from a minimal categorical description of a topological order \cite{kong-wen}. If there is a string-like excitation that is not condensed, it is not detectable via braiding because it can not be braided with any other excitations. Therefore, such a topological order must be anomalous.}. As a result, $\EC_3$ is a non-degenerate unitary braided fusion 1-category. In this case, the $Z_3$-centers of $\EC_3$ and the centralizer $\EC_3'$ should coincide and are both trivial (recall Conjecture\,\ref{conj:closed-f=bf}). If the topological order $\EC_3$ is anomalous, we can not reduce $\EC_3$ to $\hom(\id_\ast, \id_\ast)$ even if $\id_\ast$ is the unique simple 1-morphism because the notion of $Z_3$-center and that of centralizer are different (see Remark\,\ref{rema:lurie}). Its $Z_3$-center contains string-like excitations in general. For example, in the $\Zb_n$ gauge theory in 3+1D (see for example \cite{wang-wen-2}), all string-like excitations are mutually symmetric but have non-trivial braidings with particle-like excitations. When all string-like excitations are condensed, it creates an anomalous 3D topological order $\EC_3$ on the boundary with only particle-like excitations.

\enu
}
\end{expl}

\begin{rema} \label{rema:codim} {\rm
Note that we find Def.\,\ref{def:cat-TO} by first fixing the categorical description of a simple closed $n+$2D topological order $\EB_{n+2}$, which is an $E_2$-algebra, then determining that of an $n+$1D topological order (viewed as the 1-codimensional boundary of $\EB_{n+2}$) as an $E_1$-algebra in Def.\,\ref{def:cat-TO}. So this can be called a 1-codimensional definition (or classification). Higher codimensional definition is also possible. An $n$D topological order, realized by a 2-codimensional excitation in $\EB_{n+2}$, can be naturally described by the unitary $n$-category $\EE_n=\hom_{\EB_{n+2}}(\id_\ast, \id_\ast)$ with a distinguished object $\iota$. Such a pair $(\EE_n, \iota)$ is an $E_0$ algebra. One can go one-step further. An $n-$1D topological order, realized by a 3-codimensional excitation in $\EB_{n+2}$, can be described by the unitary $n$-category $\EE_n$ together with a distinguished object $\iota^{(0)}$ and a distinguished 1-morphism $\iota^{(1)}$ in $\hom_\EE(\iota^{(0)}, \iota^{(0)})$. Such a triple $(\EE_n, \iota^{(0)}, \iota^{(1)})$ can be viewed as an $E_{-1}$-algebra, which is a not-yet-defined notion. What justifies this terminology is that its bulk (or center) is an $E_0$-algebra. Continuing along this path of logic, we obtain that an $n-$$k$D topological order (for $k\leq n$), realized by a $(2+$$k)$-codimensional excitation in $\EB_{n+2}$, can be described by the unitary $n$-category $\EE_n$ together with a choice of object $\iota^{[0]}$ and a choice of 1-morphism $\iota^{[1]}$ in $\hom(\iota^{[0]},\iota^{[0]})$ and a choice of 2-morphism $\iota^{[2]}$ in $\hom(\iota^{[1]}, \iota^{[1]})$, so on and so forth. Then $(\EE_n, \iota^{[0]}, \iota^{[1]}, \iota^{[2]}, \cdots, \iota^{[k]})$ can be viewed as $E_{-k}$-algebra. We leave a more careful study of these algebras elsewhere. 
} 
\end{rema}

\begin{expl} \label{expl:codim=2} {\rm
We give a couple of example higher codimensional definitions. 
\bnu

\item A 2-stable closed 2D topological order is given by 1-dimensional algebra $\Cb$ (an $E_2$-algebra), 
a simple 1D topological order is given by finite dimensional simple algebra $A$ (an $E_1$-algebra). So a 2-codimensional definition of a 0D simple topological order is a simple $A$-module $V$ together with a distinguished element $v$, i.e. an $E_0$-algebra $(V, v)$. 

\item Although it does not make physical sense to talk about $-1$D topological order, it might be viewed intuitively as a ``wall" between two simple 0D topological orders, which are given by two finite dimensional $A$-modules $(U,u)$ and $(V,v)$. Therefore, we can give a 3-codimensional definition of a $-1$D topological order as an $A$-module map $f: U \to V$ such that $f(u)=v$. The ``wall" is invertible if $f$ is an invertible map. 
\enu
}
\end{expl}

\begin{rema} \label{rema:global-TO}
{\rm Topological orders defined on space manifolds other than open disks do occur in this work but only briefly discussed (see Example\,\ref{expl:toric-code}, \ref{expl:lw-mod} and Remark\,\ref{rema:unique-f0}). For a systematic study of topological orders on other manifolds from a macroscopic point of view, we need the theory of factorization homology on disk-stratified manifolds \cite{lurie2,aft1,aft,afr} (see also Remark\,\ref{rema:spins}). Given a unitary modular tensor 1-category as a $2+1$D local topological order, we will show in \cite{ai} that the global topological invariants (or the global order parameters) on closed 2-dimensional surfaces of all genera (as the space manifolds) obtained from factorization homology are all given by the category $\one_1=\hilb$. Moreover, if we trap finitely many topological particles on a given closed surface, factorization homology produces an object in $\hilb$, i.e. a finite dimensional Hilbert space, which is nothing but the space of degenerate ground states on that closed surface. Furthermore, this result remains to be true for the surfaces decorated by arbitrary gapped 1- and 2-codimensional defects. 
}
\end{rema}

\begin{rema} \label{rema:spins} {\rm
In general, topological excitations are defined on disk-stratified submanifolds equipped with certain tangential structures \cite{aft1,afr}, such as framing. In this work, we completely ignore these tangential structures. For all algebraic constructions appeared in this work, such as the construction of center, gapped domain walls and Morita/Witt equivalence, they do not seem to play any explicit role. 
}
\end{rema}

\begin{rema} \label{rema:mtc} {\rm
Our notion of closed/anomalous topological order is different from but not contradicting to that of a TQFT (defined on closed space manifolds with boundaries (and corners) and non-trivial spacetime cobordisms). For example, a quantum Hall system or a modular tensor category gives a closed topological order according to our definition, but the associated 2+1D TQFT is ``anomalous" due to the framing anomaly \cite{rt,turaev}. It has a non-trivial bulk 3+1D TQFT, which is invertible but with no non-trivial excitations \cite{walker-wang,freed-teleman,freed}. This is compatible with our result that a quantum Hall system on an open 2-disk is closed because its \bulk (defined on a spatial open 3-disk) contains no non-trivial topological excitation. 
} 
\end{rema}

\begin{rema}[Minimal Assumption] \label{rema:minimal} {\rm
In physics, a topological excitation is an equivalence class of excitations. It is invariant under smooth deformation and the action of any local operators \cite{kong-wen}. So on the categorical side, it is reasonable to identify all morphisms that are isomorphic. More precisely, given an ordinary unitary $n$-category, we take one representative from each equivalence class of $k$-morphisms for $0\leq k <n$. These representatives form a much smaller unitary $n$-category, in which distinct morphisms are not isomorphic. As a result, all $k$-morphisms form a set with only countable many elements. We call such a category {\it minimal} \cite{lurie1}. In many parts of this work, we would like to assume that the unitary $n$-categories under consideration are minimal so that it is legitimate to say that two unitary $n$-categories (or topological orders) are identical, i.e. $\EC_n=\ED_n$. This assumption is not absolutely necessary for our theory. But it simplifies the discussion. One of the major simplifications is that two minimal unitary $n$-categories are equivalent only if they are isomorphic. In this case, one can identify these two $n$-categories. Since this assumption is not absolutely necessary, we use it only when it is needed. For example, it is used significantly in Sec.\,\ref{sec:universal}. 
For mathematical results associated to $n$-categories, we still use the usual notion of an equivalence $\simeq$ between categories. But for physics minded readers, we do recommend to take this assumption and regard $\simeq$ as $=$ throughout this work. 
}
\end{rema}

\section{The category $\TO_n^{closed-wall}$ of topological orders} \label{sec:TO-wall}

In this section, we discuss some structures of the category $\TO_n^{wall}$ that are used in this work.

\subsection{Time-reverse of a topological order}  \label{sec:time-reverse}
An $n$D topological order $\EC_n$ can be realized by a $n$D local Hamiltonian lattice model $\Gamma$. If we mirror reflect the system along the time direction (or any odd number of space-time directions),
we obtain a local Hamiltonian lattice model $\Gamma^\op$ (or $\overline{\Gamma}$). We denote associated topological order by $\EC_n^\op$ (or $\overline{\EC}_n$), which is called the time-reverse (or the reverse) of $\EC_n$. Mathematically, $\EC_n^\op$ is the $n$-category obtained from $\EC_n$ by flipping all $n$-morphisms; $\overline{\EC}_n$ is obtained from $\EC_n$ by flipping all $l_1$-, $\cdots$, $l_k$-morphisms for $k$ being odd. 

\begin{expl} \label{expl:time-reverse} {\rm
We discuss a few low dimensional cases. 

\bnu


\item A $0+$1D topological order is given by a semisimple algebra $A=\hom_{\EC_1}(x,x)$ (with multiplication $m(a,b)=ab$ for $a,b\in A$) for an object in a unitary $1$-category $\EC_1$. The physical meaning of an arrow in $\EC_1$ is an instanton. Reversing the time amounts to flipping all the arrows in $\EC_1$. Therefore, the time-reverse of $\EC_1$ is the opposite category $\EC_1^\op$, which is the same category $\EC_1$ but with arrows reversed. So the time reverse of $A$ is just the opposite algebra $A^\op$ with the opposite multiplication $m^\op(a,b)=ba$. For example, in Example\,\ref{expl:lw-mod}, when we fold two boundaries in Fig.\,\ref{fig:lw-mod} (a), we need flipped the orientation of the $\EN$-boundary. This explains the ``$\op$" in the first line of Eq. (\ref{eq:lw-mod-dim-red-2}).

\item In $2$D, the time-reverse of a unitary multi-fusion 1-category $\EC_2$ is the opposite category $\EC_2^\op$ equipped with the tensor product $\otimes$. The coherence isomorphisms, i.e. the associator and unit isomorphisms, in $\EC_2^\op$ are taken to be the inverse of those in $\EC_2$.

Actually, the unitary fusion 1-category $\EC_2^\op$ is monodically equivalent to the fusion 1-category $\EC_2^{\rev}$ (obtained by flipping 1-morphisms), which is the category as $\EC_2$ but equipped with the opposite tensor product $\otimes^\rev$, i.e. $x \otimes^\rev y := y\otimes x$, and the coherence isomorphisms are the inverse of those in $\EC_2$. This monoidal equivalence $\EC_2^{\rev}\simeq \EC_2^\op$ is given by taking duals $x \mapsto \overline{x}$. Note that $\EC_2^\rev$, viewed as a unitary 2-category with a simple object, is just the space-reverse of $\EC_2$. So the monoidal equivalence $\EC_2^{\rev}\simeq \EC_2^\op$ simply says that the time-reverse is equivalent to the space-inverse. 

\item In $n$D for $n>2$, the time-reverse of a unitary $n$-category is again given by the opposite category $\EC_n^\op$, which is the same category as $\EC_n$ but with reversed $n$-morphisms and all coherence isomorphisms are taken to be the inverse of those in $\EC_n$. For example, for a closed $\EC_3$ topological order, i.e. a unitary braided fusion 1-category, the braiding in $\EC_3^\op$ is taken to be the inverse of that in $\EC_3$. 

\enu
}
\end{expl}

\begin{rema} \label{rema:time-reverse} {\rm
For a unitary $n$-category and $k\neq l$, flipping the arrows for all $k$-morphisms and all $l$-morphisms leaves the category unchanged (up to equivalences) due to its fully dualizability. Therefore, a mirror reflection of a topological order in any odd number directions is equivalent to the time reverse, i.e. $\overline{\EC}_n\simeq \EC_n^\op$. 
}
\end{rema}

\subsection{Dimensional reductions and tensor products}  \label{sec:dim-red}

For $n\geq 0$, a 1-codimensional excitation $x$ in a potentially anomalous $n+$1D topological order $\EC_{n+1}$ determines an anomalous $n$D topological order defined in a neighborhood of $x$, depicted in Fig.\,\ref{fig:PnC}.
\begin{figure}
$$
\raisebox{-3.5em}{\begin{picture}(100, 85)
   \put(20,0){\scalebox{2}{\includegraphics{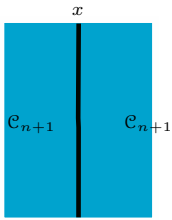}}}
   \put(20,0){
     \setlength{\unitlength}{.75pt}\put(-18,-40){
     \put(51, 148)     { \scriptsize $x $}
     \put(18, 91)     { \scriptsize $\EC_{n+1}$}
     \put(77, 91)     { \scriptsize $\EC_{n+1}$}
     }\setlength{\unitlength}{1pt}}
  \end{picture}}
$$
\caption{A 1-codimensional excitation $x$ in an $n+$1D simple topological order $\EC_{n+1}$ can be viewed as an $n$D topological order, denoted by $(P_n(\EC_{n+1}), x)$.}
\label{fig:PnC}
\end{figure}
This neighborhood automatically includes the action of 1-codimensional excitations on $x$ via fusion. Such obtained anomalous $n$D topological order can be described by an $E_0$-algebra $(P_n(\EC_{n+1}), x)$, where $P_n(\EC_{n+1})$ is the unitary $n$-category obtained from the unitary multi-fusion $n$-category $\EC_n$ by forgetting the monoidal structures. If $x$ is the trivial 1-codimensional excitation, we simply denote the pair by $P_n(\EC_{n+1})$. 

For a closed $n+$2D topological order $\EB_{n+2}$, i.e. a unitary braided fusion $n$-category with a unique simple 1-morphism $\id_\ast$, $P_{n+1}(\EB_{n+2})$ is just the unitary fusion $n$-category $\hom(\ast, \ast)$ but forgetting its braiding structures. In this case, we can also define $P_n(\EB_{n+2})$ to be the pair $(P_n(\EB_{n+2}),\id_\ast^2)$, where $P_n(\EB_{n+2})$ is the unitary $n$-category $\hom(\id_\ast, \id_\ast)$ but forgetting its braiding and monoidal structures. Note that $\EB_{n+2}$ is an $E_2$-algebra; $P_{n+1}(\EB_{n+2})$ is an $E_1$-algebra; $P_n(\EB_{n+2})$ is an $E_0$-algebra.


\begin{rema} {\rm
One can continue this process to define $P_k(\EB_{n+2})$ for $k<n$ as we have done in Remark\,\ref{rema:codim}. Namely, $P_k(\EB_{n+2}):=(\hom(\id_\ast, \id_\ast), \id_\ast^2, \cdots, \id_\ast^{n+2-k})$, which can be viewed as an $E_{k-n}$-algebra. We don't need them in this work. 
}
\end{rema}

\begin{expl} \label{expl:Pn} {\rm
A couple of examples from 2+1D.
\bnu
\item In the toric code model, the bulk excitations are given by the Drinfeld center $Z_2(\rep_{\Zb_2})$ of the unitary fusion 1-category $\rep_{\Zb_2}$ with simple objects $1,e,m,\epsilon$. The trivial domain wall gives an anomalous 2D topological order $P_2(Z_2(\rep_{\Zb_2}))$, which is just the same unitary fusion 1-category as $Z(\rep_{\Zb_2})$ but forgetting the braiding structure. The trivial particle-like excitation $1$ can be viewed as an anomalous 1D topological order by including a neighborhood of $1$. Therefore, one should include all the image of the action of $Z_2(\rep_{\Zb_2})$ on $1$. This image is nothing but $P_1(Z_2(\rep_{\Zb_2}))$, which is the same unitary 1-category as $Z_2(\rep_{\Zb_2})$ (together with the tensor unit) but forgetting all braiding and fusion structures. An $e$-particle can be viewed as a 1D topological order $(Z_2(\rep_{\Zb_2}), e)$, where $Z_2(\rep_{\Zb_2})$ is viewed as a unitary 1-category.

\item Consider a Levin-Wen model with two boundaries such that the bulk lattice defined by a unitary fusion 1-category $\EC_2$ and the lattice near two boundaries defined by unitary indecomposable $\EC$-modules $\EM$ and $\EN$, respectively \cite{kitaev-kong} (see Fig.\,\ref{fig:lw-mod}). Excitations on the $\EM$-boundary (or the $\EN$-boundary) is given by $\EC_\EM^\vee$ (or $\EC_\EN^\vee$), where $\EC_\EM^\vee:=\fun_\EC(\EM, \EM)^\rev$ is the unitary fusion 1-category of unitary $\EC$-module functors equipped with the opposite tensor product. A defect of codimension 2 between the $\EM$-boundary and the $\EN$-boundary is given by a $\EC$-module functor $f$ in $\fun_\EC(\EM, \EN)$ (see Fig.\,\ref{fig:lw-mod}) \cite{kitaev-kong}. When this defect $f$ is viewed as an anomalous $1$D topological order, we must include a neighborhood of this defect, more precisely, include all excitations generated by the fusion of the boundaries/bulk excitations with this defect. This fusion action covers all objects in $\fun_\EC(\EM, \EN)$ because $\fun_\EC(\EM, \EN)$ is an indecomposable $\EC_\EN^\vee$-$\EC_\EM^\vee$-bimodule \cite{eo}. It is convenient to denote the bimdoule by a pair $(\fun_\EC(\EM, \EN), f)$ (see the purple dot in Fig.\,\ref{fig:lw-mod} (a)). When $\EM=\EN$ and $f=\id_\EM$, it is nothing but $P_1(\fun_\EC(\EM, \EM))$.

\enu
}
\end{expl}

One can stack an $n$D topological order $\EA_n$ on the top of the another one $\EB_n$. This operation is denoted by $\boxtimes$, also called a tensor product. More general tensor product can be obtained by gluing $\EA_n$ with $\EB_n$ by an $(n+$$1)$D bulk $\EC_{n+1}$, denoted by $\EA_n\boxtimes_{\EC_{n+1}} \EB_n$. It is summarized by the following graphic equations.
\be \label{eq:AxB}
\EA_n \boxtimes \EB_n :=
\raisebox{-3.5em}{\begin{picture}(100, 80)
   \put(20,0){\scalebox{2}{\includegraphics{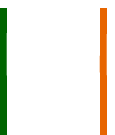}}}
   \put(20,0){
     \setlength{\unitlength}{.75pt}\put(-18,-40){
     \put(40, 91)     { $ \one_{n+1} $}
     \put(-6, 91)     { $\EA_n$}
     \put(99, 91)     { $\EB_n$}
     }\setlength{\unitlength}{1pt}}
  \end{picture}}
\quad,\quad\quad\quad
\EA_n \boxtimes_{\EC_{n+1}} \EB_n :=
\raisebox{-3.5em}{\begin{picture}(100, 80)
   \put(20,0){\scalebox{2}{\includegraphics{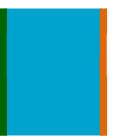}}}
   \put(20,0){
     \setlength{\unitlength}{.75pt}\put(-18,-40){
     \put(40, 91)     { $ \EC_{n+1} $}
     \put(-6, 91)     { $\EA_n$}
     \put(99, 91)     { $\EB_n$}
     }\setlength{\unitlength}{1pt}}
  \end{picture}}
\ee
It is clear that $\boxtimes=\boxtimes_{\one_{n+1}}$. Notice that $\EA_n\boxtimes_{\EC_{n+1}} \EB_n$ can be viewed either as an $n$D topological state (very often $n$-unstable, see Example\,\ref{expl:toric-code} and \ref{expl:lw-mod}) or as an $(n+$$1)$D phase with two boundaries. The former one is a dimensional reduction of the later one.

\begin{rema} \label{rema:tensor-product} {\rm
By Def.\,\ref{def:cat-TO}, when $\EC_{n+1}$ is simple and closed, i.e. a unitary braided fusion $(n-1)$-category with the usual unit, and $\EA_n$ and $\EB_n$ are simple, i.e. a unitary fusion $1$-category with the usual unit, then the tensor product $\EA_n\boxtimes_{\EC_{n+1}} \EB_n$ is just the usual tensor product of a right $\EC_{n+1}$-module and a left $\EC_{n+1}$-module in the category of unitary fusion $(n-1)$-categories with unitary monoidal $(n-1)$-functor as morphisms. In particular, the right action $\EA_n \boxtimes \EC_{n+1} \to \EA_n$ and the left action $\EC_{n+1} \boxtimes \EB_n \to \EB_n$ are both unitary monoidal.
}
\end{rema}

\begin{expl} \label{expl:toric-code} {\rm
We would like to illustrate the construction $\EA_n\boxtimes_{\EC_{n+1}} \EB_n$ by concrete lattice models when $n=2$. Consider two narrow bands of toric code model depicted in Figure\,\ref{fig:toric-code}. These two narrow bands can be viewed as two 1+1D topological orders. This is an example of dimensional reduction.
\bnu
\item In (a), the $e$- and $m$-particle in the bulk or on one of the boundary condense to vacuum on the other boundary. Therefore, there is no non-trivial particle-like excitations in the band at all, and the $2$D topological order thus obtained is trivial, i.e.
\begin{align}
\EA_2\boxtimes_{\EC_3} \EB_2 &= \hilb. \nonumber
\end{align}
This fact is also supported by abstract nonsense. When the toric code model is viewed as an example of Levin-Wen models, the bulk lattice is determined by the data of the unitary fusion category $\rep_{\Zb_2}$ (the category of representations of $\Zb_2$ group). The smooth boundary is again determined by that of $\rep_{\Zb_2}$ but viewed as a right module over the fusion category $\rep_{\Zb_2}$ (or a right $\rep_{\Zb_2}$-module); the rough boundary is determined by the category $\hilb$ which is a left $\rep_{\Zb_2}$-module \cite{bravyi-kitaev}\cite{kitaev-kong}. The topological excitations on the smooth boundary is given by the unitary fusion category $\rep_{\Zb_2}\simeq \fun_{\rep_{\Zb_2}}(\rep_{\Zb_2}, \rep_{\Zb_2})$, those on the rough boundary by $\rep_{\Zb_2}\simeq \fun_{\rep_{\Zb_2}}(\hilb, \hilb)$. The bulk excitations are given by the unitary modular tensor category $Z(\rep_{\Zb_2}):=\fun_{\rep_{\Zb_2}|\rep_{\Zb_2}}(\rep_{\Zb_2}, \rep_{\Zb_2})$, which is the Drinfeld center of $\rep_{\Zb_2}$. The tensor product\footnote{We have secretly used the fact that $\rep_{\Zb_2}$ is a symmetric fusion 1-category such that $\rep_{\Zb_2}\simeq \rep_{\Zb_2}^\rev$. See Eq.\,(\ref{eq:lw-mod-dim-red}) for a more general and precise expression of this tensor product.}:
$$
\EA_2\boxtimes_{\EC_3} \EB_2=\fun_{\rep_{\Zb_2}}(\rep_{\Zb_2}, \rep_{\Zb_2}) \boxtimes_{Z(\rep_{\Zb_2})}
\fun_{\rep_{\Zb_2}}(\hilb, \hilb)
$$
is defined by Tambara's tensor product \cite{tambara} between a left $Z(\rep_{\Zb_2})$-module and a right $Z(\rep_{\Zb_2})$-module (see also \cite{eno2009}). Moreover, this tensor product has a natural monoidal structure (see \cite[Thm.\,5.8]{kong-zheng}) and we have
\begin{align}
\EA_2\boxtimes_{\EC_3} \EB_2 &= \fun_{\rep_{\Zb_2}}(\rep_{\Zb_2}, \rep_{\Zb_2}) \boxtimes_{Z(\rep_{\Zb_2})}
\fun_{\rep_{\Zb_2}}(\hilb, \hilb) \nn
&\simeq \fun_{\hilb}(\rep_{\Zb_2} \boxtimes_{\rep_{\Zb_2}} \hilb,\,\, \rep_{\Zb_2} \boxtimes_{\rep_{\Zb_2}} \hilb) \nn
&\simeq \fun_{\hilb}(\hilb,\,\, \hilb) \simeq \hilb,  \label{eq:toric-code-1}
\end{align}
where the first $\simeq$ was proved in \cite[Thm.\,5.8]{kong-zheng}.

\begin{figure}[bt]
$$
\raisebox{-50pt}{
  \begin{picture}(95,190)
   \put(0,0){\scalebox{6}{\includegraphics{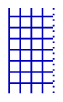}}}
   \put(0,0){
     \setlength{\unitlength}{.75pt}\put(-18,-19){
     \put(118, 230)       {\scriptsize $ \mbox{rough boundary} $}
     \put(-68, 230)     {\scriptsize $ \mbox{smooth boundary} $}
          }\setlength{\unitlength}{1pt}}
  \end{picture}}
\quad\quad\quad\quad\quad\quad
\raisebox{-50pt}{
  \begin{picture}(105,120)
   \put(0,0){\scalebox{6}{\includegraphics{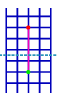}}}
   \put(0,0){
     \setlength{\unitlength}{.75pt}\put(-18,-19){
     \put(72, 183)       {\scriptsize $ e $}
     \put(72, 65)       {\scriptsize $ e $}
     \put(5,110)     {\scriptsize $m$}
     \put(155,110)    {\scriptsize $m$}
     }\setlength{\unitlength}{1pt}}
  \end{picture}}
$$
$$
(a) \quad\quad\quad\quad \quad \quad\quad\quad\quad \quad
\quad\quad\quad\quad  (b)
$$
\caption{Above figures illustrate two physical configurations of toric code model bounded by two gapped boundaries discussed in Example\,\ref{expl:toric-code}. In (a), two boundaries are different. In (b), both boundaries are the smooth boundary.  }
\label{fig:toric-code}
\end{figure}

\item In (b), both boundaries are the smooth boundaries. In this case, $m$-particle is clearly condensed in the $2$D phase. It seems that there is only $e$-particles living in the $2$D topological order. However, the string (the purple line in Fig.\,\ref{fig:toric-code} (b)) that creates a pair of $e$-particles at its ends can be detected by a string (the dotted line) that create a pair of condensed $m$-particles. It means that the $e$-string is a $2$D ``vacuum" that is different from the trivial string between two null-particles $1$. Moreover, the two $e$-particles (the red dot and the green dot in Fig.\,\ref{fig:toric-code} (b)) should be viewed as domain walls between two different vacuums, should be treated as different types of particles. They form a particle and anti-particle pair. As a consequence, the $2$D phase is described by a unitary multi-fusion 1-category
\be  \label{eq:multi-fusion-toric-cord}
\hilb \times M_{2\times 2} = \left( \begin{array}{cc}  \hilb & \hilb  \\
 \hilb & \hilb  \end{array} \right).
\ee
where two diagonal subcategories are the usual vacuum state (any vertical blue line in Fig.\,\ref{fig:toric-code} (b) and the $2$D vacuum state given by the purple string in Fig.\,\ref{fig:toric-code} (b), the two diagonal subcategories are the domain walls between two vacuums.

\smallskip
Above result is also guaranteed by abstract nonsense. Indeed, by \cite[Thm.\,5.8]{kong-zheng}, the $2$D phase obtained via dimensional reduction is given by
\begin{align}
\EA_2\boxtimes_{\EC_3} \EB_2 &= \fun_{\rep_{\Zb_2}}(\rep_{\Zb_2}, \rep_{\Zb_2}) \boxtimes_{Z(\rep_{\Zb_2})}
\fun_{\rep_{\Zb_2}}(\rep_{\Zb_2}, \rep_{\Zb_2}) \nn
&\simeq \fun_{\hilb}(\rep_{\Zb_2} \boxtimes_{\rep_{\Zb_2}} \rep_{\Zb_2},\,\, \rep_{\Zb_2} \boxtimes_{\rep_{\Zb_2}} \rep_{\Zb_2}) \nn
&\simeq  \fun_{\hilb}(\rep_{\Zb_2},\,\, \rep_{\Zb_2}), \label{eq:toric-code-ACB}
\end{align}
which is exactly the unitary multi-fusion 1-category given in (\ref{eq:multi-fusion-toric-cord}).

When the distance between two boundaries is small, the tunneling of an $m$-particle from one boundary to the other is a local operator. In this case, the ground state degeneracy can be easily lifted by introducing this tunneling effect into the Hamiltonian.
Therefore, the phase described by unitary multi-fusion 1-category (\ref{eq:multi-fusion-toric-cord}) (or (\ref{eq:toric-code-ACB})) is unstable and will flow to the only closed and stable 2D phase $\one_2$. On the other hand, if we keep the distance between two boundaries large even in the thermodynamic limit, then the physical configuration depicted in Fig.\,\ref{fig:toric-code} (b) with 2-fold ground state degeneracy is stable (under local perturbations) as a $3$D topological order with two boundaries \cite{wang-wen}.

\enu
Before we end this example, we would like to comment that the unitary multi-fusion 1-category defined in (\ref{eq:multi-fusion-toric-cord}) is also closed because it is realizable by a lattice model in $2$D. This is consistent with the mathematical result that the Drinfeld center of the unitary multi-fusion category defined in (\ref{eq:multi-fusion-toric-cord}) is trivial \cite{eno2009}. Since they describe stable 3D topological orders (with possible gapped boundaries), they also play important role in our study of topological orders. 

}
\end{expl}

\begin{figure}[bt]
$$
\raisebox{-50pt}{
  \begin{picture}(95,190)
   \put(0,0){\scalebox{1.5}{\includegraphics{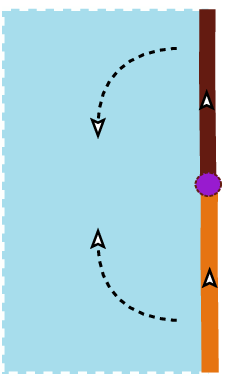}}}
   \put(0,0){
     \setlength{\unitlength}{.75pt}\put(-18,-19){
     \put(128, 240)       {\scriptsize $ \EC_\EM^\vee:=\fun_\EC(\EM, \EM)^\rev $}
     \put(132, 10)     {\scriptsize $ \EC_\EN^\vee:=\fun_\EC(\EN, \EN)^\rev $}
     \put(150, 128)   {\scriptsize $(\fun_\EC(\EM, \EN), f)$}
     \put(40, 128)     {\scriptsize $ Z(\EC) $}
          }\setlength{\unitlength}{1pt}}
  \end{picture}}
\quad\quad\quad\quad\quad\quad\quad
\raisebox{-50pt}{
  \begin{picture}(105,160)
   \put(0,35){\scalebox{1.5}{\includegraphics{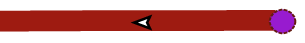}}}
   \put(0,35){
     \setlength{\unitlength}{.75pt}\put(-18,-19){
     \put(170, 65)       {\scriptsize $ (\fun_\EC(\EM, \EN),f) $}
     \put(100, 95) {\scriptsize  $\EE_2$}
     }\setlength{\unitlength}{1pt}}
  \end{picture}}
$$
$$
(a) \quad\quad\quad\quad \quad \quad\quad\quad\quad \quad
\quad\quad\quad\quad\quad\quad  (b)
$$
\caption{Above figures illustrate the dimensional reduction process in a Levin-Wen model discussed in Example\,\ref{expl:lw-mod}. $\EE_2$ is given by (\ref{eq:lw-mod-dim-red}).
}
\label{fig:lw-mod}
\end{figure}

\begin{expl}  \label{expl:lw-mod} {\rm
We give more examples of $\EA_n\boxtimes_{\EC_{n+1}} \EB_n$ for $n=1$ in Levin-Wen models \cite{kitaev-kong}. Consider a lattice model depicted in Fig.\,\ref{fig:lw-mod} (a), the bulk lattice defined by a unitary fusion 1-category $\EC_2$, the upper/lower boundary lattice is defined by a unitary indecomposable $\EC$-module $\EM$/$\EN$. The excitations in the bulk are given by the $Z_2$-center (the Drinfeld center) $Z_2(\EC_2)$ of $\EC_2$, the excitations on the $\EM$-boundary by $\EC_\EM^\vee:=\fun_\EC(\EM,\EM)^\rev$, those on $\EN$-boundary by $\EC_\EN^\vee$ and the defect junction (the purple dot) by a unitary $\EC$-module functor $f\in \fun_\EC(\EM, \EN)$. When the defect junction is viewed as a 1D topological order (by including the action of nearby excitations on $f$), it is given by $(\fun_\EC(\EM, \EN), f)$\footnote{The action of nearby excitations of $f$ actually form a subcategory $\fun_\EC(\EN,\EN)f\fun_\EC(\EM,\EM)$ of $\fun_\EC(\EM, \EN)$. This subcategory is equivalent to the data $(\fun_\EC(\EM, \EN), f)$.}, which is an $E_0$-algebra (see Remark\,\ref{rema:codim}). By a dimensional reduction process, i.e. folding the two boundaries along two dotted arrows, we obtain (b) in Fig.\,\ref{fig:lw-mod}, where
\begin{align} \label{eq:lw-mod-dim-red}
\EE_2 &= \fun_\EC(\EN, \EN)^\rev \boxtimes_{Z(\EC)} \fun_\EC(\EM, \EM)^\rev.
\end{align}
On the other hand, when we fold $\EN$-boundary upwards and flip its orientation, the left $\EC$-module $\EN$ becomes the right $\EC$-module $\EN^\op$. According to \cite{kitaev-kong}, we should also have
\begin{align}  \label{eq:lw-mod-dim-red-2}
\EE_2 &= \fun_{\hilb}(\EN^\op\boxtimes_\EC \EM, \EN^\op\boxtimes_\EC \EM)^\rev \nn
&\simeq \fun_{\hilb}(\fun_\EC(\EN, \EM), \fun_\EC(\EN, \EM))^\op \nn
&\simeq \fun_{\hilb}(\fun_\EC(\EN, \EM)^\op, \fun_\EC(\EN, \EM)^\op), \nn
&\simeq \fun_{\hilb}(\fun_\EC(\EM, \EN), \fun_\EC(\EM, \EN))
\end{align}
where we have used the identity $\EN^\op \boxtimes_\EC \EM \simeq
\fun_\EC(\EN, \EM)$ \cite{eno2009}. Indeed, the compatibility of (\ref{eq:lw-mod-dim-red}) and (\ref{eq:lw-mod-dim-red-2}) was proved in \cite[Thm.\,5.8]{kong-zheng}. Note that $\EE_2$ is a unitary multi-fusion 1-category. As a 2D topological order, it should be closed because it is the \bulk of the 1D topological order $(\fun_\EC(\EM, \EN),f)$. This is consistent with the mathematical fact that the $Z_2$-center of $\EE_2$ is trivial \cite{eno2009}. Moreover, as we will show in Sec.\,\ref{sec:center}, $\EE_2$ is also the $Z_1$-center of $(\fun_\EC(\EM, \EN),f)$. Note also that the dimension of ground state degeneracy on $S^1$ is the dimension of $\hom_\EE(1_\EE, 1_\EE)$, which is the number of simple objects in $\fun_\EC(\EM, \EN)$.
In particular, when $\EN=\EM=\EC$, $\EE=\fun(\EC, \EC)$, and if $\EC=\rep_{\Zb_2}$, $\EE_2$ is nothing but the topological phase constructed in Fig.\,\ref{fig:toric-code} (b) (recall Eq. (\ref{eq:toric-code-ACB})). It describes a 2-unstable $2$D phase and can flow to the only 2-stable closed $2$D phase $\one_2$. This is consistent with recent works \cite{wang-wen,hy,lww}. One can cook up more general examples from Levin-Wen models enriched by defects as depicted in Fig.\,\ref{fig:lw-defect-duality} (see Example\,\ref{expl:lw-ext-domain-wall}).
}
\end{expl}

\void{
\begin{rema} \label{rema:unit-map} {\rm
In Example\,\ref{expl:lw-mod}, the defect junction is given by $\fun_\EC(\EN, \EM)$. But in \cite{kitaev-kong}, a single defect junction is given by an object $\phi$ in the category $\fun_\EC(\EN, \EM)$ instead of the whole category. The reason for choosing $\fun_\EC(\EN, \EM)$ is because we want to view the defect junction not as a single excitation but as an anomalous $0+1$D topological phase, which is defined by a neighborhood of this junction (see \cite[Lem.\,1]{kong-wen}). Therefore, one must include the entire image under the actions of $\fun_\EC(\EM, \EM)$ and $\fun_\EC(\EN, \EN)$ on $\phi$ from two sides. It is known that this action cover the entire category $\fun_\EC(\EN, \EM)$ because $\fun_\EC(\EN, \EM)$ is an indecomposable $\fun_\EC(\EN, \EN)$-$\fun_\EC(\EM, \EM)$-bimodule \cite{eo}. Note also that $\phi \in \fun_\EC(\EN, \EM)$ defines the unit map $\one_1\to \fun_\EC(\EN, \EM)$ that is needed to define the $1$D topological order. 
}
\end{rema}
}

It is also clear that the following identities (recall the Minimal Assumption Remark\,\ref{rema:minimal}): 
\begin{align}
\one_n \boxtimes \EA_n &= \EA_n = \EA_n \boxtimes \one_n, \nn
(\EA_n \boxtimes_{\EB_{n+1}} \EC_n) \boxtimes_{\ED_{n+1}} \EE_n
&= \EA_n \boxtimes_{\EB_{n+1}} (\EC_n \boxtimes_{\ED_{n+1}} \EE_n) \label{eq:boxtimes-associative}
\end{align}
hold. We simply denote the two sides of the equation (\ref{eq:boxtimes-associative}) by $\EA \boxtimes_\EB \EC \boxtimes_\ED \EE$ in the rest of this paper. Moreover, if $\EB_{n+1}$ and $\ED_{n+1}$ are closed, we have the following identity:
\be  \label{eq:BCD}
P_n(\EB_{n+1}) \boxtimes_{\EB_{n+1}} \EC_n = \EC_n = \EC_n \boxtimes_{\ED_{n+1}} P_n(\ED_{n+1}).
\ee

\begin{rema} {\rm
When $n=1$, $\EA_1, \EC_1, \EE_1$ are unitary 1-categories. In this case, the associativity (\ref{eq:boxtimes-associative}) is just the associativity of Tambara's tensor product \cite{tambara,eno2009}. When $n=2$, $\EA_2, \EC_2, \EE_2$ are unitary multi-fusion 1-categories and $\EB_3$ a unitary braided fusion $1$-category, the tensor products give again unitary multi-fusion 1-categories (see \cite{kong-zheng}). In this case, the associativity is guaranteed. Higher $n$ cases are not completely known (see \cite{lurie2}).}
\end{rema}

\begin{rema} \label{rema:hCat} {\rm
There is another kind of interesting dimensional reductions. Given a unitary $n$-category $\EC_n$, i.e. an $n$D topological order, we obtain the so-called the homotopy $k$-category $\mathrm{h}_k\EC$ (for $0\leq k< n$), which is defined as the $k$-category obtained from $\EC_n$ by throwing away all $k+1$ and higher morphisms in $\EC_n$ and replacing old $k$-morphisms by their equivalence classes (\cite{lurie2}). The physical meaning of $\mathrm{h}_k\EC$ (together with a distinguished object) is a $k$D topological order, which is obtained in the eye of an observer living a $k$D plane of original $n$D topological order such that the plane intersects transversally with all $l$-codimensional excitations in $\EC_n$ for $0\leq l\leq k$ but not contains any higher codimensional excitations. This kind of dimensional reductions can be very useful. For example, it suggests that a simple $\EC_n$ is anomalous if $\mathrm{h}_k\EC$ is anomalous for some $1< k< n$. For example, if $\mathrm{h}_2\EC$ is anomalous, then $\EC_n$ contains at least one non-trivial domain walls, which is not detectable by mutual braidings. Hence, $\EC_n$ is anomalous too. This is the physical intuition behind Conjecture\,\ref{conj:closed-f=bf}. 
}
\end{rema}

\subsection{Closed domain walls and the category $\TO_n^{closed-wall}$}  \label{sec:wall}

\begin{figure}[tb]
 \begin{picture}(150, 90)
   \put(150,10){\scalebox{2}{\includegraphics{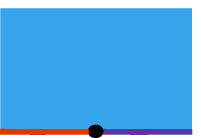}}}
   \put(110,-55){
     \setlength{\unitlength}{.75pt}\put(-18,-19){
     \put(85, 98)       { $\EC_n$}
     \put(180, 98)       { $\ED_n$}
     \put(130,96)      { $\EM_{n-1}$}
     \put(130, 160)    { $ \EA_{n+1} $}
   }\setlength{\unitlength}{1pt}}
  \end{picture}
\caption{A closed gapped domain wall (or simply a domain wall) $\EM_{n-1}$ between the $\EC_n$-phase and the $\ED_n$-phase sharing the same bulk $\EA_{n+1}$. 
}
\label{fig:closed-wall}
\end{figure}

By a {\it closed} or {\it anomaly-free} gapped domain wall $\EM_{n-1}$ between $\EC_n$ and $\ED_n$, or a $\EC_n$-$\ED_n$-wall, we mean an $(n-$$1)$D topological order connecting the $\EC_n$-phase and the $\ED_n$-phase and sharing the same bulk as depicted in Fig.\,\ref{fig:closed-wall}. We postpone the precise definition of the anomalous gapped domain wall to Sec.\,\ref{sec:Z-functor}. All domain walls occur before Sec.\ref{sec:Z-functor} are closed. Therefore, we abbreviate a closed domain wall to a domain wall until Sec.\,\ref{sec:Z-functor} unless we want to emphasize. 

\medskip
If $\EC_n$ and $\ED_n$ are simple, the domain wall can be simple or composite. If $\EC_n$ or $\ED_n$ is composite, the wall must be composite. We denote the $n-$1D topological order determined by the trivial wall $\id_\ast$ in a simple $n$D topological order $\EC_n$ by $\id_{\EC_n}$, i.e. $\id_{\EC_n}=P_{n-1}(\EC_n)$. 

\begin{defn} {\rm
The category $\TO_n^{closed-wall}$ is defined as the $n$-category with objects given by simple $n$D topological orders, 1-morphisms by all 1-codimensional (disk-like) gapped closed walls between topological orders, \ldots, $l$-morphisms by $l$-codimensional (disk-like) gapped closed walls between $(l-$$1)$-codimensional closed walls, \ldots, $n$-morphisms are instantons. The composition of $l$-morphisms are given by the tensor products given in Eq.\,(\ref{eq:AxB})
} 
\end{defn}

\begin{rema} {\rm
Note that the category $\TO_n^{closed-wall}$ contains only simple topological orders, gapped closed walls and closed walls between walls. We exclude all composite topological orders because they break the functoriality of $\Z_n$ (see \cite[Remark\, 5.21]{kong-zheng} for $n=2$ cases). The category $\TO_n^{closed-wall}$ is not a unitary $n$-category because it does not satisfy Axiom 5 in Def.\,\ref{def:unitary-n-cat}. 
}
\end{rema}

The time-reverse of a $\EC_n$-$\ED_n$-wall, denoted by $\EM^\op$ (or $\overline{\EM}$ see Remark\,\ref{rema:time-reverse}), is automatically a $\ED_n$-$\EC_n$-wall. It gives the category $\TO_n^{closed-wall}$ a duality structure. The tensor product $\boxtimes$ (recall the first equation in Eq.\,(\ref{eq:AxB})) provides $\TO_n^{closed-wall}$ the structure of a symmetric monoidal $n$-category. 

\medskip
An invertible gapped domain wall between two simple topological orders $\EC_n$ and $\ED_n$ is just an invertible morphism in $\TO_n^{closed-wall}$, and can be defined inductively.
\begin{defn} {\rm
A gapped domain wall $\EM_{n-1}$ between $\EC_n$ and $\ED_n$ is called {\it invertible} if there is an $(n-2)$D invertible domain wall between $\EM\boxtimes_{\ED_n} \EM^\op$ and  $\id_{\EC_n}$, denoted by $\EM\boxtimes_{\ED_n} \EM^\op \simeq \id_{\EC_n}$, and another $(n-2)$D invertible domain wall between $\EM^\op \boxtimes_{\EC_n} \EM$ and $\id_{\EC_n}$, denoted by $\EM^\op \boxtimes_{\EC_n} \EM \simeq \id_{\EC_n}$ (see Def.\,\ref{def:isomorphism}).  
}
\end{defn}

\begin{rema} {\rm
Note that the Minimal Assumption Remark\,\ref{rema:minimal} does not trivialize the higher dimensional domain walls. Their physical meaning is still preserved.  
}
\end{rema}

An invertible domain wall between two phases $\EC_n$ and $\ED_n$ is also transparent. It means that excitations in $\EC_n$-phase can tunnel through the wall  and become excitations in $\ED_n$-phase. This tunneling is an invertible process and preserves the fusion and braiding properties. Indeed, for an invertible $\EC_n$-$\ED_n$-wall $\EM_{n-1}$, the topological excitations on one side can tunnel through the invertible wall to the other side. For excitations of codimension higher than 1 and $n>1$, this tunneling process is invertible and is depicted in Fig.\,\ref{fig:tunneling}:
\begin{figure}
$$
 \raisebox{-50pt}{
  \begin{picture}(100,110)
   \put(50,8){\scalebox{1.5}{\includegraphics{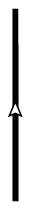}}}
   \put(50,8){
     \setlength{\unitlength}{.75pt}\put(-151,-235){
     \put(110,330)  {\scriptsize $ \EC_n $}
     \put(185,330)  {\scriptsize $ \ED_n $}
     \put(148,355)  {\scriptsize $ \EM_{n-1} $}
     \put(148,225)  {\scriptsize $\EM_{n-1} $}
     \put(110, 290) {\scriptsize $\times$}
     }\setlength{\unitlength}{1pt}}
  \end{picture}}
  \rightsquigarrow
   \raisebox{-50pt}{
   \begin{picture}(130,110)
   \put(50,8){\scalebox{1.5}{\includegraphics{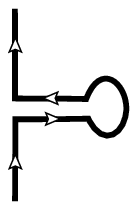}}}
   \put(50,8){
     \setlength{\unitlength}{.75pt}\put(-151,-235){
     \put(120,290)  {\scriptsize $ \EC_n $}
     \put(203, 330)  {\scriptsize $\ED_n$}
     \put(170,307)  {\scriptsize $ \EM $}
     \put(170,270)  {\scriptsize $ \EM $}
     \put(148,355)  {\scriptsize $ \EM_{n-1} $}
     \put(148,225)  {\scriptsize $\EM_{n-1} $}
     \put(203, 290) {\scriptsize $\times$}
     }\setlength{\unitlength}{1pt}}
  \end{picture}}
  \rightsquigarrow
  \raisebox{-50pt}{
   \begin{picture}(130,110)
   \put(50,8){\scalebox{1.5}{\includegraphics{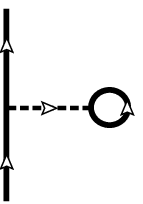}}}
   \put(50,8){
     \setlength{\unitlength}{.75pt}\put(-151,-235){
     \put(120,290)  {\scriptsize $ \EC_n $}
     \put(170,330)  {\scriptsize $ \ED_n $}
     \put(170,250)  {\scriptsize $ \ED_n $}
     \put(148,355)  {\scriptsize $ \EM_{n-1} $}
     \put(148,225)  {\scriptsize $\EM_{n-1} $}
     \put(230,288) {\scriptsize $\EM_{n-1} $}
     \put(210, 288) {\scriptsize $\times$}
     }\setlength{\unitlength}{1pt}}
  \end{picture}}
$$
\caption{The ``$\times$" in the graphs represents a topological excitation in $\EC_n$-phase. The two horizontal lines in the second picture have the opposite orientation. They represent the domain wall $\EM$ and its inverse $\EM^\op$. Their tensor product gives the trivial wall $\id_{\ED_n}$ depicted as the dotted line in the third graph.}
\label{fig:tunneling}
\end{figure}
where $\times$ represents a topological excitation in $\EC_n$-phase and the dotted line in the third graph is the trivial domain wall $\id_{\ED_n}$ in $\ED_n$-phase. For 1-codimensional excitations and $n\leq 1$, it is similar. This tunneling process is completely invertible. Moreover, it is clear that it also respects the fusion and braiding \cite{kitaev-kong}. In this case, we also denote $\EC_n \simeq \ED_n$. 


This physical definition of invertible domain wall becomes a mathematical one if we identify an $n+$1D topological order with a unitary multi-fusion $n$-category, $\boxtimes_{\EC_{n+1}, \ED_{n+1}}$ with mathematical tensor products, and define the invertible domain wall mathematically in the lowest dimension. We expect that an invertible domain wall is equivalent to an invertible unitary $n$-functor. 

\begin{expl} \label{expl:inv-wall} {\rm
Using Def.\,\ref{def:cat-TO} and Example\,\ref{expl:codim=2}, we prove that invertible domain walls are indeed invertible unitary $n$-fucntors in low dimensional cases. 
\bnu
\item A 0D domain wall between two simple 1D phases $A$ and $B$ (two simple algebras over $\Cb$) is given by a $A$-$B$-bimodule $V$ (together with a distinguished element $v$), i.e. an $E_0$-algebra $(V,v)$ (recall Example\,\ref{expl:codim=2}). A -1D ``wall" between two such $0$D topological orders $(U,u)$ and $(V,v)$ is an $A$-$B$-bimodule map $f: U\to V$ such that $f(u)=v$ (recall Example\,\ref{expl:codim=2}). This -1D ``wall" is invertible if $f$ is an invertible bimodule map. 

If $(M, m)$ is an invertible 0D wall between $A$ and $B$, it implies that $M$ is an invertible $A$-$B$-bimodule and more. 
It turns out that it implies that $A\simeq B$ as algebras as we expected in the discussion of Fig.\,\ref{fig:tunneling}. More explicitly, let $A=\text{End}(U)$ and $B=\text{End}(V)$ for two finite dimensional vector spaces $U$ and $V$, the only invertible $A$-$B$-bimodule is given by $\hom_\Cb(V, U)$ with its inverse given by the $B$-$A$-bimodule $\hom_\Cb(U,V)$. If $(\hom_\Cb(V,U), f: V\to U)$ is an invertible wall, then there is $g:U \to V$ such that 
\begin{align}
&(\hom_\Cb(V,U), f)\otimes_B (\hom_\Cb(U,V), g) \simeq (A,\id_U)  \nn
&(\hom_\Cb(U,V), g)\otimes_A (\hom_\Cb(V,U), f) \simeq (B, \id_V), \nonumber
\end{align}
which means that the canonical bimodule isomorphism $\hom_\Cb(V, U)\otimes_B\hom_\Cb(U,V) \simeq A$ must map $f\otimes_B g$ to $f\circ g=\id_V$ and the canonical bimodule isomorphism $\hom_\Cb(U, V)\otimes_A\hom_\Cb(V,U) \simeq B$ must map $g\otimes_A f$ to $g\circ f=\id_U$. As a consequence, $g$ and $f$ are invertible, and, therefore, $A\simeq B$ as algebras. It is easy to check that an invertible 0D wall between $A$ and $B$ is equivalent to an algebra isomorphism between $A$ and $B$. 

\item A 1D domain wall between two simple 2-stable 2D topological orders, which are given by two unitary fusion 1-categories $\EA_2$ and $\EB_2$, must be a unitary $\EA$-$\EB$-bimodule $\EE_1$ equipped with a distinguished object $x$. $\boxtimes_{\EA,\EB}$ in this case is just the Tambara's tensor product also denoted by $\boxtimes_{\EA,\EB}$. If $(\EE,x)$ is invertible, it implies that $\EA$ and $\EB$ are Morita equivalent and more. In particular, we can set $\EA_2=\fun_\EC(\EM, \EM)$, $\EB_2=\fun_\EC(\EN,\EN)$ and $\EE_1=\fun_\EC(\EM, \EN)$ for some unitary fusion 1-category $\EC$ and indecomposable unitary $\EC$-modules $\EM$ and $\EN$ (for example, one can simply take $\EC=\EA$ and $\EM=\EA$) \cite{eno2008}. The inverse of the wall $\EE$ is $\EE^\op\simeq \fun_\EC(\EN, \EM)$. Let $f:\EM \to \EN$ be a unitary $\EC$-module functor. Then the domain wall $(\fun_\EC(\EM, \EN), f)$ is invertible if and only if there is $g\in \fun_\EC(\EN, \EM)$ such that $g\circ f \simeq \id_\EM$ and $f\circ g \simeq \id_\EN$. In other words, $f$ must be an invertible functor, and, therefore, $\EA \simeq \EB$ as unitary fusion 1-categories. It is easy to check that an invertible 1D domain wall is equivalent to a unitary monoidal equivalence between unitary fusion 1-categories. 

\item In 3D, it was shown explicitly in 2+1D Levin-Wen models that invertible gapped walls 1-to-1 correspond to  unitary braided monoidal equivalences between two exact unitary braided fusion 1-categories \cite{kitaev-kong,eno2009}. 

\enu
We expect that high dimensional cases are similar.
}
\end{expl}

If the domain wall $\EM_{n-1}$ is not invertible, then tunneling process depicted in Fig.\,\ref{fig:tunneling} still makes sense but with the dotted line replaced by $\EM^\op \boxtimes_\EC \EM \simeq \oplus_i \EN_i$, where $\EM^\op \boxtimes_\EC \EM$ is a composite topological order and $\EN_i$ is an indecomposable component of $\EM^\op \boxtimes_\EC \EM$. In this case, a topological excitation ``tunnel" through the wall, always pulls a string (labeled by $\oplus_i \EN_i$) behind it \cite{kitaev-kong}.

\begin{expl}  {\rm
We give some examples of closed walls in Levin-Wen models.
\bnu
\item In the toric code model, any line in the bulk lattice can be viewed as a trivial (obviously invertible/transparent) domain wall. A non-trivial invertible domain wall was constructed in Figure\,1 in \cite{kitaev-kong} (see also the dotted line in Fig.\,\ref{fig:external-wall}). This domain wall also gives the EM duality of the bulk phase.

\item In the Levin-Wen models constructed in \cite{kitaev-kong}, a gapped domain wall can be constructed from a bimodule over fusion categories. More precisely, if the bulk lattices on the two side of a domain wall are defined by unitary fusion 1-categories $\EA_2$ and $\EB_2$, then the lattice on the domain wall can be defined by a $\EA$-$\EB$-bimodule $\EE$. In this case, the bulk excitations on the two sides of the wall are given by the Drinfeld centers $Z_2(\EA_2)$ and $Z_2(\EB_2)$, respectively. The excitations on the $\EM$-wall is given by $\fun_{\EA|\EB}(\EM, \EM)$. When the bulk excitations move close to the wall, they become wall excitations, this process defines two bulk-to-wall maps:
$$
Z_2(\EA) \xrightarrow{L} \fun_{\EA|\EB}(\EM, \EM) \xleftarrow{R}  Z_2(\EB).
$$
More precisely, $L$ and $R$ are two monoidal functors defined by
\begin{align}
L: \quad (\EA \xrightarrow{f} \EA)\quad &\mapsto \quad (\EM \simeq \EA \boxtimes_\EA \EM \xrightarrow{f\boxtimes_\EA \id_\EA} \EA \boxtimes_\EA \EM \simeq \EM), \nn
R: \quad (\EB \xrightarrow{g} \EB) \quad &\mapsto \quad (\EM \simeq \EM \boxtimes_\EB \EB \xrightarrow{\id_\EB \boxtimes_\EB g}  \EM \boxtimes_\EB \simeq \EM). \nonumber
\end{align}

If the bimodule $\EM$ is invertible in the sense that $\EM\boxtimes_\EB \EM^\op \simeq \EA$ and $\EM^\op \boxtimes_\EB \EM \simeq \EB$. Then both $L$ and $R$ are monoidal equivalences. Then the excitations tunneling from the left side to the right side is given by the monoidal equivalence $R^{-1}\circ L: Z(\EA) \to Z(\EB)$. Moreover, $R^{-1}\circ L$ respects the braiding, i.e. a braided monoidal equivalence. Therefore, an invertible $\EA$-$\EB$-bimodule gives an isomorphism between $Z(\EA)$ and $Z(\EB)$. It was also known that there is an one-to-one correspondence between invertible $\EA$-$\EB$-bimodules and braided monoidal equivalences between $Z(\EA)$ and $Z(\EB)$ \cite{kitaev-kong,eno2009}. If $\EM$ is not invertible, we have $\EM^\op \boxtimes_\EA \EM \simeq \oplus_i \EN_i$ as $\EB$-$\EB$-bimodules. A topological particle pass the wall always pull strings labeled by $\EN_i$. The real tunneling process amounts to cutting all the strings with non-trivial labels $\EN_i$ (except $\EB$) and setting the particle free, and is given by the functor $R^\vee\circ L$.

\enu
}
\end{expl}

It is known that 2D domain walls between two closed phases $\EC_3$ and $\ED_3$, i.e. two unitary non-degenerate braided fusion 1-categories, are classified by Lagrangian algebras in $\EC_3\boxtimes \overline{\ED}_3$, where $\overline{\ED}_3$ is the same braided fusion 1-category as $\ED_3$ but with the braiding defined by the anti-braiding of $\ED_3$. If the bulk-to-wall maps are $\EC_3 \xrightarrow{L} \EE_2 \xleftarrow{R} \ED_3$, or equivalently, $\EC_3\boxtimes \overline{\ED}_3 \xrightarrow{L\boxtimes \overline{R}} \EE_2$, then the Lagrangian algebra is the determined by $(L\boxtimes \overline{R})^\vee(1_\EE)$, where $(L\boxtimes \overline{R})^\vee$ is the right adjoint functor of $L\boxtimes R$ and $1_\EE$ is the tensor unit of $\EE_2$.
We expect that the same result holds for higher dimensional cases but with unitary (braided) fusion 1-categories replaced by unitary (braided) fusion $n$-categories. 

\begin{rema} {\rm
Gapped domain walls in two closed 3D topological orders $\EA_3$ and $\EB_3$ can all be obtained from condensations of lower dimensional excitations. When $\EA_3=\EB_3$, such walls are not elementary and should be excluded in the minimal categorical description of the topological order \cite{kong-wen}. 
}
\end{rema}

\begin{rema} {\rm
Gapped boundaries, walls or defects in TQFTs have been extensively studied from various perspectives in literature (see for example \cite{ffrs-defect,mw,kapustin-saulina-2,dkr2,fsv,fsv2,fv,fs} and references therein). 
}
\end{rema}

\section{The category $\TO_n^{fun}$ of topological orders}  \label{sec:TO-fun}

In this section, we first explain the uniqueness of the bulk of a give gapped boundary theory in Sec.\,\ref{sec:unique-bulk}. This uniqueness defines the \bulk, denoted by $\Z_n(\EC_n)$, of a given boundary $\EC_n$. Then we use it to give a physical definition of the notion of a morphism between two topological orders of the same dimension in Sec.\,\ref{sec:morphisms}. We discuss higher morphisms in Sec.\,\ref{sec:higher-morphisms}.  

\subsection{The unique-bulk hypothesis}  \label{sec:unique-bulk}
In general, an anomalous $n$D topological order $\EC_n$ can always be realized as a defect in a higher dimensional (possibly trivial) topological order. This realization is almost never unique. But we can always reduce it via a process of dimensional reduction to a gapped boundary of a closed $(n+1)$D topological order (see Fig.\,\ref{fig:dim-reduction}). We believe that such obtained closed $(n+1)$D topological order is unique. We denote the unique bulk topological order by $\Z_n(\EC_n)$ and refer to it as the \bulk of $\EC_n$.

\medskip
Actually, it is an immediate consequence of Def.\,\ref{def:TO}. The key of this proof is that when an anomalous $n$D topological order is realized as a boundary of an $(n+1)$D topological order, it should remain the same $n$D phase in an arbitrary neighborhood of the boundary (see \cite[Lem.\,2]{kong-wen}). However, it is unclear if the microscopic definition Def.\,\ref{def:TO} is equivalent to the macroscopic definition by the fusion and braiding properties of its topological excitations. For this reason, we would like to refer to the uniqueness of the \bulk as the {\it unique-\bulk hypothesis}. We would like to provide more evidence of this hypothesis from the macroscopic point of view.

\smallskip
A closed 2+1D topological order is given by a unitary braided fusion 1-category, which is also a unitary modular tensor category (see for example \cite{eno2002,kitaev2,dgno}). An object in this 1-category is also called an anyon. If the 3D topological order is exact, then it is given by the Drinfeld center $Z_2(\ED_2)$ of a unitary fusion 1-category $\ED_2$. The phase $Z_2(\ED_2)$ can be realized as bulk excitations in a Levin-Wen model with a gapped boundary with the boundary excitations given by $\ED_2$ \cite{kitaev-kong}. If a non-degenerate unitary braided fusion 1-category $\EC_3$ is another bulk of the same boundary theory $\ED_2$, then it is clear that the bulk-to-boundary map, a monoidal functor $L: \EC \to \ED$ factors through $Z(\ED)$ \cite{fsv}, i.e. there exists a unique braided monoidal functor $\tilde{L}: \EC \to Z_2(\ED)$ such that following diagram:
$$
\xymatrix{  \EC_3 \ar[rr]^{\exists !\,\, \tilde{L}} \ar[rd]_L & & Z_2(\ED_2) \ar[ld]^{forget}   \\  & \ED_2 & }
$$
is commutative\footnote{It implies that $Z_2(\ED_2)=\EC \boxtimes \EC'$ \cite{dgno}, where $\EC'$ is the centralizer of the full subcategory $\EC$ in $Z_2(\ED_2)$. If the boundary $\ED_2$ is trivial, i.e. $\ED_2=\one_2$, then $\EC$ must be trivial as well. This result can be generalized to higher dimensions. The notion of center is defined by its universal property (see Thm.\,\ref{thm:universal}), by which, there is a canonical map from any closed bulk $\EC_{n+1}$ of a boundary $\ED_n$ to the center $Z_n(\ED_n)$ of $\ED_n$. This map should preserve all the fusion and braidings.
If $\ED_n=\one_n$, then $Z_n(\ED_n)=\one_{n+1}$. Since $\EC_{n+1}$ is closed, all excitations except the vacuum have non-trivial braiding with other excitations. The map $\EC_{n+1} \to \one_{n+1}$, preserving all the braidings, is impossible unless $\EC_{n+1}=\one_{n+1}$.}. To show that $\tilde{L}$ must be a braided monoidal equivalence we assume that a gapped boundary of a topological order can be obtained by an anyon condensation. 
If the bulk phase is described by a non-degenerate braided fusion 1-category $\EC$, an anyon condensation is determined by a connected commutative separable algebra $A$ in $\EC$, and the condensed phase is the given by the non-degenerate braided fusion 1-category $\EC_A^{loc}$ of local $A$-modules in $\EC$, and the quasi-particles confined on the gapped domain wall by the fusion 1-category $\EC_A$ of $A$-modules in $\EC$ \cite{kong-anyon}. Moreover, we have
$$
\EC \boxtimes \overline{\EC_A^{loc}}  \simeq Z_2(\EC_A).
$$
When the condensed phase is trivial, $\EC_A^{loc}=\hilb$, we obtain $\EC=Z_2(\EC_A)$. This gives a proof that, in 2+1D, the \bulk is uniquely determined by the Drinfeld center of the gapped boundary theory.

\medskip
To generalize above results to higher dimensions, we need develop a theory of condensation in higher dimensions. A possible approach via higher category theory is outlined in Sec.\,\ref{sec:conclusion}. In particular, we expect that a lot of above results also hold in higher dimensions with unitary braided fusion 1-categories replaced by unitary braided fusion $n$-categories.

\medskip
From now on, we assume that the \bulk of an $n$D topological order $\EC_n$ is unique and is denoted by $\Z_n(\EC_n)$. As a special case of (\ref{eq:BCD}), we have
$$
P_n(\Z_n(\EC_n)) \boxtimes_{\Z_n(\EC_n)} \EC_n=\EC_n.
$$
An immediate consequence of the unique-\bulk hypothesis is the following identity 
\be \label{eq:Z2=0}
\Z_{n+1}(\Z_n(\EC_n))=\one_{n+1}, 
\ee
which leads to some interesting complexes and cohomology groups \cite{kong-wen}. For example, the Witt group of non-degenerate braided fusion categories \cite{dmno} can be understood as the 3th cohomology group of a special complex \cite{kong-wen}. The result given by Eq.\,(\ref{eq:Z2=0}) is somewhat dual to the well-known fact that the boundary of the boundary of a manifold is empty. We hope that it leads to more mathematical results in the future.


\subsection{A morphism between two topological orders} \label{sec:morphisms}

In mathematics, a morphism of a mathematical structure preserves the structure. It is reasonable that a morphism of between two $n$D topological orders should preserve the universal fusion-braiding properties of topological excitations \cite{kong-wen}. 
So we expect that such a morphism is given by a unitary $n$-functor preserving the units.

There are some drawbacks of above definition. First, its physical meaning is not evident. Secondly, it depends on the categorical formulation of a topological order. Thirdly, it is not convenient to use for our purpose. So in this work, we would like to propose a physical definition which is independent of the mathematical formulation of a topological order. Note that a unitary $n$-functor describes a universal process of mapping excitations in one phase into another. This universal mapping process should be physically realizable. So we would like to define a morphism to be a physical realization of the universal mapping process. One way to achieve it is to let these topological excitations in $\EC_n$ to pass a region of spacetime in which the universal mapping process is realized. This spacetime naturally has dimension $n+1$ or higher. What it suggests is that a morphism can be realized by physics in at least one-dimensional higher.

\medskip
We introduce a physical definition of an isomorphism between two $n$D topological orders $\EC_n$ and $\ED_n$.
\begin{defn} \label{def:isomorphism} {\rm
An isomorphism $a: \EC_n \to \ED_n$ is an invertible gapped domain wall $\EM_{n-1}$, viewed as an $(n-1)$D topological order, between the $\EC_n$-phase and the $\ED_n$-phase. We also denote the trivial domain wall $P_{n-1}(\EC_n)$ in $\EC_n$-phase by $\id_\EC$, i.e. $\id_\EC=P_{n-1}(\EC_n)$. 
}
\end{defn}

\begin{rema} {\rm
Above definition is completely physical and independent of the categorical definition of a topological order. But from the perspective of Def.\,\ref{def:cat-TO}, an isomorphism should be an invertible unitary $n$-functor. Example\,\ref{expl:inv-wall} provides the proofs of a few low dimensional cases.
}
\end{rema}

By our definition, an isomorphism $a$ itself is an $(n-1)$D domain wall. For this reason, we also use the notation $a_{n-1}:=a$ to remind readers of its dimension. The composition of two isomorphisms $\EC_n \xrightarrow{a} \ED_n \xrightarrow{b} \EE_n$ is defined by the fusion of two gapped domain walls, i.e.
\be
b\circ a := b_{n-1} \boxtimes_{\ED_n} a_{n-1}.
\ee
Notice that the identity isomorphism indeed behaves like the usual identity maps. The composition of isomorphisms is associative.

\begin{defn} \label{def:morphism-2} {\rm
A morphism $f: \EC_n \to \ED_n$ is a pair $(f_n^{(0)}, f_{n-1}^{(1)})$ such that
\bnu
\item $f_n^{(0)}$ is a gapped domain wall, viewed as an $n$D topological order, between two $(n+1)$D topological orders $\Z_n(\EC_n)$ and $\Z_n(\ED_n)$,
\item $f_{n-1}^{(1)}: f_n^{(0)} \boxtimes_{\Z_n(\EC_n)}  \EC_n \xrightarrow{\simeq} \ED_n$ is an invertible domain wall, viewed as an $(n-1)$D topological order, between $f_n^{(0)} \boxtimes_{\Z_n(\EC_n)} \EC_n$ and $\ED_n$.
\enu
}
\end{defn}

The physical configuration associated to this morphism can be depicted as follows:
\be \label{pic:morphism-2}
\raisebox{-35pt}{
\begin{picture}(140, 80)
   \put(10,5){\scalebox{2}{\includegraphics{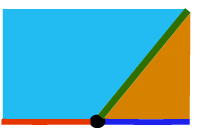}}}
   \put(10,5){
     \setlength{\unitlength}{.75pt}\put(-18,-40){
     \put(30,99)      {\footnotesize $\Z_n(\ED)$}
     \put(130, 72)     {\footnotesize $\Z_n(\EC)$}
     \put(168, 45)     {\footnotesize $\EC_n$.}
     \put(0,45)       {\footnotesize $\ED_n$}
     \put(109, 100)     {\footnotesize $f_n^{(0)}$}
     \put(66,60)        {\footnotesize $f_{n-1}^{(1)}$}
     }\setlength{\unitlength}{1pt}}
  \end{picture}}
\ee
Note that we must have $\Z_n(f_n^{(0)}) = \overline{\Z_n(\EC_n)} \boxtimes \Z_n(\ED_n)$.
For simplicity, in most parts of this work, we make $f_{n-1}^{(1)}$ implicit and use the following simplified picture instead.
\be \label{pic:morphism}
\raisebox{-5pt}{
\begin{picture}(140, 30)
   \put(0,5){\scalebox{2}{\includegraphics{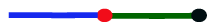}}}
   \put(-25,5){
     \setlength{\unitlength}{.75pt}\put(-18,-70){
     \put(60,89)      {\footnotesize $\Z_n(\ED)$}
     \put(150, 88)     {\footnotesize $\Z_n(\EC)$}
     \put(200, 88)     {\footnotesize $\EC_n$}
     \put(122, 90)     {\footnotesize $f_n^{(0)}$}
     }\setlength{\unitlength}{1pt}}
  \end{picture}}
\ee

\begin{rema} {\rm
In Def.\,\ref{def:morphism-2} and in (\ref{pic:morphism}), we chose to put $\Z_n(\EC_n)$ on the left side of $\EC_n$ according to the orientation convention in Levin-Wen models (see Fig.\,\ref{fig:lw-mod}).
}
\end{rema}

\begin{rema} {\rm
Although our definition of a morphism between two topological orders comes from our physical intuition, it does remind us of the notion of mapping cylinder in mathematics. This coincidence is perhaps not accidental because the low-energy effective theory of topological order is generally believed to be a TQFT. 
}
\end{rema}

\begin{expl}  {\rm
We give a few examples of morphisms that will be used later. Let $\EC_n$ and $\ED_n$ be two simple $n$D topological orders.
\bnu

\item We consider a special isomorphism $\id_{\EC_n}: \EC_n \to \EC_n$. It is nothing but the trivial domain wall $\id_{\EC_n}$ in a $\EC_n$-phase. It can be re-expressed as a morphism
$\id_{\EC_n}=(P_n(\Z_n(\EC_n)), \id_{\EC_n})$,
where $(\id_{\EC_n})_{n-1}^{(1)}$ is defined by the canonical isomorphism
$(\id_{\EC_n})_{n-1}^{(1)}: P_n(\Z_n(\EC_n)) \boxtimes_{\Z_n(\EC_n)} \EC_n = \EC_n
\xrightarrow{\id_{\EC_n}} \EC_n$.

\item Let $a: \EC_n \to \ED_n$ be an isomorphism. It can be re-expressed as a morphism: $a=(P_n(\Z_n(\EC_n)), a)$, where $a_{n-1}^{(1)}: \EC\boxtimes_{\Z_n(\EC_n)} P_n(\Z_n(\EC_n)) = \EC_n \xrightarrow{a} \ED_n$.
We show in Prop.\,\ref{prop:invertible} that isomorphisms are the same as invertible morphisms as expected.

\item There is a {\it unit morphism} $\iota_{\EC_n}: \one_n \to \EC_n$ defined by
$\iota_{\EC_n} = (\EC_n, \id_{\EC_n})$.

\item There is a bulk-to-boundary map $\Z_n(\EC_n) \to \EC_n$. This map can be defined as a morphism $r: P_n(\Z_n(\EC_n)) \to \EC_n$ as follows:
$$
r:=(P_n(\Z_n(\EC_n)) \boxtimes \EC_n , r_{n-1}^{(1)}).
$$
Note that $\Z_n(P_n(\Z_n(\EC_n)))=\Z_n(\EC_n) \boxtimes \overline{\Z_n(\EC_n)}$. The physical configuration associated to $(P_n(\Z_n(\EC_n)) \boxtimes \EC_n) \boxtimes_{\Z_n(P_n(\Z_n(\EC_n)))} P_n(\Z_n(\EC_n))$ is depicted in the following picture:
$$
 \begin{picture}(140, 65)
   \put(20,0){\scalebox{2}{\includegraphics{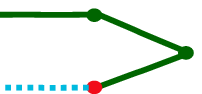}}}
   \put(20,0){
     \setlength{\unitlength}{.75pt}\put(-18,-19){
     \put(165, 50)       { \footnotesize $P_n(\Z_n(\EC))$}
     \put(80, 37)       {\footnotesize $\EC_n$}
     \put(77, 92)     {\footnotesize $ P_n(\Z_n(\EC)) $}
     \put(25, 90)     {\footnotesize $ \Z_n(\EC) $}
     \put(125, 20)     {\footnotesize $ \overline{\Z_n(\EC)} $}
     \put(125, 72)   {\footnotesize $ \Z_n(\EC) $}
     \put(15, 34)     {\footnotesize $\one_{n+1}$ }
     }\setlength{\unitlength}{1pt}}
  \end{picture}
$$

\item A bulk-to-boundary map can be enhanced to an action $\Z_n(\EC_n) \boxtimes \EC_n \to \EC_n$, which can be defined by a morphism $\rho:  P_n(\Z_n(\EC_n)) \boxtimes \EC_n  \to \EC_n$, which is given by the following pair
$$
\rho:=(P_n(\Z_n(\EC_n)) \boxtimes P_n(\Z_n(\EC_n)), \rho_{n-1}^{(1)}),
$$
where $\rho_{n-1}^{(1)}$ is given by
$$(P_n(\Z_n(\EC_n)) \boxtimes P_n(\Z_n(\EC_n))) \boxtimes_{\Z_n(\EC_n) \boxtimes \overline{\Z_n(\EC_n)} \boxtimes \Z_n(\EC_n)}  (P_n(\Z_n(\EC_n)) \boxtimes \EC_n)=\EC_n \xrightarrow{\id_{\EC_n}} \EC_n.
$$
The associated physical configuration is depicted in the following picture:
\be \label{eq:rho}
\raisebox{-35pt}{ \begin{picture}(140, 85)
   \put(0,0){\scalebox{2}{\includegraphics{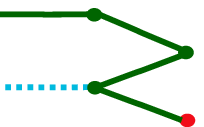}}}
   \put(0,0){
     \setlength{\unitlength}{.75pt}\put(-18,-19){
     \put(170, 75)       {\footnotesize $P_n(\Z_n(\EC))$}
     \put(170, 22)      {\footnotesize $\EC_n$}
     \put(31, 63)       {\footnotesize $P_n(\Z_n(\EC))$}
     \put(125, 47)     {\footnotesize $ \overline{\Z_n(\EC)} $}
     \put(87, 26)     { \footnotesize$ \Z_n(\EC) $}
     \put(15, 118)     {\footnotesize $ \Z_n(\EC) $}
     \put(70, 118)     {\footnotesize $ P_n(\Z_n(\EC)) $}
     \put(128, 98)   {\footnotesize $ \Z_n(\EC) $}
     \put(15, 38)     {\footnotesize $\one_{n+1}$ }
     }\setlength{\unitlength}{1pt}}
  \end{picture}}
\ee

Notice that one can recover the morphism $r$ from $\rho$ by composing $\rho$ with
$$P_n(\Z_n(\EC_n)) = P_n(\Z_n(\EC_n)) \boxtimes \one_n \xrightarrow{\id_{P_n(\Z_n(\EC_n))}\boxtimes \iota_{\EC_n}}  P_n(\Z_n(\EC_n)) \boxtimes \EC_n.$$

\item It is intuitively clearly that there should be a natural morphism $f: \EC_n \to \EB_n \boxtimes_{\EA_{n+1}} \EC_n$, where $\EA_{n+1}$ is not necessarily closed. The definition of $f$ is shown in the following picture.
$$
\begin{picture}(140, 26)
   \put(20,0){\scalebox{2}{\includegraphics{pic-def-morphism-2.eps}}}
   \put(-10,0){
     \setlength{\unitlength}{.75pt}\put(-18,-70){
     \put(60,86)      {\footnotesize $\Z_n(\ED)$}
     \put(150, 86)     {\footnotesize $\Z_n(\EC)$}
     \put(200, 88)     {\footnotesize $\EC_n$}
     \put(125, 90)     {\footnotesize $f_n^{(0)}$}
     }\setlength{\unitlength}{1pt}}
  \end{picture}
$$
where $f_n^{(0)}= \EB_n\boxtimes_{\EA_{n+1}} P_n(\Z_n(\EC_n))$ and $\ED_n:=\EB_n \boxtimes_{\EA_{n+1}} \EC_n$. This morphism can be obtained from a dimensional reduction process depicted in Figure\,\ref{fig:from-pre-to-morphism}.

\enu

}
\end{expl}

The composition of morphisms can also be defined.
\begin{defn} {\rm
Two morphisms $\EC \xrightarrow{f} \ED \xrightarrow{g} \EE$ can be composed to a morphism $g \circ f = \EC \to \EE$ defined by
\be
g \circ f:= \Big( g_n^{(0)}\boxtimes_{\Z_n(\ED)} f_n^{(0)}, \,\,\,  g_n^{(0)} \boxtimes_{\Z_n(\ED)} f_n^{(0)} \boxtimes_{\Z_n(\EC_n)} \EC_n \xrightarrow{g_{n-1}^{(1)} \circ (\id_{g_n^{(0)}} \boxtimes_{\Z_n(\ED)} f_{n-1}^{(1)})} \EE \Big).
\ee
}
\end{defn}

\begin{lemma}
The composition of morphisms is associative and unital, i.e. $h\circ (g \circ f) = (h\circ g) \circ f$ and $\id_{\EC_n} \circ f = f = f\circ \id_{\ED_n}$ for all composable $f,g,h$.
\end{lemma}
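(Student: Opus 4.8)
The plan is to check each identity componentwise on the pairs $(f_n^{(0)}, f_{n-1}^{(1)})$ of Def.\,\ref{def:morphism-2}, reducing the equality of the first ($n$D wall) components to the formal properties (\ref{eq:boxtimes-associative}) and (\ref{eq:BCD}) of the tensor product $\boxtimes$, and the equality of the second (invertible $(n-1)$D wall) components to the associativity and interchange behaviour of fusion. Throughout I would invoke the Minimal Assumption (Remark\,\ref{rema:minimal}), so that the equivalences in (\ref{eq:boxtimes-associative}) and (\ref{eq:BCD}) may be treated as honest equalities and the two sides of each identity can literally coincide rather than merely be equivalent.

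For unitality, take $f\colon \ED_n \to \EC_n$ and recall $\id_{\EC_n} = (P_n(\Z_n(\EC_n)), \id_{\EC_n})$. Unwinding the composition law, the first component of $\id_{\EC_n}\circ f$ is $P_n(\Z_n(\EC_n))\boxtimes_{\Z_n(\EC_n)} f_n^{(0)}$, which equals $f_n^{(0)}$ by the left half of (\ref{eq:BCD}) (valid since the bulk $\Z_n(\EC_n)$ is closed); under this identification the second component collapses to $f_{n-1}^{(1)}$, so $\id_{\EC_n}\circ f = f$. The right unit law $f\circ\id_{\ED_n} = f$ follows symmetrically from the right half of (\ref{eq:BCD}).

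For associativity, fix $\EC \xrightarrow{f} \ED \xrightarrow{g} \EE \xrightarrow{h} \EF$. The first components of $h\circ(g\circ f)$ and $(h\circ g)\circ f$ are $h_n^{(0)}\boxtimes_{\Z_n(\EE)}(g_n^{(0)}\boxtimes_{\Z_n(\ED)} f_n^{(0)})$ and $(h_n^{(0)}\boxtimes_{\Z_n(\EE)} g_n^{(0)})\boxtimes_{\Z_n(\ED)} f_n^{(0)}$, which agree by the associativity (\ref{eq:boxtimes-associative}) of $\boxtimes$. Expanding the second components produces two nested composites of $f_{n-1}^{(1)}, g_{n-1}^{(1)}, h_{n-1}^{(1)}$ interleaved with $\boxtimes$; these coincide once one uses (i) associativity of the composition $\circ$ of the invertible $(n-1)$D walls, (ii) the interchange law $\id_w\boxtimes(u\circ v) = (\id_w\boxtimes u)\circ(\id_w\boxtimes v)$ for fusing a fixed $n$D wall $w$ on the left, and (iii) the identity $\id_w\boxtimes(\id_{w'}\boxtimes x) = \id_{w\boxtimes w'}\boxtimes x$. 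Property (i) is exactly the associativity of composition of isomorphisms recorded after Def.\,\ref{def:isomorphism}, while (ii) and (iii) express that fusion along the wall-stacking direction commutes with fusion in the transverse (intermediate-order) direction.

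I expect the main obstacle to be establishing (ii) and (iii) together with the coherence of the associators tacitly used in (\ref{eq:boxtimes-associative}): physically these are the evident statements that the order of fusing mutually transverse stacks of gapped walls is immaterial, but making them rigorous is precisely the coherence problem for the higher tensor product, which is only fully under control in low dimensions (cf.\ the remark following (\ref{eq:BCD})). At the present level of rigor I would argue (ii) and (iii) directly from the geometry of (\ref{pic:morphism-2}): both bracketings of the triple composite describe the same disk-stratified spacetime obtained by stacking the three mapping-cylinder regions of $f$, $g$, $h$, so the two sides differ only by a relabelling of gluing data and hence define the same morphism.
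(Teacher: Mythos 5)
Your proposal is correct and follows essentially the same route as the paper, which offers no written proof at all: it asserts the lemma and remarks only that the identities hold on the nose under the Minimal Assumption (Remark\,\ref{rema:minimal}), with the intended justification being exactly your componentwise check via \eqref{eq:boxtimes-associative} and \eqref{eq:BCD}. Your explicit bookkeeping of the second components (associativity of wall composition, the interchange law, and $\id_w\boxtimes\id_{w'}=\id_{w\boxtimes w'}$) simply spells out details the paper treats as tautological at its stated level of rigor.
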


In above Lemma, the associativity and the unital properties hold on the nose. It amounts to say that we assume the Minimal Assumption on the categorical side. In this case, we obtain a 1-category $\TO_n^{fun}$ of simple $n$D topological orders with 1-morphisms defined by Def.\,\ref{def:morphism-2}. Notice also that the trivial phase $\one_n$ is an initial object in $\TO_n^{fun}$, and the map $P_{n-1}(-)$ introduced in Sec.\,\ref{sec:dim-red} defines a functor $P_{n-1}: \TO_n^{fun} \to \TO_{n-1}^{fun}$.


\begin{rema}  \label{rema:unique-f0} {\rm
A morphism can be viewed as a physical realization of a particular universal process of mapping excitations in $\EC_n$ to $\ED_n$. We believe that the set of equivalence classes of physical realizations maps surjectively onto the set of universal mapping processes. Logically, there is no problem if the map is not one-to-one.   It is exactly the case if we consider more general physical realizations introduced in Sec.\,\ref{sec:weak-n-morphism}. On the other hand, the notion of a morphism is very special among all physical realizations (see Prop.\,\ref{prop:univ-weak-morphism}). Ideally, we hope that the map is bijective. This amounts to show that the identity $\EE_n\boxtimes_{\Z_n(\EC_n)} \EC_n \simeq \ED_n$ fixes $\EE_n$ up to isomorphisms. In this Remark, we would like to argue physically that this is reasonable. For simplicity, we assume that $\EC_n$ and $\ED_n$ are simple. The idea is to show that the right $n$D boundary $\EC_n$ of the physical configuration $\EE_n\boxtimes_{\Z_n(\EC_n)} \EC_n$ can be replaced by $\Z_n(\EC_n)$ by a physical process. Consider the physical meaning of the expression $\EC \boxtimes_{\EC_n\boxtimes \EC_n^\op} \EC^\op$. Note that the subscript $\EC_n\boxtimes \EC_n^\op$ is a two-layered $n$D systems and depicted as the upper/lower semi-circle in the first figure in Fig.\,\ref{fig:unique-f0}. If the expression $\EC \boxtimes_{\EC\boxtimes \EC^\op} \EC^\op$ indeed has a physical meaning, then $\EC$ must be viewed as two $(n-1)$D phases (recall Remark\,\ref{expl:Pn}) and depicted as the two dark ``points" in the first figure in Fig.\,\ref{fig:unique-f0}.
Then the expression $\EC \boxtimes_{\EC\boxtimes \EC^\op} \EC$ can be viewed as the boundary of the hole in the first figure in Fig.\,\ref{fig:unique-f0}.
\begin{figure}[bt]
$$
\raisebox{-3.5em}{\begin{picture}(100, 85)
   \put(-20,-20){\scalebox{2}{\includegraphics{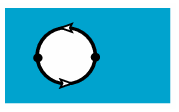}}}
   \put(-20,-20){
     \setlength{\unitlength}{.75pt}\put(-18,-40){
     \put(58, 126)     { \scriptsize $ \EC_n $}
     \put(25, 115)     { \scriptsize $\EC_n$}
     \put(92, 115)     { \scriptsize $\EC_n^\op$}
     \put(100, 87)     { \scriptsize $\Z_n(\EC_n)$}
     }\setlength{\unitlength}{1pt}}
  \end{picture}}  \nonumber
\quad\quad\quad\quad\quad\quad\quad\quad
\raisebox{-3.5em}{\begin{picture}(100, 85)
   \put(10,0){\scalebox{2}{\includegraphics{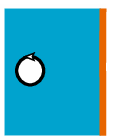}}}
   \put(10,0){
     \setlength{\unitlength}{.75pt}\put(-18,-40){
     \put(45, 125)     { \scriptsize $ \Z_n(\EC_n) $}
     \put(65, 94)     { \scriptsize $\EC_n$}
     \put(-2, 91)     { \scriptsize $\EE_n$}
     }\setlength{\unitlength}{1pt}}
  \end{picture}}
$$
$$
\EC_n \boxtimes_{\EC_n\boxtimes \EC_n^\op} \EC_n^\op
 \quad\quad\quad\quad\quad\quad\quad
\quad\quad\quad  \EE_n \boxtimes_{\Z_n(\EC_n)}  \EC_n \boxtimes_{\EC_n\boxtimes \EC_n^\op} \EC_n^\op 
$$
\caption{Above two figures illustrate the physical interpretations of $\EC \boxtimes_{\EC\boxtimes \EC^\op} \EC^\op$ and $\EE_n \boxtimes_{\Z_n(\EC_n)} \EC_n \boxtimes_{\EC\boxtimes \EC^\op} \EC_n^\op$, respectively. They are explained in Remark\,\ref{rema:unique-f0}.}
\label{fig:unique-f0}
\end{figure}
Moreover, this hole embedded in the $\Z_n(\EC_n)$-phase should be viewed as an excitation of codimension 2 in the $\Z_n(\EC_n)$-phase and can be absorbed by the boundary $\EE_n$ (see the second figure in Fig.\,\ref{fig:unique-f0}). Equivalently, regarding a neighborhood of the hole as the $(n-2)$D topological order $P_{n-2}(\Z_n(\EC_n))=\Z_n(\EC_n)$ (recall Remark\,\ref{expl:Pn}), above arguments suggest the following identities:
\be \label{eq:unique-f0}
\ED_n \boxtimes_{\EC_n\boxtimes \EC_n^\op} \EC_n \simeq \EE_n \boxtimes_{\Z_n(\EC_n)} \EC_n \boxtimes_{\EC_n\boxtimes \EC_n^\op} \EC_n^\op  \simeq \EE_n \boxtimes_{\Z_n(\EC_n)}  \Z_n(\EC_n)  = \EE_n,
\ee
where we have used $\EC_n \simeq \EC_n^\op$ as $\EC_n$-bimodules due to the unitarity. This implies the uniqueness of $\EE_n$. The evidence of above arguments can be found in the case $n=2$. When $n=2$, $\EC_2$ is a unitary fusion 1-category, and $\Z_2(\EC_2)$ is the Drinfeld center of $\EC_2$, and we have $\EC_2^\op \simeq \EC_2^\rev$ as unitary fusion 1-categories. Then the category $\EC_2 \boxtimes_{\EC\boxtimes \EC^\rev} \EC_2$, usually called Hochschild homology, coincides with the Drinfeld center $Z_2(\EC_2)=\EC_2^\op \boxtimes_{\EC\boxtimes \EC_2^\rev} \EC$ of $\EC_2$. So the identity (\ref{eq:unique-f0}) holds precisely in this case. More details will be given elsewhere. We expect that the same result holds for unitary fusion $n$-categories. 
}
\end{rema}


The following result follows from Def.\,\ref{def:morphism-2} immediately.
\begin{thm}  \label{prop:factorize}
Let $f: \EC_n \to \ED_n$ be a morphism. If $\EC_n$ is closed, then $\ED_n = \EC_n \boxtimes \EE_n$ for some $n$D topological order $\EE_n$. If $\ED_n$ is also closed, so is $\EE_n$.
\end{thm}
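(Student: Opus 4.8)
The plan is to unwind Definition~\ref{def:morphism-2} in the special case where $\EC_n$ is closed. By the macroscopic notion of closedness (Def.~\ref{def:cat-TO}), closedness of $\EC_n$ means precisely that its \bulk is trivial, i.e. $\Z_n(\EC_n)=\one_{n+1}$. Write the given morphism as the pair $f=(f_n^{(0)}, f_{n-1}^{(1)})$, and set $\EE_n:=f_n^{(0)}$, the domain wall between $\Z_n(\EC_n)$ and $\Z_n(\ED_n)$ regarded as an $n$D topological order. The whole statement should then follow by formal substitution.

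First I would observe that once $\Z_n(\EC_n)=\one_{n+1}$, the fiber product $\boxtimes_{\Z_n(\EC_n)}$ collapses to ordinary stacking $\boxtimes=\boxtimes_{\one_{n+1}}$ (recall the convention stated just after Eq.~(\ref{eq:AxB})). Hence the invertible domain wall supplied by the second datum of the morphism becomes an isomorphism $\EE_n\boxtimes\EC_n = f_n^{(0)}\boxtimes_{\Z_n(\EC_n)}\EC_n \xrightarrow{\ \simeq\ } \ED_n$. Since stacking of $n$D orders is symmetric, this yields $\ED_n=\EC_n\boxtimes\EE_n$ (an honest equality under the Minimal Assumption of Remark~\ref{rema:minimal}), which is the first claim.

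For the second claim, suppose in addition that $\ED_n$ is closed, so $\Z_n(\ED_n)=\one_{n+1}$. Here I would invoke the identity $\Z_n(f_n^{(0)})=\overline{\Z_n(\EC_n)}\boxtimes\Z_n(\ED_n)$ recorded immediately after Def.~\ref{def:morphism-2} (the folding formula for the \bulk of a domain wall). Since $\Z_n(\EC_n)=\one_{n+1}$ forces $\overline{\Z_n(\EC_n)}=\one_{n+1}$, we obtain $\Z_n(\EE_n)=\one_{n+1}\boxtimes\one_{n+1}=\one_{n+1}$, that is, $\EE_n$ is closed, as desired.

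The computation is essentially formal once the definitions are expanded, so there is no genuine analytic obstacle: the only conceptual inputs are the collapse $\boxtimes_{\one_{n+1}}=\boxtimes$ and the \bulk formula for the wall $f_n^{(0)}$, both already available. The one point that warrants care is the orientation convention of Def.~\ref{def:morphism-2} (on which side of $\EC_n$ the phase $\Z_n(\EC_n)$ is placed) and the unitarity identification $\EC_n\simeq\EC_n^\op$ should it be needed to commute the factors; neither, however, affects the final equality $\ED_n=\EC_n\boxtimes\EE_n$ nor the closedness conclusion.
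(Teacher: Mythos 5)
Your proof is correct and is precisely the argument the paper has in mind: the paper offers no separate proof, stating only that the result ``follows from Def.\,\ref{def:morphism-2} immediately,'' and your unwinding --- taking $\EE_n:=f_n^{(0)}$, collapsing $\boxtimes_{\Z_n(\EC_n)}=\boxtimes_{\one_{n+1}}=\boxtimes$ when $\Z_n(\EC_n)=\one_{n+1}$, and invoking the recorded folding identity $\Z_n(f_n^{(0)})=\overline{\Z_n(\EC_n)}\boxtimes\Z_n(\ED_n)$ for the closedness of $\EE_n$ --- is exactly that immediate argument, with the Minimal Assumption correctly cited to upgrade the isomorphism $f_{n-1}^{(1)}$ to the stated equality.
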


\begin{rema} \label{rema:matrix-alg} {\rm
For simple 1D topological orders, Thm\,\ref{prop:factorize} reproduces a classical result. More precisely, if there is an algebraic homomorphism $f: M_{m\times m} \to M_{n\times n}$, then we must have $M_{n\times n} \simeq M_{m\times m} \otimes_\Cb M_{k\times k}$, where $n=mk$ and the subscript $\Cb$ of $\otimes$ should be viewed as the usual center of the matrix algebra $M_{m\times m}$.
}
\end{rema}

\begin{rema} {\rm
When $n=3$, a closed 3D topological order can be described by a non-degenerate unitary braided fusion category $\EC_3$. It is reasonable that a morphism between two such topological orders $\EC_3$ and $\ED_3$ is equivalent to a braided monoidal functor $f: \EC_3 \to \ED_3$ preserving the units. It is known that $f$ is fully-faithful and $\ED_3 \simeq \EC_3\boxtimes \EC_3'$, where $\EC_3'$ is the centralizer of $\EC_3$, i.e. the full subcategory of $\ED_3$ consisting of all objects $x$ such that $c_{y,x}\circ c_{x,y} = \id_{x\otimes y}$ for all $y\in \ED_3$ \cite{mueger1,dgno}. 
}\end{rema}

\void{
\begin{rema} {\rm
If a 3D topological orders $\EC_3$ is given by a (not necessarily non-degenerate) braided fusion category, it is known that the center of $\Z_3(\EC_3)$ is given by the centralizer $\EC_3'$ of $\EC$, which is defined as the full subcategory of $\EC$ consisting of object $x$ such that
the composed isomorphism $x\otimes y \xrightarrow{c_{x,y}} y\otimes x \xrightarrow{c_{y,x}} x\otimes y$ is the identity map $\id_{x\otimes y}$. It will be interesting to check if Def.\,\ref{def:morphism-2} is compatible with the usual notion of a braided monoidal functor.
}
\end{rema}
}

\subsection{Higher morphisms and the $(n+1,1)$-category $\TO_n^{fun}$} 
\label{sec:higher-morphisms}
One can also introduce the notion of a 2-isomorphism between two morphisms. 
We need it only in 
Sec.\,\ref{sec:universal-higher-morphism}. 

\begin{defn} \label{def:phi}  {\rm
Let $f,g: \EC_n \to \ED_n$ be two morphisms between two simple $n$D topological orders $\EC_n$ and $\ED_n$. A 2-isomorphism $\phi: f \Rightarrow g$ is a pair $(\phi^{(0)}, \phi^{(1)})$ such that
\bnu
\item $\phi_{n-1}^{(0)}: f_n^{(0)} \to g_n^{(0)}$ is an $(n-1)$D invertible gapped domain wall between $f_n^{(0)}$ and $g_n^{(0)}$, both of which are domain walls between $\Z_n(\EC_n)$ and $\Z_n(\ED_n)$.

\item $\phi_{n-2}^{(1)}: g_{n-1}^{(1)} \circ ( \phi_{n-1}^{(0)} \boxtimes_{\Z_n(\EC_n)} \id_{\EC_n}) \to f_{n-1}^{(1)}$ is an $(n-2)$D invertible gapped domain wall between two associated $(n-1)$D domain walls.
\be  \label{diag:phi-(1)}
\raisebox{1em}{\xymatrix@R=0.2em{
& g_n^{(0)}\boxtimes_{\Z_n(\EC_n)}  \EC_n \ar[ddr]^{g_{n-1}^{(1)}} &  \\
& \Downarrow \phi_{n-2}^{(1)} &  \\
f_n^{(0)} \boxtimes_{\Z_n(\EC_n)} \EC_n \ar[uur]^{\hspace{-2cm} \phi_{n-1}^{(0)}\boxtimes_{\Z_n(\EC_n)} \id_{\EC_n}}  \ar[rr]_{f_{n-1}^{(1)}}  & &   \ED_n
}}
\ee
\enu
In this case, we denote $f \simeq g$.
}
\end{defn}


\begin{expl}
Every morphism $f:\one_n\to\EC_n$ is canonically 2-isomorphic to the unit morphism. Actually, the pair $((f^{(1)}_{n-1})^{-1},\id_{\id_{\EC_n}})$ defines a 2-isomorphism $\iota_f: \iota_{\EC_n} \Rightarrow f$.
\end{expl}


\begin{rema} {\rm
In general, two morphisms $f$ and $g$ are not isomorphic, even if $f_n^{(0)} \simeq g_n^{(0)}$. However, if $f_n^{(0)} \simeq g_n^{(0)}$, then there always exists $f_{n-1}^{(1)}$ and $g_{n-1}^{(1)}$ such that the diagram (\ref{diag:phi-(1)}) is commutative, i.e. $f\simeq g$. For example, one can simply define $f_{n-1}^{(1)}:= g_{n-1}^{(1)} \circ (\phi_{n-1}^{(0)} \boxtimes_{\Z_n(\EC_n)} \id_{\EC_n})$.
}
\end{rema}

\begin{prop}  \label{prop:invertible}
A morphism $f:\EC_n\to\ED_n$ is identical to an isomorphism if and only if there is a morphism $g:\ED_n\to\EC_n$ such that $g\circ f\simeq \id_{\EC_n}$ and $f\circ g\simeq \id_{\ED_n}$.
\end{prop}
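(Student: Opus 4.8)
The plan is to prove the two implications separately, in both cases reducing everything to the invertibility of the codimension-one ``bulk-wall'' component $f_n^{(0)}$ and to the definition of an invertible domain wall from Section~\ref{sec:wall}. For the ``only if'' direction, suppose $f$ is the morphism attached to an isomorphism, so $f=(P_n(\Z_n(\EC_n)),a)$ with $a$ an invertible $(n-1)$D wall between $\EC_n$ and $\ED_n$; in particular $\Z_n(\EC_n)=\Z_n(\ED_n)$. I would take $g:=(P_n(\Z_n(\EC_n)),\bar a)$, the morphism attached to the reverse wall $\bar a$, which by Section~\ref{sec:wall} satisfies $a\boxtimes_{\ED_n}\bar a\simeq\id_{\EC_n}$ and $\bar a\boxtimes_{\EC_n}a\simeq\id_{\ED_n}$. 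Feeding these into the composition formula for morphisms, the bulk-wall component of $g\circ f$ collapses to $P_n(\Z_n(\EC_n))$ by the identity~(\ref{eq:BCD}), while its boundary component is the round trip of $a$ and $\bar a$, hence $\simeq\id_{\EC_n}$; this is precisely the data of a $2$-isomorphism $g\circ f\simeq\id_{\EC_n}$ in the sense of Definition~\ref{def:phi}, and symmetrically $f\circ g\simeq\id_{\ED_n}$. This direction is a routine unwinding of the definitions.

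For the ``if'' direction, suppose a weak inverse $g$ exists. First I would unpack the two $2$-isomorphisms using Definition~\ref{def:phi}: their $\phi^{(0)}$-components are invertible $(n-1)$D walls $g_n^{(0)}\boxtimes_{\Z_n(\ED_n)}f_n^{(0)}\simeq P_n(\Z_n(\EC_n))$ and $f_n^{(0)}\boxtimes_{\Z_n(\EC_n)}g_n^{(0)}\simeq P_n(\Z_n(\ED_n))$. By the very definition of an invertible domain wall (Section~\ref{sec:wall}), these two relations say exactly that $f_n^{(0)}$ is an invertible codimension-one wall between the closed bulks $\Z_n(\EC_n)$ and $\Z_n(\ED_n)$, with inverse $g_n^{(0)}$; in particular the two bulks are isomorphic and, under the Minimal Assumption (Remark~\ref{rema:minimal}), identified. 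The remaining task is to promote this, together with the invertible boundary wall $f_{n-1}^{(1)}\colon f_n^{(0)}\boxtimes_{\Z_n(\EC_n)}\EC_n\simeq\ED_n$, into an invertible $(n-1)$D wall directly between $\EC_n$ and $\ED_n$, i.e.\ an isomorphism in the sense of Definition~\ref{def:isomorphism}, and then to check that $f$ coincides with the morphism attached to it. Concretely, I would show that the process ``fuse with $f_n^{(0)}$, then apply $f_{n-1}^{(1)}$'' is invertible with inverse ``fuse with $g_n^{(0)}$, then apply $g_{n-1}^{(1)}$'', the two round trips being trivialized by the bulk-wall invertibility above together with~(\ref{eq:BCD}); this is the algebraic content of the tunneling picture in Figure~\ref{fig:tunneling}.

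The main obstacle is exactly this last step: upgrading the invertibility of the bulk-wall datum $f_n^{(0)}$ to a genuine isomorphism of the boundary theories. The delicate point is that an invertible bulk wall need not be the trivial wall, so one must carefully track the identification of $\Z_n(\EC_n)$ with $\Z_n(\ED_n)$ that $f_n^{(0)}$ induces, and verify that after this identification the boundary component $f_{n-1}^{(1)}$ really does realize an invertible $(n-1)$D wall between $\EC_n$ and $\ED_n$ whose associated morphism reproduces $f$ up to $2$-isomorphism. This leans on the unique-bulk hypothesis and on promoting equivalences to equalities via the Minimal Assumption. The lower-dimensional computations of Example~\ref{expl:inv-wall} (matrix algebras for $n=1$ and Morita/monoidal equivalences for $n=2$) serve as sanity checks that the general argument must reproduce.
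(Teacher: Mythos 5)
Your proposal is correct and follows essentially the same route as the paper: the paper also treats the ``only if'' direction as a routine unwinding (an isomorphism $a$ re-expressed as the morphism $(P_n(\Z_n(\EC_n)),a)$ has the evident weak inverse), and for sufficiency it likewise extracts the invertibility of $f^{(0)}$ from the two $2$-isomorphisms, uses $f^{(0)}$ to identify $\Z_n(\ED_n)$ with $\Z_n(\EC_n)$ (``just a matter of convention''), and concludes that $f$ is nothing but the isomorphism $f^{(1)}$. The only divergence is one of emphasis: the final step you flag as the ``main obstacle'' is immediate in the paper, since Def.\,\ref{def:morphism-2} already requires $f_{n-1}^{(1)}$ to be an invertible wall, so once the bulks are identified via $f^{(0)}$ no further round-trip verification is needed.
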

\pf
Only the sufficiency needs to be proved. Suppose $g\circ f \simeq \id_{\EC_n}$ and $f\circ g \simeq \id_{\ED_n}$. Then $f^{(0)}$ is an invertible domain wall between $\Z_n(\ED_n)$ and $\Z_n(\EC_n)$. It is just a matter of convention how we identify $\Z_n(\ED_n)$ with $\Z_n(\EC_n)$. Using $f^{(0)}$, we can identify $\Z_n(\ED_n)$ with $\Z_n(\EC_n)$, then $f^{(1)}$ can be viewed as a domain wall between $\ED_n$ and $\EC_n$ (recall Def.\,\ref{def:morita}). Moreover, it is invertible. So $f$ is nothing but the isomorphism $f^{(1)}$.
\epf


By induction on dimensions, we define a $k$-isomorphism $\phi\xrightarrow{(k)}\psi$ between two $(k-1)$-isomorphisms $\phi,\psi:f\xrightarrow{(k-1)}g$ as a pair $(\Phi_{n-k+1}^{(0)}, \Phi_{n-k}^{(1)})$ of $(n-k+1)$D and $(n-k)$D invertible gapped domain walls between associated domain walls. In this way, we obtain an $(n+1,1)$-category of $n$D topological orders, denoted still by $\TO_n^{fun}$. An $(m,1)$-category is an $m$-category with only invertible $k$-morphisms for $k>1$. 


\section{The universal property of the \bulk}

In this section, we prove that the \bulk satisfies the same universal property as that of the center (of an algebra) in mathematics. Namely, \bulk = center. It is the main result of this paper, and is completely independent of the precise mathematical formulation of a topological order. We also explain why the universal property leads to the usual notion of the center, at last, we show that the morphism defined in Def.\,\ref{def:morphism-2} coincides with the usual notions of morphisms in mathematics by assuming \bulk = center.

\subsection{The universal property of the \bulk} \label{sec:universal}


\medskip
The action $\rho: P_n(\Z_n(\EC_n)) \boxtimes \EC_n \to \EC_n$ is unital, i.e. $\rho \circ (\iota_{\Z_n(\EC_n)} \boxtimes \id_{\EC_n}) = \id_{\EC_n}$, which is equivalent to the commutativity of the following diagram:
\be  \label{diag:univ-property-1}
\raisebox{20pt}{ \xymatrix{
& P_n(\Z_n(\EC_n)) \boxtimes \EC_n \ar[rd]^\rho &  \\
\EC_n \ar[rr]^{\id_{\EC_n}} \ar[ur]^{\iota_{P_n(\Z_n(\EC_n))} \boxtimes \id_{\EC_n}} & & \EC_n\, .
}}
\ee
We would like to show that the pair $(P_n(\Z_n(\EC_n)), \rho)$ satisfies the universal property of center that determines the pair up to unique isomorphism. As a consequence, $\Z_n(\EC_n)=(P_n(\Z_n(\EC_n)), \rho)$.

\begin{thm} \label{thm:universal}
The pair $(P_n(\Z_n(\EC_n)), \rho)$ satisfies the following universal property of center. Let $(\EX_n, f)$ be another such a pair. In other words, $\EX_n$ is an $n$D topological order and $f: \EX_n \boxtimes \EC_n \to \EC_n$ a morphism such that the following diagram
\be  \label{diag:X-n-0}
\xymatrix{
& \EX_n \boxtimes \EC_n \ar[rd]^f &  \\
\EC_n \ar[rr]^{\id_{\EC_n}} \ar[ur]^{\iota_{\EX_n} \boxtimes \id_{\EC_n}} & & \EC_n
}
\ee
is commutative. Then there is a unique morphism $\underline{f}: \EX_n \to P_n(\Z_n(\EC_n))$ such that the following two diagrams
\be \label{diag:univ-prop-0}
\xymatrix{
& \EX_n \ar[rd]^{\underline{f}} &  \\
\one_n \ar[rr]^-{\iota_{P_n(\Z_n(\EC_n))}} \ar[ur]^{\iota_{\EX_n}} & & P_n(\Z_n(\EC_n))
}
\quad\quad
\xymatrix{
& P_n(\Z_n(\EC_n)) \boxtimes \EC_n \ar[rd]^\rho &  \\
\EX_n \boxtimes \EC_n \ar[rr]^f \ar[ur]^{\exists !\, \underline{f} \boxtimes \id_{\EC_n}} & & \EC_n
}
\ee
are commutative. The notation ``$\exists !$" means ``exists a unique". 
\end{thm}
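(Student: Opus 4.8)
The plan is to mirror the classical proof that the $E_1$-center is the terminal central action, with the tensor product $\boxtimes$ playing the role of $\otimes$ and the data of a morphism of topological orders (which by Def.\,\ref{def:morphism-2} remembers both bulks) encoding ``centrality''; the unital hypothesis (\ref{diag:X-n-0}) plays the role of the requirement that the action land centrally. Concretely, given $f=(f_n^{(0)},f_{n-1}^{(1)}):\EX_n\boxtimes\EC_n\to\EC_n$, Def.\,\ref{def:morphism-2} says $f_n^{(0)}$ is a wall between $\Z_n(\EX_n)\boxtimes\Z_n(\EC_n)$ and $\Z_n(\EC_n)$, with $f_{n-1}^{(1)}$ identifying $f_n^{(0)}\boxtimes_{\Z_n(\EX_n)\boxtimes\Z_n(\EC_n)}(\EX_n\boxtimes\EC_n)\simeq\EC_n$. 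I would build $\underline f_n^{(0)}$ by folding the $\Z_n(\EC_n)$-factor of the source across to the target side of $f_n^{(0)}$; bending a phase across a wall reverses its orientation, so this turns $\Z_n(\EC_n)$ into $\overline{\Z_n(\EC_n)}$ and produces a wall between $\Z_n(\EX_n)$ and $\Z_n(\EC_n)\boxtimes\overline{\Z_n(\EC_n)}=\Z_n(P_n(\Z_n(\EC_n)))$, which is exactly the bulk datum a morphism $\EX_n\to P_n(\Z_n(\EC_n))$ must carry. Applying the same folding to $f_{n-1}^{(1)}$ and invoking the identity $\EC_n\boxtimes_{\EC_n\boxtimes\EC_n^\op}\EC_n\simeq P_n(\Z_n(\EC_n))$ from Remark\,\ref{rema:unique-f0} yields the invertible wall $\underline f_{n-1}^{(1)}:\underline f_n^{(0)}\boxtimes_{\Z_n(\EX_n)}\EX_n\simeq P_n(\Z_n(\EC_n))$. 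It is the unital assumption (\ref{diag:X-n-0}) that forces the folded wall to be compatible with the braiding-forgotten center structure, i.e. to land in $P_n(\Z_n(\EC_n))$ rather than a larger phase.

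For commutativity of (\ref{diag:univ-prop-0}), the left triangle records unit-compatibility and is a direct transcription of hypothesis (\ref{diag:X-n-0}) under the folding (it says the folded wall restricts to the trivial wall $P_n(\Z_n(\EC_n))$ when $\EX_n$ is replaced by $\one_n$). For the right square I would simply reverse the folding: unfolding $\underline f$ and composing with the action $\rho$ as defined in (\ref{eq:rho}) reproduces precisely the physical configuration (\ref{pic:morphism-2}) of $f$, using $P_n(\Z_n(\EC_n))\boxtimes_{\Z_n(\EC_n)}\EC_n=\EC_n$ from (\ref{eq:BCD}) to collapse the inner bulk layer; hence $\rho\circ(\underline f\boxtimes\id_{\EC_n})=f$.

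Folding and unfolding set up a bijection between the $(0)$-data of any factorization $\underline g$ and the wall $f_n^{(0)}$ underlying $f$. Since $f_n^{(0)}$ is determined by $f$ itself (Remark\,\ref{rema:unique-f0}), any $\underline g$ satisfying the right square of (\ref{diag:univ-prop-0}) must have $\underline g_n^{(0)}\simeq\underline f_n^{(0)}$, after which the square pins down $\underline g_{n-1}^{(1)}$, giving $\underline g=\underline f$ and proving uniqueness. I expect the genuine difficulty to be exactly the two points where Remark\,\ref{rema:unique-f0} enters: the identity $\EC_n\boxtimes_{\EC_n\boxtimes\EC_n^\op}\EC_n\simeq P_n(\Z_n(\EC_n))$ and the determinacy of $f_n^{(0)}$ from $f$ are established rigorously only for small $n$ (for $n=2$ the former is the statement that Hochschild homology computes the Drinfeld center), so in general the whole argument remains physical. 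The real crux is making mathematically precise that the folded wall ``lands in the center,'' which is where the proof depends essentially on the unique-\bulk hypothesis rather than on a formal categorical manipulation.
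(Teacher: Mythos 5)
Your overall strategy is the paper's: reinterpret the wall $f_n^{(0)}$, via folding the $\Z_n(\EC_n)$-factor across (so that a wall between $\Z_n(\EX_n)\boxtimes\Z_n(\EC_n)$ and $\Z_n(\EC_n)$ becomes a wall between $\Z_n(\EX_n)$ and $\Z_n(\EC_n)\boxtimes\overline{\Z_n(\EC_n)}=\Z_n(P_n(\Z_n(\EC_n)))$), and verify the right square of \eqref{diag:univ-prop-0} by a topological deformation of the configuration for $\rho\circ(\underline{f}\boxtimes\id_{\EC_n})$, collapsing the inner bulk with \eqref{eq:BCD}. However, you import Remark\,\ref{rema:unique-f0} at two points where the paper deliberately avoids it, and this matters. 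For existence, the paper never needs the Hochschild-type identity $\EC_n\boxtimes_{\EC_n\boxtimes\EC_n^\op}\EC_n\simeq P_n(\Z_n(\EC_n))$ to build $\underline{f}_{n-1}^{(1)}$: under the Minimal Assumption, the commutativity of \eqref{diag:X-n-0} read at the level of components \emph{is} the on-the-nose identity $\EX_n\boxtimes_{\Z_n(\EX_n)}f_n^{(0)}=P_n(\Z_n(\EC_n))$ (this is what the picture for $f\circ(\iota_{\EX_n}\boxtimes\id_{\EC_n})$ shows), so one simply sets $\underline{f}:=(f_n^{(0)},\id_{P_n(\Z_n(\EC_n))})$. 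Your detour through the conjectural identity replaces a hypothesis-given equality by an input that the paper itself only establishes for small $n$, and it leaves your $\underline{f}^{(1)}$ as an unspecified isomorphism rather than the identity wall, which you then need anyway to make the left triangle of \eqref{diag:univ-prop-0} commute strictly.

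The more serious gap is in your uniqueness argument. You derive $\underline{g}_n^{(0)}\simeq\underline{f}_n^{(0)}$ from the determinacy claim of Remark\,\ref{rema:unique-f0} (that $\EE_n\boxtimes_{\Z_n(\EC_n)}\EC_n\simeq\ED_n$ fixes $\EE_n$), but the paper explicitly flags that claim as a physically motivated expectation, verified rigorously only in low dimensions --- and even granting it, it yields uniqueness only up to isomorphism, not the on-the-nose uniqueness $g=\underline{f}$ the theorem asserts. The paper's proof needs no such input: from the left triangle $\iota_{P_n(\Z_n(\EC_n))}=g\circ\iota_{\EX_n}$ one reads off $\EX_n\boxtimes_{\Z_n(\EX_n)}g^{(0)}=P_n(\Z_n(\EC_n))$ \emph{and} $g^{(1)}=\id_{P_n(\Z_n(\EC_n))}$, while the right square $\rho\circ(g\boxtimes\id_{\EC_n})=f$ gives $g^{(0)}=f^{(0)}$; both components of $g$ are thereby pinned directly. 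Note also that your proposal never uses the left triangle of \eqref{diag:univ-prop-0} and asserts that ``the square pins down $\underline{g}_{n-1}^{(1)}$''; this elides exactly the subtlety the paper warns about (two morphisms with the same $(0)$-datum need not coincide --- the $(1)$-datum is extra data), and it is precisely the unit-compatibility triangle that eliminates this freedom. So while your existence step is salvageable and close to the paper's, the uniqueness step as written rests on an unproven remark and omits the ingredient that actually forces strict uniqueness.
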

\pf
The physical configuration associated to the composed morphism $f\circ (\iota_{\EX_n} \boxtimes \id_{\EC_n})$ is depicted in the following picture.
\be \label{eq:univ-proof-2}
\raisebox{-20pt}{
 \begin{picture}(140, 65)
   \put(0,15){\scalebox{2}{\includegraphics{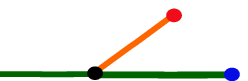}}}
   \put(0,15){
     \setlength{\unitlength}{.75pt}\put(-18,-19){
     \put(205, 32)       {\footnotesize $\EC_n$}
     \put(85, 60)       {\footnotesize $\Z_n(\EX_n)$}
     \put(80, 6)      {\footnotesize $f_n^{(0)}$ }
     \put(0, 10)     {\footnotesize $ \Z_n(\EC_n) $}
     \put(160, 78)     {\footnotesize $ \EX_n $}

     \put(135, 35)     {\footnotesize $\Z_n(\EC_n)$ }
     }\setlength{\unitlength}{1pt}}
  \end{picture}}
\ee
So, the commutativity of \eqref{diag:X-n-0} amounts to the identity $\EX_n \boxtimes_{\Z_n(\EX_n)} f^{(0)} = \id_{\EC_n}^{(0)}=P_n(\Z_n(\EC_n))$.
Notice that the pair $(f^{(0)}, \id_{P_n(\Z_n(\EC_n))})$ defines a morphism $\underline{f}: \EX_n \to P_n(\Z_n(\EC_n))$. Clearly, we have $\iota_{P_n(\Z_n(\EC_n))} = \underline{f} \circ \iota_{\EX_n}$.

Now we draw the physical configuration associated to the composed morphism $\rho \circ (\underline{f} \boxtimes \id_{\EC_n})$ as follows:
\be \label{eq:univ-proof-2}
\raisebox{-45pt}{ \begin{picture}(140, 95)
   \put(0,15){\scalebox{2}{\includegraphics{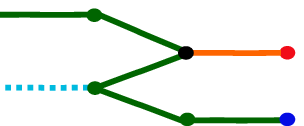}}}
   \put(0,15){
     \setlength{\unitlength}{.75pt}\put(-18,-19){
     \put(250, 78)       {\footnotesize $\EX_n$}
     \put(250, 25)      {\footnotesize $\EC_n$}
     \put(160,60)  {\footnotesize $f_n^{(0)}$ }
     \put(91, 25)       {\footnotesize $\Z_n(\EC_n)$}
     \put(80, 68)     {\footnotesize $ \overline{\Z_n(\EC_n)} $}
     \put(180, 35)     {\footnotesize $ \Z_n(\EC_n) $}
     \put(115, 103)     {\footnotesize $ \Z_n(\EC_n) $}
     \put(170, 85)     {\footnotesize $ \Z_n(\EX_n) $}
     \put(10, 115)   {\footnotesize $ \Z_n(\EC_n) $}
     \put(10, 60)     {\footnotesize $\one_{n+1}$ }
     }\setlength{\unitlength}{1pt}}
  \end{picture}}
\ee
where three green ``dots" are all labeled by $P_n(\Z_n(\EC_n))$. By deforming the pictures topologically, we see that $\rho \circ (\underline{f} \boxtimes \id_{\EC_n}) = f$. This proves the existence of $\underline{f}$.

\smallskip
It remains to prove the uniqueness of such $\underline{f}$. Assume a morphism $g: \EX_n \to P_n(\Z_n(\EC_n))$ satisfies $\iota_{P_n(\Z_n(\EC_n))} = g \circ \iota_{\EX_n}$ and $\rho \circ (g \boxtimes \id_{\EC_n}) = f$. 
From the former identity, we obtain the identities $\EX_n \boxtimes_{\Z_n(\EX_n)} g^{(0)} = P_n(\Z_n(\EC_n))$ and $g^{(1)} = \id_{P_n(\Z_n(\EC_n))}$; from the latter, we obtain $g^{(0)} = f^{(0)}$. That is, $g=\underline{f}$.
\epf

\begin{rema} {\rm
In Thm\,\ref{thm:universal}, we have used only strictly commutative diagrams by the Minimal Assumption (see Remark\,\ref{rema:minimal}). If we want to relax this condition by considering higher isomorphisms, the universal property of the center is much more complicated \cite{lurie2}. 
We briefly discuss it in Sec.\,\ref{sec:universal-higher-morphism}.
}
\end{rema}



\subsection{The universal property and the mathematical notion of center} \label{sec:center}

The universal property stated in Thm.\,\ref{thm:universal} is exactly the universal property of the notion of center in mathematics. This universal property defines the center up to isomorphisms. In this subsection, we try to  explain this well-known fact to physical-minded readers. 

\medskip 
We first explain the notion of center for an ordinary algebra over $\Cb$. 
Let $A$ be an algebra over $\Cb$. The center of $A$ is usually defined by
\be \label{eq:Z-A}
Z(A) := \{ z\in A | za = az, \forall a\in A \},
\ee
which is a (commutative) subalgebra of $A$. 
But this notion can be redefined by its universal property. Consider another algebra $B$.  The tensor product $B\otimes A$ has a natural algebra structure. If there is an algebra homomorphism $f: B\otimes A \to A$, then $B$ must satisfy some special property. For example, the obvious action $m: A \otimes A \to A$, defined by $m(a\otimes b) = ab$, is not an algebra map. To see this, take $1\otimes a, b\otimes 1 \in A\otimes A$. On the one hand, we have
\begin{equation} \label{eq:ba}
m((1\otimes a) \cdot (b\otimes 1))  = m(b\otimes a) = ba.
\end{equation}
On the other hand,
\begin{equation} \label{eq:ab}
m(1\otimes a) \cdot m(b\otimes 1) = ab.
\end{equation}
Therefore, the multiplication map $m: A\otimes A\to A$ is not an algebra map if $A$ is not commutative.
However, $Z(A) \otimes A \xrightarrow{m} A$ is an algebra homomorphism. Moreover, the following diagram:
\be  \label{diag:ZAA}
\xymatrix{
& Z(A) \otimes A \ar[rd]^m  & \\
A \ar[rr]^{\id_A} \ar[ru]^{\iota_{Z(A)} \otimes \id_A}  & & A
}
\ee
where $\iota_{Z(A)}: \Cb \to Z(A)$ is the unit map, is commutative.
The pair $(Z(A), Z(A)\otimes A \xrightarrow{m} A)$ satisfies the following so-called {\it universal property}: if $B$ is an algebra and $f: B\otimes A \to A$ is a unital action and an algebra map, there is a unique algebra map $\underline{f}: B \to Z(A)$ such that the following diagram:
\be \label{diag:alg-center}
\xymatrix{ & Z(A) \otimes A \ar[rd]^m & \\
B\otimes A \ar[rr]^f \ar[ur]^{\exists !\, \, \underline{f} \otimes 1}  & & A
}
\ee
is commutative. 
This is true because the restriction $\underline{f}:=f(-,1): B \to A$ is an algebra map, $f(1\otimes a) =a$, and we have
\begin{align}
f(b \otimes a) &= f((b\otimes 1) \cdot (1\otimes a)) = \underline{f}(b) \cdot f(1\otimes a) = \underline{f}(b) a \nn
&= f((1\otimes a) \cdot (b\otimes 1)) = a\underline{f}(b),
\end{align}
i.e. $\underline{f}(b)\in Z(A)$. Therefore, $\underline{f}: B \to Z(A)$ is an algebra map and the diagram (\ref{diag:alg-center}) is commutative. The uniqueness of $\underline{f}$ is obvious.

Conversely, the universal property determines the center up to canonical isomorphisms. 
Namely, any algebra $Z$, satisfying the universal property, is canonically isomorphic to the algebra $Z(A)$ defined by Eq.\,(\ref{eq:Z-A}). 

\smallskip
The universal property also implies that there is a canonical isomorphism
$$
\hom(B\otimes_\Cb A, A) \xrightarrow{\simeq} \hom(B, Z(A)), 
$$
where both ``homs'' are sets of algebra homomorphisms. It provides a useful characterization of the center $Z(A)$ as an internal hom. Another useful characterization of $Z(A)$ is the set of $A$-$A$-bimodule maps from $A$ to $A$, i.e. $Z(A) = \hom_{A|A}(A, A)$.

\medskip
The notion of center can be defined in very general context. Jacob Lurie introduced the notion of center for an $E_k$-algebra object in a symmetric monoidal $\infty$-category \cite{lurie2}. What we need in the study of topological orders are some special cases of Lurie's general notion. We discuss some examples below. 

\begin{expl} {\rm
Recall Example\,\ref{expl:n-category}, a $0$D topological order $\EC_0$ can be defined as a finite dimensional Hilbert space $U$ with a distinguished element $u$, $\EC_0=(U,u)$. $\EC_0$ can be viewed as an $E_0$-algebra \cite{lurie2}. 
In this case, the $\boxtimes$ is just usual tensor product of Hilbert spaces. We have
$$
\hom(\EX_0 \otimes_\Cb \EC_0, \EC_0) \simeq \hom(\EX_0, \hom_\Cb(\EC_0, \EC_0)),
$$
where both sides are sets of $E_0$-algebra homomorphisms, i.e. linear maps that preserves the distinguished elements. Therefore, we have the center $Z_0(\EC_0)=\hom_\Cb(\EC_0, \EC_0)$, which a simple $1$D topological order. This center is an $E_0$-center. 
}
\end{expl}

\begin{expl} {\rm
A $1$D topological order $\EC_1$ is a semisimple algebra over $\Cb$. Its bulk $\Z_1(\EC_1)$ is the $Z_1$-center $Z_1(\EC_1)=\hom_{\EC_1|\EC_1}(\EC_1,\EC_1)$, i.e. the usual center of an algebra over $\Cb$. 
}
\end{expl}

\begin{expl} {\rm
If we use the 2-codimensional definition of a $1$D topological order, it is given by an $E_0$-algebra $\EC_1=(\EM, m)$, where $\EM$ is a unitary 1-category and $m$ is a distinguished object in $\EM$. In this case, all arrows in diagrams (\ref{diag:X-n-0}) and (\ref{diag:univ-prop-0}) are unitary 1-functors that preserve the distinguished object, and $\boxtimes$ is the Tambara's tensor product over the unitary fusion 1-category $\hilb$. Then we have
\be \label{eq:internal-hom}
\fun(\EX_1 \boxtimes \EC_1, \EC_1) \simeq \fun(\EX_1, \fun(\EC_1, \EC_1)),
\ee
where both sides are categories of unitary 1-functors preserving the distinguished objects. Therefore, we must have $Z_1(\EC_1)=\fun(\EC_1, \EC_1)$ (an $E_0$-center). As a special case, in Fig.\,\ref{fig:lw-mod} (b), we have $\EE_2=Z_1(\fun_\EC(\EM, \EN))$. So the boundary-bulk relation also holds in Fig.\,\ref{fig:lw-mod} (b). 
}\end{expl}

\begin{expl} {\rm
A 2D topological order is given by a unitary multi-fusion 1-category $\EC_2$. Let $Z_2(\EC_2)$ be the Drinfeld center of $\EC_2$ and $m: Z_2(\EC_2) \boxtimes \EC_2 \to \EC_2$ the tensor product functor. Then the pair $(Z_2(\EC_2),m)$ satisfies the universal property in Thm.\,\ref{thm:universal}, i.e. 
\be \label{diag:monoidal-center}
\xymatrix{ & Z_2(\EC_2) \boxtimes \EC_2 \ar[rd]^{m} & \\
\ED_2 \boxtimes \EC_2 \ar[rr]^f \ar[ur]^{\exists !\, \, \underline{f} \boxtimes \id_{\EC_2}}  & & \EC_2}
\ee
where $\ED_2$ is a monoidal 1-category and $f$ is a unitary monoidal 1-functor. Note that the functor $\underline{f}=f(-\boxtimes 1_{\EC_2}): \ED_2 \to \EC_2$ defines a unitary monoidal 1-functor. $f$ being monoidal requires that each object $g(d)\in \EC_2$ acquires a half braiding, i.e. an isomorphism $x \otimes \underline{f}(d) \xrightarrow{c_{x,\underline{f}(d)}} \underline{f}(d) \otimes x$ for all $x\in \EC_2$, satisfying some natural properties. As a consequence, $\underline{f}$ defines a functor $\underline{f}: \ED_2 \to Z_2(\EC_2)$ such that Diagram\,(\ref{diag:monoidal-center}) is commutative. The uniqueness of $\underline{f}$ is obvious. The Drinfeld center $Z_2(\EC_2)$ of $\EC_2$ can be equivalently defined by $Z_2(\EC_2) := \fun_{\EC|\EC}(\EC, \EC)$. In this case, the functor $m$ is given the evaluation functor $(F, x)\mapsto F(x)$. 
}
\end{expl}

\begin{expl} \label{expl:E2-center} {\rm
Let $\EC_3$ be a (unitary) braided fusion 1-category, which is an $E_2$-algebra. We replace all categories in (\ref{diag:monoidal-center}) by (unitary) braided monoidal 1-categories and all arrows by (unitary) braided monoidal 1-functors, then it is easy to show that the image of the functor $\underline{f}=f(-\boxtimes 1_\EC): \ED \to \EC$ consists of object $x\in \EC$ that are symmetric with all objects in $\EC$, i.e. $c_{y,x} \circ c_{x,y} =\id_{x\otimes y}$ for all $y\in \EC$. This universal property defines the centralizer (or the $E_2$-center) of $\EC$, denoted by $\EC'$. If $\EC$ is non-degenerate, we should have $Z_3(\EC_3)=\EC'=\one_4$ by Conjecture\,\ref{conj:closed-f=bf}. When $\EC$ is not non-degenerate, $Z_3(\EC_3)$ does not coincide with $\EC'$ (see Remark\,\ref{rema:lurie}). The centralizer (or $E_2$-center) of a (unitary) braided fusion $n$-category can be defined by the same universal property. 
}
\end{expl}


\begin{rema} {\rm
All above examples are special cases of a more general principle: an $E_n$-center of an $E_n$-algebra is automatically an $E_{n+1}$-algebra (see \cite{lurie2}). But an $E_n$-algebra only describes particle-like excitations. In the study of topological orders, by adding excitations of all codimensions, only $E_k$-algebras for $k\leq 2$ matter. 

}
\end{rema}

Given an indecomposable unitary multi-fusion $n$-category for $n\geq 0$, the $Z_{n+1}$-center $Z_{n+1}(\EC_{n+1})$ is defined by
\be \label{eq:center-n-tensor-cat}
Z_{n+1}(\EC_{n+1})= \fun_{\EC|\EC}(\EC, \EC),
\ee
where $\fun_{\EC|\EC}(\EC, \EC)$ is the $n$-category of unitary $\EC$-$\EC$-bimodule functors and is believed to be a unitary braided fusion $n$-category (recall Conjecture\,\ref{conj:f-Z-bf}). There is a natural evaluation functor $\ev: Z_{n+1}(\EC_{n+1}) \boxtimes \EC_{n+1} \to \EC_{n+1}$ defined by $F\boxtimes x \to F(x)$, which is unitary and monoidal. Moreover, the pair $(Z_{n+1}(\EC_{n+1}), \ev)$ satisfies the universal property of center:
\be \label{eq:center-n-tensor-cat-univ-prop}
\xymatrix{
& Z_{n+1}(\EC_{n+1}) \boxtimes \EC_{n+1} \ar[rd]^{\ev} & \\
\EX_{n+1} \boxtimes \EC_{n+1} \ar[rr]^f  \ar[ru]^{\exists ! \underline{f} \boxtimes \id_\EC}
& & \EC_{n+1},
}
\ee
in which the diagram is commutative (up to higher isomorphisms), for any unitary fusion $n$-category $\EX_{n+1}$ and a unitary monoidal $n$-functor $f$ (or a unitary $(n+$$1)$-functor). This $Z_n$-center is an $E_1$-center. See \cite{baez-neuchl} for the center of a monoidal 2-category and \cite{lurie2} for general situations.


\begin{rema} \label{rema:lurie} {\rm
This remark is due to Jacob Lurie. There is a notion of center for an $E_k$-algebra object in a symmetric monoidal $\infty$-category (\cite[Sec.\,5.3.1]{lurie2}). When we take the $\infty$-category to be the $\infty$-category of $(\infty,n)$-categories with some additional structures, we get a notion of center $Z(-)$ for an $E_k$-monoidal $(\infty,n)$-category. 
This center is automatically an $E_{k+1}$-monoidal $(\infty,n)$-category and given by Eq.\,(\ref{eq:center-n-tensor-cat}) for $k=1$, and by the same formula for $k>1$ but with the bimodule replaced by $E_k$-module. If $\EC$ is an $E_k$-monoidal $(\infty,n)$-category, then $\hom_\EC(1_\EC,1_\EC)$, where $1_\EC$ is the tensor unit of $\EC$, is an $E_{k+1}$-monoidal $(\infty, n-1)$-category. 
Using the results proven in Sec.\,4.8 in \cite{lurie2}, one can show that
$$
Z(\hom_\EC(1_\EC,1_\EC)) \simeq \hom_{Z(\EC)}(1_{Z(\EC)}, 1_{Z(\EC)})
$$ 
as $E_{k+2}$-monoidal $(\infty, n-1)$-categories. In general, $Z(\hom_\EC(1_\EC,1_\EC))$ is not enough to recover the $E_{k+1}$-monoidal $(\infty, n)$-category $Z(\EC)$. 
}
\end{rema}

Thm\,\ref{thm:universal} simply says that the \bulk is indeed the center in the mathematical sense. Moreover, if we assume that a morphism between two $n$D topological orders is equivalent to a unitary $n$-functor, then we must have $\Z_n(-)=Z_n(-)$. 
Then Eq.\,(\ref{eq:Z2=0}) leads to the following mathematical conjecture. 

\begin{conj} \label{conj:Z2=0}
For an indecomposable unitary multi-fusion $n$-category $\EC_{n+1}$ ($n\geq 0$), 
we must have $Z_{n+2}(Z_{n+1}(\EC_{n+1}))=\one_{n+3}$.
\end{conj}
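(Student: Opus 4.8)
The plan is to deduce this purely categorical statement from the physical boundary-bulk identity (\ref{eq:Z2=0}) by invoking the identification $\Z_m(-)=Z_m(-)$ set up in Section \ref{sec:center}. The starting point is Theorem \ref{thm:universal}, which shows that the \bulk $\Z_m$ satisfies exactly the universal property that characterizes the categorical center $Z_m$ up to canonical isomorphism. Once one grants, as discussed after that theorem, that a morphism of $m$D topological orders coincides with a unitary $m$-functor, the two constructions agree on the nose, so $\Z_m=Z_m$ for every $m$. The conjecture is then the image under this identification of a statement that is already available at the level of \bulk.

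Granting this, the derivation is a one-line specialization. First I would apply the iterated-bulk identity (\ref{eq:Z2=0}) to the $(n+1)$D topological order $\EC_{n+1}$: since the \bulk of an $m$D order is an $(m+1)$D order, the doubly-iterated \bulk $\Z_{n+2}(\Z_{n+1}(\EC_{n+1}))$ is a trivial $(n+3)$D topological order, i.e. $\Z_{n+2}(\Z_{n+1}(\EC_{n+1}))=\one_{n+3}$. The intermediate object $\Z_{n+1}(\EC_{n+1})$ is a closed $(n+2)$D topological order, which by Conjecture \ref{conj:f-Z-bf} is a unitary braided fusion $n$-category, so the second application of $\Z$ is legitimate. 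Replacing each $\Z$ by its categorical counterpart $Z$ then yields $Z_{n+2}(Z_{n+1}(\EC_{n+1}))=\one_{n+3}$, as claimed. Indecomposability of $\EC_{n+1}$ enters only through Conjecture \ref{conj:f-Z-bf}, to guarantee that $Z_{n+1}(\EC_{n+1})$ is an honest braided fusion rather than merely multi-fusion $n$-category.

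A complementary, intrinsically mathematical route would bypass the physics altogether. One would show directly that $Z_{n+1}(\EC_{n+1})$ is a \emph{non-degenerate} braided fusion $n$-category — the higher-categorical analogue of the classical fact that the Drinfeld center of a fusion $1$-category is non-degenerate (factorizable) — and then invoke the ``if'' direction of Conjecture \ref{conj:closed-f=bf}, that a non-degenerate braided fusion $n$-category is closed. This closedness is precisely the assertion $Z_{n+2}(Z_{n+1}(\EC_{n+1}))=\one_{n+3}$. Here one must be careful, per Remark \ref{rema:lurie} and Remark \ref{rema:E_12-alg}, to keep the $Z_{n+2}$-center (the $E_1$-center of the braided category regarded as a monoidal $(n+1)$-category) distinct from the centralizer (its $E_2$-center); it is the former that must be shown trivial.

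The main obstacle is exactly that neither route is presently unconditional. The physical derivation establishes the statement only modulo the standing assumptions, since $\Z_m=Z_m$ rests on the still-conjectural equivalence ``morphism $=$ unitary $m$-functor'' and (\ref{eq:Z2=0}) is itself a consequence of the unique-\bulk hypothesis rather than a theorem; this is why the statement is recorded as a conjecture. On the mathematical side the bottleneck is twofold: one needs that the $E_1$-center of an indecomposable multi-fusion $n$-category is non-degenerate as an $E_2$-algebra, and one needs Conjecture \ref{conj:closed-f=bf} itself — both of which are known only for $n\le 1$, where the statement collapses to the triviality of the M\"uger center of a Drinfeld center. Pushing either ingredient to general $n$ is the hard part.
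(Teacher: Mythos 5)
Your main derivation is exactly the paper's own treatment: Conjecture \ref{conj:Z2=0} is recorded there precisely as the translation of the physical identity (\ref{eq:Z2=0}) --- itself an immediate consequence of the unique-\bulk hypothesis --- through the identification $\Z_m(-)=Z_m(-)$ supplied by Theorem \ref{thm:universal} together with the assumed equivalence of physical morphisms with unitary functors, and you correctly flag that this conditionality is why the statement stays a conjecture. Your complementary route via non-degeneracy of the $E_1$-center plus the ``if'' direction of Conjecture \ref{conj:closed-f=bf} (keeping the $Z_{n+2}$-center distinct from the centralizer per Remark \ref{rema:lurie}) is a reasonable addition, but the paper goes no further than the physical argument, merely citing M\"uger and ENO for the known low-dimensional case.
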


This conjecture is true for $n=2$ \cite{mueger2,eno2009}.

\subsection{Physical morphisms coincide with mathematical ones}
\label{sec:morphism=morphism}

In this subsection, using the categorical definition of a $n$D topological order $\EC_n$ and assuming $\Z_n(-)=Z_n(-)$, we show in a few low dimensional cases that Def.\,\ref{def:morphism-2} coincide with the notion of a unitary $n$-functor.  

\medskip
In 0D, let $(U, u)$ and $(V,v)$ be two 0D topological orders. Namely, $U$ and $V$ are finite dimensional Hilbert spaces and $u\in U, v\in V$. $\boxtimes$ is just the usual Hilbert space tensor product $\otimes_\Cb$ and the center of $U$ is the matrix algebra $\text{End}(U)$. A linear map $f: U \to V$ from $U$ to $V$ such that $f(u)=v$ can be rewritten as a pair $(f^{(0)}, f^{(1)})$, where $f^{(0)}=(\hom_\Cb(U,V), f)$ is an 0D topological order and $f^{(1)}$ is the canonical isomorphism
\be \label{eq:0D-wall}
\hom_\Cb(U,V)\otimes_{\text{End}(U)} U \xrightarrow{\simeq} V
\ee
defined by the evaluation map $g\otimes_{\text{End}(U)} u_1 \mapsto g(u_1)$. Conversely, a domain wall between $\text{End}(V)$ and $\text{End}(U)$ is an
$\text{End}(V)$-$\text{End}(U)$-bimodule, which has to be a direct sum of $\hom_\Cb(U,V)$. Therefore, $\hom_\Cb(U,V)$ is the unique domain wall such that the isomorphism (\ref{eq:0D-wall}) is possible. Therefore, our physical notion of a morphism is equivalent to that of a linear map preserving the distinguished elements.

\medskip
In 1D, for simple $1$D topological orders, an algebra homomorphism $f: A \to B$ between two matrix algebras $A$ and $B$ is only possible if $B=C \otimes_\Cb A$ for another matrix algebra $C$ (see Remark\,\ref{rema:matrix-alg}), where $\Cb$ should be viewed as the center of $A$. It coincides with our physical notion of a morphism between two $1$D topological orders $A$ and $B$. 

\medskip
Consider the 2-codimensional definition of a $1$D topological order, i.e. an $E_0$-algebra $(\EA_1,a)$, where $\EA_1$ is a unitary 1-category and $a\in \EA_1$. The $Z_1$-center of $\EA_1$ is just $\fun(\EA,\EA)$. A unitary 1-functor $f: (\EA,a) \to (\EB,b)$ such that $f(a)=b$ can be rewritten as a pair $(f^{(0)}, f^{(1)})$, where $f^{(0)}=(\fun(\EA, \EB), f)$ is a $1$D topological order and $f^{(1)}$ is the canonical unitary equivalence:
$$
\fun(\EA, \EB) \boxtimes_{\fun(\EA, \EA)} \EA \xrightarrow{\simeq} \EB
$$
defined again by the evaluation map $g\boxtimes_{\fun(\EA, \EA)} c \to g(c)$ for $c\in \EA$. Therefore, our physical notion of a morphism is equivalent to that of a unitary 1-functor preserving the distinguished objects.

\medskip
In 2D, a 2-stable simple 2D topological order can be described by a unitary fusion 1-category $\EC_2$. In this case, it was proved via anyon condensation that the \bulk of $\EC_2$ is indeed given by the Drinfeld center $Z_2(\EC_2)$ of $\EC_2$, i.e. $\Z_2(\EC_2)=Z_2(\EC_2)$ \cite{kong-anyon}.
Using this fact, we obtain a notion of a morphism between multi-fusion 1-categories from Def.\,\ref{def:morphism-2}. In \cite[Thm.\,5.13]{kong-zheng}, we will prove that this notion is equivalent to usual notion of a monoidal 1-functor. More precisely, a monoidal 1-functor $f: \EC \to \ED$ between two multi-fusion 1-categories $\EC$ and $\ED$ can be rewritten as a pair $(\fun_{\EC|\EC}(\EC, {}_f\ED_f), f^{(1)})$, where ${}_f\ED_f$ is the $\EC$-$\EC$-bimodule $\ED$ induced by $f$ and $f^{(1)}$ is the canonical monoidal equivalence $\fun_{\EC|\EC}(\EC, {}_f\ED_f) \boxtimes_{Z_2(\EC)} \EC \xrightarrow{\simeq} \ED$ \cite{kong-zheng}. 

\void{
Before we proceed, we would like to set a new convention to make our presentation easier. Recall that the excitations on the $\EM$-boundary in a Levin-Wen model is given by $\fun_\EC(\EM, \EM)^\rev$ instead of $\fun_\EC(\EM, \EM)$ according to the orientation of the boundary \cite{kitaev-kong}. To avoid the annoying $^\rev$, we would like to take a mirror reflection of the configuration (\ref{pic:morphism}) so that $\EC_n$ is sitting on the left hand side of $\Z_n(\EC_n)$. We would also like to drop the assumption of unitarity and replace fusion by indecomposable multi-fusion. More precisely, we would like to show for any indecomposable multi-fusion 1-categories\footnote{Multi-fusion cases are entirely similar.} $\EC_2$ and $\ED_2$, the physical notion of a morphism $f=(f^{(0)}, f^{(1)}: \EC\boxtimes_{Z_2(\EC_2)} f^{(0)} \xrightarrow{\simeq} \ED_2)$, where $f^{(0)}$ is a fusion 1-category and a $Z_2(\EC_2)$-$Z_2(\ED_2)$-bimodule with monoidal actions\footnote{The term ``monoidal actions" means that both actions $Z_2(\EC_2) \boxtimes f^{(0)} \to f^{(0)}$ and $f^{(0)} \boxtimes Z_2(\ED_2) \to f^{(0)}$ are monoidal.}, is equivalent to the usual notion of a monoidal functor $f: \EC_2 \to \ED_2$.

\medskip

Let $f: \EC \to \ED$ be a physical morphism and $\EC_2=\EC$ and $\ED_2=\ED$ are fusion categories. By definition, there is an isomorphism, i.e. a monoidal equivalence, $f_1^{(1)}: \EC \boxtimes_{Z_2(\EC)} f_2^{(0)} \xrightarrow{\simeq} \ED$, where $f_2^{(0)}$ is also a fusion category. Let $1_{f_2^{(0)}}$ be the tensor unit of the fusion category $f_2^{(0)}$. The functor $x \mapsto x \boxtimes_{Z_2(\EC)} 1_{f_2^{(0)}}$ defines a natural monoidal functor $\EC \to \EC \boxtimes_{Z_2(\EC)} f_2^{(0)}$. Then we obtain a composed monoidal functor
$$
\underline{f}: \EC \to \EC \boxtimes_{Z_2(\EC)} f_2^{(0)} \xrightarrow{f_1^{(1)}} \ED_2.
$$
It is easy to see that isomorphic morphisms give isomorphic monoidal functors. Therefore, we obtain a map $f \mapsto \underline{f}$ from the set of isomorphic classes of morphisms to the set of isomorphic classes of monoidal functors. 

Conversely, let $g: \EC \to \ED$ be a monoidal functor. Then there is a natural $\EC$-$\EC$-bimodule structure on $\ED$, denoted by ${}_g\ED_g$. We set
$$
\tilde{g}_2^{(0)}:= 
\fun_{\EC|\EC}(\EC, {}_g\ED_g).
$$
By Corollary\,\ref{cor:M-N}, there is an invertible monoidal functor
\be
\tilde{g}_1^{(1)}: \EC \boxtimes_{Z_2(\EC)} \tilde{g}_2^{(0)} \simeq \ED.
\ee
Therefore, we obtain a map $g \mapsto \tilde{g}=(\tilde{g}_2^{(0)}, \tilde{g}_1^{(1)})$. If two monoidal functors $g$ and $h$ are isomorphic, i.e. $g \simeq h$, then ${}_{g}\ED_g \simeq {}_{h}\ED_h$ canonically. It further induces a monoidal equivalence $\tilde{g}_2^{(0)} \simeq  \tilde{h}_2^{(0)}$ such that the diagram (\ref{diag:phi-(1)}) commutes automatically.

It is trivial to show that $\underline{(\tilde{g})}=g$. It is slightly more complicated to show that $\widetilde{(\underline{f})}\simeq f$. We will do that in \cite{kong-zheng}. Actually, with careful examinations, one can prove that the groupoid of monoidal functors between two multi-fusion 1-categories $\EC_2$ and $\ED_2$ is equivalent to that of morphisms from the topological order $\EC_2$ to $\ED_2$. This result has an interesting application in \cite{kong-zheng}. 
}

\void{, we notice the following sequence of monoidal equivalences:
\begin{align} \label{eq:proof-eq-two-def-morphisms}
f_2^{(0)} &\simeq Z(\EC) \boxtimes_{Z(\EC)} f_2^{(0)} \simeq \fun_{\EC|\EC}(\EC,\EC) \boxtimes_{Z(\EC)} f_2^{(0)} \nn
&\simeq \fun_{\EC|\EC}(\EC,\EC\boxtimes_{Z(\EC)} f_2^{(0)})
\simeq \fun_{\EC|\EC}(\EC, {}_{\underline{f}}\ED_{\underline{f}})
= \widetilde{(\underline{f})}_2^{(0)}.
\end{align}
We have used Prop.\,\ref{prop:M-N2} in the third step. It remains to check that the diagram (\ref{diag:phi-(1)}) with $\phi_1^{(0)}$ defined by Eq.\,(\ref{eq:proof-eq-two-def-morphisms}) is commutative. This amounts to show that the composed monoidal functor
\be \label{map:phi-1+prop-2}
\EC\boxtimes_{Z(\EC)} f_2^{(0)} \xrightarrow{\id_\EC \boxtimes_{Z(\EC)} \phi_1^{(0)}}
\EC\boxtimes_{Z(\EC)} \fun_{\EC|\EC}(\EC, {}_{\underline{f}}\ED_{\underline{f}}) \xrightarrow[\simeq]{a} \ED,
\ee
where the monoidal equivalence $a$ is defined by \eqref{eq:M-N5},
is isomorphic to $f_1^{(1)}: \EC\boxtimes_{Z(\EC)} f_2^{(0)} \to \ED$. Consider the following commutative (up to isomorphisms) diagram:
\be \label{diag:phi1-f1-f1}
\xymatrix{
\EC\boxtimes_{Z(\EC)} f^{(0)}  \ar[d]_\simeq \ar[r]^{\id_\EC \boxtimes_{Z(\EC)} \phi_1^{(0)}} &
\EC\boxtimes_{Z(\EC)} \fun_{\EC|\EC}(\EC, {}_{\underline{f}}\ED_{\underline{f}}) \ar[r]^-a &
\ED \\
\EC\boxtimes_{Z(\EC)} \fun_{\EC|\EC}(\EC, \EC) \boxtimes_{Z(\EC)} f^{(0)} \ar[r]^\simeq &
\EC\boxtimes_{Z(\EC)} \fun_{\EC|\EC}(\EC, \EC \boxtimes_{Z(\EC)} f^{(0)}) \ar[u]^{\id_\EC \boxtimes_{Z(\EC)} \fun_{\EC|\EC}(\EC, f^{(1)})} \ar[r]_-c^-\simeq  &
\EC\boxtimes_{Z(\EC)} f^{(0)} \ar[u]^{f^{(1)}} \\
}
\ee
where $c$ is defined by the equivalence \eqref{eq:M-N5}. Notice that the composition of the equivalences in the left column and the bottom row is nothing but the identity functor. Therefore, we obtain the identity
$a\circ (\id_\EC \boxtimes_{Z(\EC)} \phi_1^{(0)}) \simeq f^{(1)}$.
}

\begin{rema} {\rm
In general, we expect that the $(n+1,1)$-category $\TO_n^{fun}$ of pairs $(\ED_n, \iota)$ with 1-morphisms given by unitary $n$-functors, 2-isomorphisms by natural isomorphisms and 3-isomorphisms by invertible modifications, so on and so forth. 
}
\end{rema}

\section{The boundary-bulk relation and the functoriality of $\Z_n$} 
\label{sec:Z-functor}

The unique-bulk hypothesis and \bulk = center are only parts of the boundary-bulk relation. For a rather complete boundary-bulk relation, we would like to propose the {\it strong unique-bulk hypothesis}, which extends the uniqueness of the \bulk of an $n+$1D topological order to that of the ``\bulk" of a $k$-codimensional wall for $k=1,\cdots,n$ (see Fig.\,\ref{fig:unique-bulk-hypothesis}). This hypothesis can be true if we generalize the microscopic definition of a topological order (Def.\,\ref{def:TO}) to a (potentially anomalous) wall on the bottom boundary in Fig.\,\ref{fig:unique-bulk-hypothesis} (such as $a_{n-1}, b_{n-1}, c_{n-1}$). We assume this strong unique-bulk hypothesis in this section.

\begin{figure}[tb]
 \begin{picture}(150, 90)
   \put(100,10){\scalebox{2}{\includegraphics{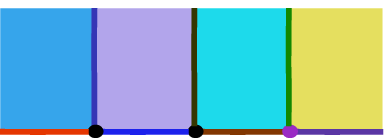}}}
   \put(60,-55){
     \setlength{\unitlength}{.75pt}\put(-18,-19){
     \put(95, 92)       { $\EA_n$}
     \put(170, 92)       { $\EB_n$}
     \put(245, 92)       { $\EC_n$}
     \put(320, 92)       { $\ED_n$}
     \put(130,100)      { $a_{n-1}$}
     \put(208,98)      { $b_{n-1}$}
     \put(280,100)      { $c_{n-1}$}
     \put(85, 150)    {\scriptsize $\Z_n(\EA)$}
     \put(165, 150)    {\scriptsize $\Z_n(\EB)$}
     \put(240, 150)    {\scriptsize $\Z_n(\EC)$}
     \put(310, 150)    {\scriptsize $\Z_n(\ED)$}
     \put(130,215)     { $\EX_n$}
     \put(210,215)     { $\EY_n$}
     \put(285,215)     { $\EZ_n$}
     }\setlength{\unitlength}{1pt}}
  \end{picture}
\caption{The {\it strong unique-bulk hypothesis} says that, in above physical configuration,  not only the $(n+$$1)$D bulks $\Z_n(\EA)$, $\Z_n(\EB)$, $\Z_n(\EC)$ and $\Z_n(\ED)$ are uniquely determined by $\EA,\EB,\EC,\ED$, respectively, but also the $n$D domain walls $\EX_n$ is uniquely determined by $(\EA,a,\EB)$, and $\EY_n$ by $(\EB,b,\EC)$, and $\EZ_n$ by $(\EC,c,\ED)$. We also denote $\EX_n$ by $\Z_n^{(1)}(a_{n-1})$. 
}
\label{fig:unique-bulk-hypothesis}
\end{figure}

\subsection{Closed and anomalous domain walls} \label{sec:anomalous-wall}

In Fig.\,\ref{fig:unique-bulk-hypothesis}, note that $\EX_n$ is not the \bulk of $a_{n-1}$ but uniquely determined by $a_{n-1}$. We define $\Z_{n-1}^{(1)}(a_{n-1}):=\EX_n$.  Note that $a_{n-1}$ can be viewed as a closed wall between $\EA_n$ and $\EX_n\boxtimes_{\Z_n(\EB_n)} \EB_n$ or a closed wall between $\EA_n^\op\boxtimes_{\Z_n(\EA_n)} \EX_n$ and $\EB_n$. It is not a closed wall between $\EA_n$ and $\EB_n$ unless $\EX_n$ is trivial. 

\begin{defn} \label{def:wall} {\rm 
A domain wall $a_{n-1}$ between the $\EA_n$-phase and the $\EB_n$-phase is called 
{\it Morita-closed} if $\EX_n$ is invertible; it is called {\it closed} if $\Z_n(\EA_n)=\Z_n(\EB_n)$\footnote{When we say $\Z_n(\EA_n)=\Z_n(\EB_n)$, we mean that we have made a choice of how we identify them.} and $\EX_n=\id_{\Z_n(\EA_n)}$; it is called {\it anomalous} if $\EX_n$ is not invertible. 
}
\end{defn}


\begin{defn} \label{def:morita} {\rm
If two simple topological orders $\EA_n$ and $\EB_n$ are connected by a Morita-closed gapped domain wall $a_{n-1}$, then we say that they are {\it Morita equivalent}, denoted by $\EA_n \sim \EB_n$, and the Morita equivalence is given by $a_{n-1}$. When $\EA_n$ and $\EB_n$ are both closed, the Morita equivalence is also called the {\it Witt equivalence} \cite{dmno,fsv,kong-anyon,kong-wen}. 
}
\end{defn}

\medskip
By the definition of Morita equivalence, we have 
\be \label{eq:morita}
\EA_n \sim \EB_n \quad\quad \Rightarrow \quad\quad \Z_n(\EA_n) \simeq \Z_n(\EB_n). 
\ee
This physical result is also natural mathematically. In mathematics, Morita equivalent algebras in a certain nice monoidal category always share the same center. Actually, it is also interesting to consider another direction. In general, $Z(A)\simeq Z(B)$ does not imply $A\sim B$. But if we assume certain duality structures on $A$ and $B$, then it is possible. We give a couple of examples below. 
\bnu
\item In 1D, if $A$ and $B$ are two finite dimensional $C^\ast$-algebras, i.e. the direct sums of matrix algebras, describing composite (unstable) topological orders, then $A$ is Morita equivalent to $B$ if and only if $Z(A)\simeq Z(B)$ as algebras.  This result is quite trivial. But it can be generalized to a non-trivial one in the framework of 2D rational conformal field theory, in which $A$ and $B$ are two special symmetric Frobenius algebras in a modular tensor categories, then $A\sim B$ if and only if $Z(A)\simeq Z(B)$ as algebras, where $Z(A)$ is the so-called {\it full center} of $A$ \cite{kr-morita}.

\item In 2D, if two unitary fusion 1-categories $\EA_2$ and $\EB_2$ are Morita equivalent if and only if their Drinfeld centers are equivalent as braided monoidal 1-categories \cite{mueger0, kitaev, eno2008}.  
\enu
It seems natural to ask if the Morita equivalence is equivalent to the equivalence of $Z_n$-centers for a unitary fusion $n$-category. It turns out that it is not true. In 3D, all closed 3D topological orders share the same trivial \bulk, but they are not Morita equivalent (or Witt equivalent) in general. The discrepancy is measured by the Witt equivalence classes, which form an infinite group called Witt group \cite{dmno}. What happens is that, assuming $\Z_n(\EA_n)=\Z_n(\EB_n)$, one can certainly glue the topological phase $\Z_n(\EA_n)$ (with a gapped boundary $\EA_n$) with the phase $\Z_n(\EB_n)$ (with a gapped boundary $\EB_n$) smoothly in the $n+$1D \bulk and create an $n-$1D wall between $\EA_n$ and $\EB_n$ on the boundary. But this $n-$1D wall can be gapless in general. For example, a 3D quantum hall system shares the same $4$D \bulk with $\one_3$, but the domain wall between them is gapless. In general, the Witt classes of closed $n$D topological orders also form a group, still called Witt group \cite{kong-wen}, which measures the discrepancy between the Witt equivalence and the equivalence of their \bulk's.



\begin{expl} {\rm
We give an example of a Morita-closed wall in the toric code model. Consider the toric code model with a smooth boundary, a 1-codimensional wall and a 2-codimensional defect on the boundary depicted in Fig.\,\ref{fig:external-wall}. This model is completely free of frustration. The complete list of mutually commutative stabilizers are given as follows: 
$$
A_{\mathbf{v}}=\sigma_1^x\sigma_2^x\sigma^3\sigma^4, \quad B_{\mathbf{p}}=\sigma_8^z \sigma_{10}^z \sigma_{11}^z\sigma_{12}^z, \quad A_{13,14,15}=\sigma_{13}^x\sigma_{14}^x \sigma_{15}^x
$$
$$
C_{2,5,3|7}=\sigma_2^z\sigma_5^z\sigma_3^z\sigma_7^x,\quad D_{3|7,8,9}=\sigma_3^x\sigma_7^z\sigma_8^z\sigma_9^z, \quad 
Q_{6,17,18,19,20}=\sigma_6^x\sigma_{17}^y\sigma_{18}^z\sigma_{19}^z\sigma_{20}^z.
$$ 
The excitations on the smooth boundary are given by the unitary fusion 1-category $\rep_{\Zb_2}$ \cite{bravyi-kitaev,kitaev-kong}. 
Since the dotted line is an invertible wall that gives the EM-duality \cite{kitaev-kong}, the neighborhood of the plaquette $(17,18,19,20)$ can be viewed as a 0+1D Morita-closed wall between two smooth boundaries. Note that moving an $e/m$-particle around the corner of the edges labeled by $6$ and $17$ (the blue dot) turn it into an $m/e$-particle. Since an $m$-particle condenses on the smooth boundary, all particles condense in the neighborhood of the plaquette $(17,18,19,20)$. Therefore, the Morita-closed wall on the boundary contains no non-trivial excitations, thus can be described by $\one_1=\hilb$. Indeed, $\hilb$ is at the same time an invertible $\rep_{\Zb_2}$-bimodule and gives a non-trivial Morita equivalence between two smooth boundaries $\rep_{\Zb_2}$. This Morita equivalence $\hilb$ between $\rep_{\Zb_2}$ and $\rep_{\Zb_2}$ determines exactly an invertible domain wall (the dotted line) with wall excitations given by $\fun_{\rep_{\Zb_2}|\rep_{\Zb_2}}(\hilb,\hilb)$, which is also equivalent to the EM-duality of the bulk, i.e. a braided auto-equivalence of $Z_2(\rep_{\Zb_2})$.
}
\end{expl}

\begin{figure}[t] 
  \centerline{\includegraphics[scale=1.5]{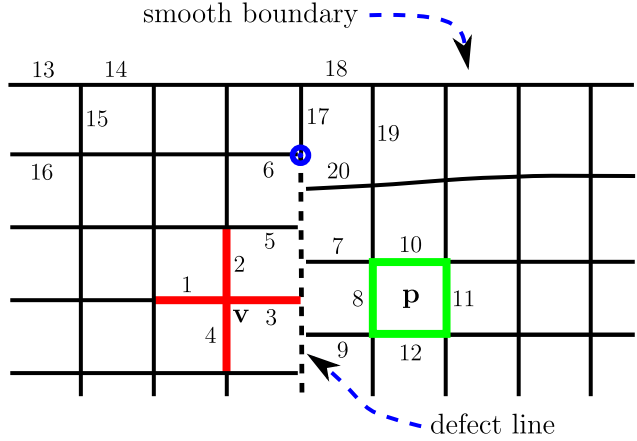}}
  \caption{A Morita-closed wall between two smooth boundaries in the toric code model}
  \label{fig:external-wall}
\end{figure}

\begin{expl} \label{expl:lw-ext-domain-wall}  {\rm
We give some examples of Morita-closed/anomalous walls in Levin-Wen type of models enriched by boundaries and defects depicted in Fig.\,\ref{fig:lw-defect-duality}. Let $\EA_2=\EB_2$ be unitary fusion 1-categories and $\Z_2^{(0)}(\EA_2):=Z_2(\EA_2)$. Fig.\,\ref{fig:lw-defect-duality} depicts a lattice model with a gapped $\EM$-wall between $Z_2(\EA_2)$ and $Z_2(\EB_2)$. The wall excitations are given by unitary fusion 1-category $\Z_2^{(1)}((\EM, a))$, which is defined as follows: 
\be \label{eq:Z-1}
\Z_2^{(1)}((\EM, a)) = Z_2^{(1)}(\EM) := \fun_{\EA|\EB}(\EM, \EM)^\rev.
\ee
An $\EA$-module functor $a\in \fun_\EA(\EA, \EM\boxtimes_\EB \EB)\simeq \EM$ gives a defect junction between $\EA_2$, $\EB_2$ and $\Z_2^{(1)}(\EM)$. If $\EM$ is invertible, then $a$ gives a Morita-closed domain wall between $\EA_2$ and $\EB_2$; if otherwise, $a$ is an anomalous domain wall. When $a$ is viewed as an 1D topological order, it is nothing but $(\EM, a)$. This example shows that, all semisimple indecomposable $\EA$-$\EB$-bimodules are physically realizable as (potentially anomalous) 1D domain walls in Levin-Wen models. Also note that if we fold two boundaries in Fig.\,\ref{fig:lw-defect-duality} upward, we obtain a vertical line with the bottom end given by the 1D topological order $(\EM, a)$. The vertical line is a 2D topological order given by a unitary multi-fusion 1-category $\fun(\EM, \EM)$, which is nothing but the $Z_1^{(0)}$-center of $(\EM_1,a)$.}
\end{expl}

\begin{rema} \label{rema:higher-lw-mod} {\rm
We believe that higher dimensional generalization of Levin-Wen models can be constructed by replacing unitary fusion 1-categories by unitary fusion $n$-categories (see a special case in \cite{walker-wang} and a sketch of general construction in \cite{walker}). We believe that a large part of Example\,\ref{expl:lw-ext-domain-wall} (Fig.\,\ref{fig:lw-defect-duality}) remains to be true for $n>2$. 
}
\end{rema}

\begin{figure}[tb]
 \begin{picture}(150, 90)
   \put(155,10){\scalebox{2}{\includegraphics{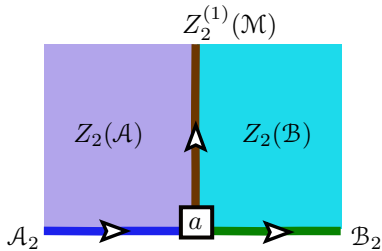}}}
   \put(60,-55){
     \setlength{\unitlength}{.75pt}\put(-18,-19){
     \put(122, 108)       { $\EA_2$}
     \put(295, 108)       { $\EB_2$}
     \put(213,116)      { $ a $}
     \put(155, 160)    { $Z_2(\EA)$}
     \put(240, 160)    { $Z_2(\EB)$}     
     \put(210,215)     { $Z_2^{(1)}(\EM)$}
     }\setlength{\unitlength}{1pt}}
  \end{picture}
\caption{Consider a Levin-Wen model enriched by gapped boundaries and defects \cite{kitaev-kong}. The two bulk lattices are defined by unitary fusion 1-categories $\EA_2$ and $\EB_2$, respectively. The excitations in the left/right bulk are given by the Drinfeld center $Z_2(\EA_2)/Z_2(\EB_2)$ of $\EA_2/\EB_2$. The lattice near the wall between the left bulk and the right bulk is defined by a semisimple indecomposable $\EA$-$\EB$-bimodule $\EM_1$, and is called the $\EM$-wall. The excitations on the $\EM$-wall are given by the unitary fusion 1-category $Z_2^{(1)}(\EM_1):=\fun_{\EA|\EB}(\EM, \EM)^\rev$ \cite{kitaev-kong}. Two boundary lattices are defined by $\EA$ and $\EB$, viewed as a left $\EA$-module and a left $\EB$-module, respectively, and are called the $\EA$-boundary and the $\EB$-boundary. The excitations on the $\EA$-boundary are given by $\EA\simeq \fun_\EA(\EA,\EA)$; those on the $\EB$-boundary are given by $\EB$. The defect junction connecting the $\EA$-boundary, $\EM$-wall and $\EB$-boundary is given by a unitary $\EC$-module functor $a \in \fun_\EA(\EA, \EM\boxtimes_\EB \EB)\simeq \EM$. 
}
\label{fig:lw-defect-duality}
\end{figure}


Let $\EA_n$ and $\EB_n$ be two simple $n$D topological orders. Note that $\Z_n^{(1)}(-)$ defines a surjective map from potentially anomalous (or Morita-closed) domain walls between $\EA_n$ and $\EB_n$ to closed (or invertible) domain walls between $\Z_n(\EA_n)$ and $\Z_n(\EB_n)$. Sometimes, a stronger result can be obtained. When $n=2$, $\EA_2$ and $\EB_2$ are unitary fusion 1-categories. In this case, $\Z_2^{(1)}$ maps bijectively from (potentially anomalous) walls between $\EA_2$ and $\EB_2$ to closed walls between $\Z_2(\EA_2)$ and $\Z_2(\EB_2)$ \cite{dmno,kong-zheng} (see also Thm.\,\ref{prop:Z-functor}). Moreover, Morita-closed domain walls between $\EA_2$ and $\EB_2$ are nothing but invertible $\EA_2$-$\EB_2$-bimodules. When $\EA_2=\EB_2$, they form a group denoted by $\mathrm{Pic}(\EA_2)$. We have the following group isomorphism \cite{eno2009,kitaev-kong}: 
\be \label{eq:Z-(1)}
\Z_2^{(1)}: \mathrm{Pic}(\EA_2) \xrightarrow{\simeq}  \mathrm{Aut}(\Z_2(\EA_2)),
\ee
where $\mathrm{Aut}(\Z_2(\EA_2))$ is the group of invertible walls between $\Z_2(\EA_2)$ and itself, or equivalently, the group of braided auto-equivalences of $Z_2(\EA_2)$. 

\begin{rema} {\rm
The isomorphism in Eq.\,\eqref{eq:Z-(1)} is not an isolated phenomenon. 
In 1D, if $\EA_1$ is a simple algebra over $\Cb$, this result is trivial. In 2D rational conformal field theories, what replaces $\EA_2$ is a simple special symmetry Frobenius algebra $A$ in a modular tensor category. In this case, we also have a group isomorphism $\text{Pic}(A) \simeq \text{Aut}(Z(A))$ \cite{dkr1}, where $Z(A)$ is the full center of $A$, with a similar physical meaning \cite{ffrs-defect}. In the framework of 3D TQFT's, the similar phenomenon was studied recently in \cite{fs,fpsv}. 
}
\end{rema}

 \subsection{The functoriality of $\Z_n$} \label{sec:functor-Z}

In Fig.\,\ref{fig:unique-bulk-hypothesis}, an anomalous domain wall $a_{n-1}$ between $\EA_n$ and $\EB_n$ determines the ``bulk" closed domain wall $\EX_n$, also denoted by $\Z_n^{(1)}(a_{n-1})$. Similarly, we can define $\Z_n^{(2)}(\cdot)$ for (possibly anomalous) walls between (possibly anomalous) walls. For example, if we fatten Fig.\,\ref{fig:unique-bulk-hypothesis} in the third directions, let us imagine an anomalous wall $\alpha_{n-2}$ between the anomalous wall $a_{n-1}$. It is also the gapped boundary of a domain wall $\chi_{n-1}$ between two $\EX_n$-phases. Then this $\chi_{n-1}$ is uniquely determined by $\alpha_{n-2}$ and denoted by $\Z_n^{(2)}(\alpha)$. Similarly, we can define $\Z_n^{(i)}$ for $2<i<n$ (not for instantons $i=n$). In this context, we denote the \bulk $\Z_n$ by $\Z_n^{(0)}$, i.e. $\Z_n=\Z_n^{(0)}$. 

\begin{defn} {\rm
We define the category $\TO_n^{wall}$ to be the $n$-category of simple $n$D topological orders with (potentially anomalous) gapped 1-codimensional walls as 1-morphisms, \ldots, and (potentially anomalous) gapped $l$-codimensional walls between $(l-$$1)$-codimensional walls as $l$-morphisms, \ldots instantons as $n$-morphisms. 
}
\end{defn}

The strong unique-bulk hypothesis suggests that the complete {\bf boundary-bulk relation} is given by the collection of maps $\Z_n=\{ \Z_n^{(0)}, \Z_n^{(1)}, \Z_n^{(2)}, \cdots, \Z_n^{(n-1)}\}$, which defines a functor:
\be  \label{eq:Z_n-functor}
\Z_n: \TO_n^{wall} \to \TO_{n+1}^{closed-wall}.
\ee
Note that the result (\ref{eq:morita}) and the group isomorphism in (\ref{eq:Z-(1)}) are just parts of the first two layers of the hierarchical structures of the functor $\Z_n$. 

\begin{defn} {\rm
A $n+$1D simple topological order (or an $l$-codimensional wall) in $\TO_{n+1}^{closed-wall}$ is called {\it exact} if it lies in the image of the functor $\Z_n$. 
}
\end{defn}

\begin{expl} \label{expl:exact} {\rm
In 2+1D, the $\EM$-walls in Fig.\,\ref{fig:lw-defect-duality} is exact. 
}
\end{expl}

Physically, the funtoriality of $\Z_n$ is tautological if we assume the strong unique-bulk hypothesis. Mathematically, it suggests that the notion of the $Z_n$-center of a unitary fusion $n$-category is functorial if we define the domain/target categories properly. This mathematical funtoriality is highly non-trivial and still conjectural. 

\smallskip
When $n=2$, $\Z_2$ indeed gives a mathematical functor \cite{kong-zheng}. More precisely, let $\EM\EF\mathrm{us}$ be the 1-category of indecomposable multi-fusion 1-categories with 1-morphisms given by the equivalence classes of semisimple bimodules and $\EB\mathrm{rd}^{\mathrm{cl}}$ the 1-category of non-degenerate braided fusion 1-categories ($\EC, \ED$, etc.) with 1-morphisms given by the equivalence classes of closed multi-fusion bimodules (${}_\EC\EE_\ED$), where a closed multi-fusion bimodule is defined by a multi-fusion 1-categories equipped with a $\EC$-$\ED$-bimodule structure induced by a braided monoidal equivalence $Z_2(\EE_2)\xrightarrow{\simeq} \EC_3\boxtimes \overline{\ED}_3$. It is non-trivial to prove that $\EB\mathrm{rd}^{\mathrm{cl}}$ is a well-defined 1-category \cite[Thm.\,5.19]{kong-zheng}. 

\begin{thm} [\cite{kong-zheng}] \label{prop:Z-functor}
For indecomposable multi-fusion 1-categories $\EA_2, \EB_2$ and semisimple $\EA$-$\EB$-bimodule 1-category $\EM_1$, the following assignment
$$
\EA_2 \mapsto Z_2(\EA_2), \quad\quad {}_\EA\EM_{\EB} \mapsto Z_2^{(1)}(\EM_1):=\fun_{\EA|\EB}(\EM, \EM)^\rev 
$$ 
defines a well-defined functor $Z_2:\EM\EF\mathrm{us} \to \EB\mathrm{rd}^{\mathrm{cl}}$.  Moreover, $Z_2$ is fully faithful. 
\end{thm}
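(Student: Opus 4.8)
The plan is to prove the theorem by reducing bimodules to one-sided modules. Setting $\EF:=\EA_2\boxtimes\EB_2^\rev$, an $\EA$-$\EB$-bimodule is precisely a left $\EF$-module, and $\fun_{\EA|\EB}(\EM,\EM)=\fun_\EF(\EM,\EM)$ is the dual category $\EF^\ast_\EM$. Well-definedness on objects is standard: for indecomposable multi-fusion $\EA_2$, the center $Z_2(\EA_2)$ is a non-degenerate braided fusion 1-category \cite{mueger2,eno2009}. For well-definedness on morphisms I would invoke Morita invariance of the center, $Z_2(\EF^\ast_\EM)\simeq Z_2(\EF)$ \cite{eno2009}, together with $Z_2(\EF)\simeq Z_2(\EA_2)\boxtimes\overline{Z_2(\EB_2)}$; this braided equivalence is exactly the datum that upgrades $\fun_{\EA|\EB}(\EM,\EM)^\rev$ to a \emph{closed} $Z_2(\EA)$-$Z_2(\EB)$-bimodule in $\EB\mathrm{rd}^{\mathrm{cl}}$. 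Equivalent bimodules yield equivalent $\EF$-module categories, hence equivalent closed bimodules, so $Z_2$ respects equivalence classes of $1$-morphisms.

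Next I would check functoriality. The identity $1$-morphism ${}_\EA\EA_\EA$ maps to $\fun_{\EA|\EA}(\EA,\EA)^\rev\simeq Z_2(\EA_2)^\rev$, which is the identity closed bimodule on $Z_2(\EA_2)$ since non-degeneracy gives $Z_2(Z_2(\EA_2)^\rev)\simeq Z_2(\EA_2)\boxtimes\overline{Z_2(\EA_2)}$. The substantial step is compatibility with composition: for composable bimodules ${}_\EA\EM_\EB$ and ${}_\EB\EN_\EC$ one must produce a braided-compatible equivalence
$$
\fun_{\EA|\EC}(\EM\boxtimes_\EB\EN,\,\EM\boxtimes_\EB\EN)^\rev\;\simeq\;\fun_{\EA|\EB}(\EM,\EM)^\rev\,\boxtimes_{Z_2(\EB)}\,\fun_{\EB|\EC}(\EN,\EN)^\rev.
$$
This is exactly the monoidal factorization of the relative Tambara product established in \cite[Thm.\,5.8]{kong-zheng}; I would cite it and then verify that the equivalence is compatible with the identifications of the centers with $Z_2(\EA)\boxtimes\overline{Z_2(\EC)}$, which is a matter of tracking the braided structures rather than new content.

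For full faithfulness I would show that both hom-sets are classified by the \emph{same} data. On the source side, $\hom_{\EM\EF\mathrm{us}}(\EA_2,\EB_2)$ is (direct sums of) equivalence classes of indecomposable semisimple $\EF$-module categories $\EM$. On the target side, a closed $Z_2(\EA)$-$Z_2(\EB)$-bimodule is determined by a Lagrangian algebra in $\overline{Z_2(\EF)}\simeq\overline{Z_2(\EA)}\boxtimes Z_2(\EB)$, and such Lagrangian algebras correspond bijectively, via anyon condensation and the canonical Lagrangian algebra of a center, to indecomposable $\EF$-module categories \cite{kong-anyon,dmno}. Under these two classifications the assignment $\EM\mapsto\fun_\EF(\EM,\EM)^\rev$ is the identity, so it is a bijection on hom-sets: faithfulness because inequivalent $\EF$-module categories give closed bimodules whose center-identification remembers the distinct Morita data, and fullness because every closed bimodule arises as some $\fun_\EF(\EM,\EM)$.

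I expect the main obstacle to be the composition step, namely \cite[Thm.\,5.8]{kong-zheng} together with the verification that the resulting equivalence intertwines the braided identifications of the centers; this is also what forces one first to prove that $\EB\mathrm{rd}^{\mathrm{cl}}$ is a well-defined category, i.e. that composites of closed bimodules remain closed \cite[Thm.\,5.19]{kong-zheng}. A secondary difficulty lies in faithfulness: one must confirm that the braided equivalence on centers encodes enough to recover $\EM$ up to equivalence of $\EF$-modules, which is precisely where the theory of Lagrangian algebras, rather than mere Morita invariance of the center, is indispensable.
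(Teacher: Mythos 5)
Your proposal is correct and follows essentially the same route as the paper: the paper's own (deferred) argument likewise rests on the monoidal factorization $\fun_{\EA|\EB}(\EM,\EM)\boxtimes_{Z_2(\EB)}\fun_{\EB|\EC}(\EN,\EN)\simeq\fun_{\EA|\EC}(\EM\boxtimes_\EB\EN,\EM\boxtimes_\EB\EN)$ of \cite[Thm.\,5.8]{kong-zheng} for functoriality (together with the well-definedness of $\EB\mathrm{rd}^{\mathrm{cl}}$, \cite[Thm.\,5.19]{kong-zheng}), and on the bijections closed bimodules $\leftrightarrow$ Lagrangian algebras in $Z_2(\EA)\boxtimes\overline{Z_2(\EB)}$ $\leftrightarrow$ indecomposable semisimple $\EA$-$\EB$-bimodules \cite{dmno} for full faithfulness. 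Your folding via $\EF=\EA_2\boxtimes\EB_2^\rev$ is a harmless repackaging of the same argument.
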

\void{
\pf
A detailed proof will be given in \cite{kong-zheng}. We only sketch a proof here by assuming that $\EB\mathrm{rd}^{\mathrm{cl}}$ is a well-defined 1-category. The functoriality of $Z$ follows from Prop.\,\ref{prop:M-N} and the associativity of the tensor product $\boxtimes_{\EA,\EB,\cdots}$ between bimodules over fusion categories $\EA, \EB, \cdots$ \cite{eno2009}. The fully faithfulness follows from the fact that 1-morphisms ${}_{Z(\EA)}\EE_{Z(\EB)}$ in $\EB\mathrm{rd}^{\mathrm{cl}}$ one-to-one correspond to the Lagrangian algebras in $Z(\EA)\boxtimes \overline{Z(\EB)}$, which further one-to-one correspond to indecomposable semisimple $\EA$-$\EB$-bimodules \cite{dmno}.
\epf
}



\void{
\begin{rema} \label{rema:Z-functor} {\rm
Mathematically, it is natural to extend the functor defined in Prop.\,\ref{prop:Z-functor} to include higher morphisms. More precisely, one can extend the domain category of $Z$ to have 1-morphisms given by bimodules, 2-morphisms given by bimodule functors, 3-morphisms by natural transformations; and extend the target category to have 1-morphisms given by monoidal bimodules, 2-morphisms by usual bimodules together with a distinguished object, and 3-morphisms by morphisms between the distinguished objects. More precisely, the functor $Z$ maps the following morphisms in the target category 
$$
\xy 0;/r.22pc/:
(0,15)*{};
(0,-15)*{};
(0,8)*{}="A";
(0,-8)*{}="B";
{\ar@{=>}@/_1pc/ "A"+(-4,1) ; "B"+(-3,0)_{\EuScript{F}}};
{\ar@{=}@/_1pc/ "A"+(-4,1) ; "B"+(-4,1)};
{\ar@{=>}@/^1pc/ "A"+(4,1) ; "B"+(3,0)^{\EuScript{G}}};
{\ar@{=}@/^1pc/ "A"+(4,1) ; "B"+(4,1)};
{\ar@3{->} (-6,0)*{} ; (6,0)*+{}^{\phi}};
(-15,0)*+{\EuScript{C}}="1";
(15,0)*+{\EuScript{D},}="2";
{\ar@/^2.75pc/ "1";"2"^{\EuScript{M}}};
{\ar@/_2.75pc/ "1";"2"_{\EuScript{N}}};
\endxy 
$$
to the following morphisms in the target category (to make them more pictorial, we use different but equivalent ways to define the morphisms):
$$
\raisebox{3em}{\xymatrix@R=1em{ 
  & & Z(\EuScript{M}) \ar[ldd]^{\EuScript{F}_\ast} \ar[rdd]_{\EuScript{G}_\ast} & & \\
  & & & & \\
Z(\EuScript{C}) \ar[rruu]^{L_\EuScript{M}} \ar[rrdd]_{L_\EuScript{N}}   & Z(\EuScript{M}, \EuScript{N})
      \ar@3{->}[rr]^{Z(\phi)}
  & & Z(\EuScript{M}, \EuScript{N})
    & Z(\EuScript{D}), \ar[lluu]_{R_\EuScript{M}} \ar[lldd]^{R_\EuScript{N}} \\
    & & & &  \\
    & & Z(\EuScript{N}) \ar[luu]_{\EuScript{F}^\ast} \ar[ruu]^{\EuScript{G}^\ast} & & 
}}
$$
where the 1-morphism $Z(\EM)$ as a monoidal bimodule is equivalently defined by the cospan of monoidal categories such that $L_\EM$ and $R_\EM$ are central functors, the 2-morphism 
$$
(Z(\EM, \EN):=\fun_{\EC|\ED}(\EM, \EN), \,\, \EF)
$$ 
is equivalently defined by the cospan of modules such that the push-forward $\EF_\ast$ is a left $Z(\EM)$-module functor and the pull-back $\EF^\ast$ a right $Z(\EM)$-module functor, and the 3-morphism $Z(\phi)$ is just the natural transformation $\phi: \EF \to \EG$. The functoriality of this extended $Z$, first proposed in \cite{kong-icmp}, is still conjectural. Whether such extended functoriality can be interpreted as a boundary-to-bulk functor is not entirely clear to us. But it does have a clear physical meaning as the data in the domain category defines Levin-Wen models with defects and its image under $Z$ defines topological excitations in the models. The functoriality of $Z$ is tautological from this point of view. We believe that such extended functoriality also holds for the centers of unitary fusion $n$-categories as well. Similar functoriality of the notion of  the center of an algebra in a monoidal category was implicit in the construction of on 2D rational conformal field theories with topological defects in \cite{ffrs-defect}, and was formulated precisely and proved rigorously in the context of 2D TQFTs \cite{dkr2} and in 2D rational conformal field theories \cite{dkr3}. 
}
\end{rema}
}

\begin{rema} {\rm
In the Language of Jacob Lurie \cite{lurie2}, the monoidal $Z_2(\EA)$-$Z_2(\EB)$-bimodule $Z_2^{(1)}(\EM)$ is an $E_1$-algebra over the $E_2$-algebra $Z_2(\EA)\boxtimes \overline{Z_2(\EB)}$. Thm.\,\ref{prop:Z-functor} is a special case of the lax functoriality of the center of an $E_n$-algebra over an $E_{n+1}$-algebra in a symmetric monoidal $\infty$-category. 
}
\end{rema}

For $n>2$, we expect a similar (but not fully faithful) functoriality of the center $Z_n$ to be true for unitary fusion $n$-categories. Note that $\Z_3$ is not faithful because there are many non-trivial 1-codimensional gapped exact walls in $\TO_3^{wall}$ that are mapped to trivial walls under $\Z_3^{(1)}$ (see Remark\,\ref{expl:exact}).

\begin{figure}[tb]
 \begin{picture}(150, 90)
   \put(155,10){\scalebox{2}{\includegraphics{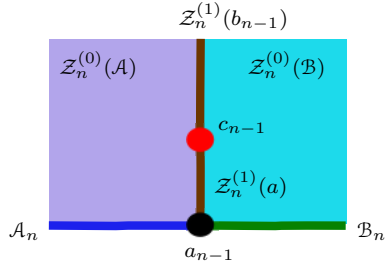}}}
   \put(60,-55){
     \setlength{\unitlength}{.75pt}\put(-18,-19){
     \put(125, 108)       {\footnotesize $\EA_n$}
     \put(300, 108)       {\footnotesize $\EB_n$}
     \put(213,98)      {\footnotesize $ a_{n-1} $}
     \put(228,130)    {\footnotesize $ \Z_n^{(1)}(a) $}
     \put(230,162)    {\footnotesize $ c_{n-1} $}
     \put(150, 190)    {\footnotesize $\Z_n^{(0)}(\EA)$}
     \put(245, 190)    {\footnotesize $\Z_n^{(0)}(\EB)$}     
     \put(210,215)     {\footnotesize $\Z_n^{(1)}(b_{n-1})$}
     }\setlength{\unitlength}{1pt}}
  \end{picture}
\caption{Above picture show intuitively how to define a morphism $a_{n-1} \to b_{n-1}$ for anomalous domain walls $a_{n-1}$ and $b_{n-1}$ between $\EA_n$-phase and $\EB_n$-phase. 
}
\label{fig:wall-morphism}
\end{figure}

\smallskip
We have shown that $\Z_n^{(0)}$ satisfies the same universal property as that of the center. Actually, in Fig.\,\ref{fig:unique-bulk-hypothesis}, the $\Z_n^{(1)}$-\bulk of the domain wall $a_{n-1}$ is also some kind of center. To see this, one can first introduce the notion of a morphism between two domain walls between the $\EA_n$-phase and the $\EB_n$-phase similar to the notion of a morphism between two topological orders. Such a morphism $a_{n-1} \to b_{n-1}$ between domain walls $a_{n-1}$ and $b_{n-1}$ is defined by a pair 
$$
(c_{n-1},\,\,\,\, c_{n-1}\boxtimes_{\Z_n^{(1)}(a)}a_{n-1} \xrightarrow[\simeq]{\phi_{n-2}} b_{n-1}),
$$ 
depicted in Fig.\,\ref{fig:wall-morphism}. Then one can prove that the universal property of $\Z_n^{(1)}$ is the same as that of the center considered in the category of domain walls. We omit details here. Note that, mathematically, the right hand side of Eq.\,(\ref{eq:Z-1}) is indeed a center (or an internal hom) in the category of bimodules. 
Similarly,  we can show that $\Z_n^{(2)}, \Z_n^{(3)}, \cdots$ defined for higher codimensional domain walls are also some kind of centers by proving their universal properties. 
Therefore, the functor $\Z_n$ is indeed the center functor.

\medskip
Moreover, the strong unique-bulk hypothesis implies an ``exact sequence" \cite{kong-wen}: 
$$
\cdots \to \TO_n^{wall} \xrightarrow{\Z_n} \TO_{n+1}^{wall}
\xrightarrow{\Z_{n+1}} \TO_{n+2}^{wall} \to \cdots .
$$
In particular, we expect that the identities $\Z_{n+1}^{(i)}(\Z_n^{(i)}(\cdot))=\one_{n+2-i}$ hold not only for $i=0$ but also for $i=1,2,\ldots,n-1$. The mathematical meaning of these later cases (for $i>0$) have not been discussed before. We briefly discuss the case $i=1$ below.

\medskip
Mathematically, let $\EA_{n+1}$ and $\EB_{n+1}$ be unitary fusion $n$-categories and $(\EM_n, a)$ a (potentially anomalous) wall between $\EA_{n+1}$ and $\EB_{n+1}$, where $\EM_n$ is a unitary $n$-category and an $\EA$-$\EB$-bimodule. In this case, we have 
\be \label{eq:Z-1-1}
Z_{n+1}^{(1)} ((\EM_n, a)) = \fun_{\EA|\EB}(\EM_n, \EM_n)^\rev.
\ee
If $\EA_{n+1}$ and $\EB_{n+1}$ are closed, they are both unitary braided $(n-$$1)$-categories. Let $\EM_n$ be a unitary multi-fusion $\EA$-$\EB$-bimodule (\cite{kong-zheng}). Both of the bulk-to-wall maps $\EA \to \EM$ and $\EB \to \EM$ factor through $Z_n^{(0)}(\EM_n)=\fun_{\EM|\EM}(\EM, \EM)$. 
In this case, we expect that $Z_{n+1}^{(1)} (\EM_n)$ can be equivalently defined as follows: 
\be \label{eq:Z-1-2}
Z_{n+1}^{(1)} (\EM_n) = \fun_{\EM|\EM}^{\EA|\EB}(\EM, \EM)^\rev,
\ee
where the right hand side is the full subcategory of $\fun_{\EM|\EM}(\EM, \EM)$ containing 
those objects that are symmetric to the images of $\EA_{n+1}$ and $\EB_{n+1}$ in $\fun_{\EM|\EM}(\EM, \EM)$. 
The physical meaning of the definition Eq.\,(\ref{eq:Z-1-2}) is quite obvious. Since, in this case, we have
$$
Z_n^{(0)}(\EM_n) = \EA_{n+1} \boxtimes Z_{n+1}^{(1)} (\EM_n) \boxtimes \overline{\EB}_{n+1},
$$
the excitations in $\EA_{n+1}$, $\Z_{n+1}^{(1)} (\EM_n)$ and $\overline{\EB}_{n+1}$ can all be viewed as excitations in $\Z_n^{(0)}(\EM_n)$. It is clear that all three sets of excitations are mutually symmetric. 
\void{
\begin{expl} {\rm 
In Fig.\,\ref{fig:unique-bulk-hypothesis}, if $\EA_n$ and $\EB_n$ are closed, then we have 
$$
\Z_n^{(0)}(\EM) = \EA_n^\op \boxtimes \EX_n \boxtimes \EB_n.
$$ 
In this case, the ``bulk" domain wall $\EX_n=\Z_n^{(1)}((\EM,a_{n-1}))$ can indeed be characterized as the subcategory of $\Z_n^{(0)}((\EM,a_{n-1}))$ consisting of those objects that are symmetric to objects in $\EA_n$ and $\EB_n$. 
}
\end{expl}
}

\begin{rema} {\rm
In \cite{lurie2}, Lurie defined a notion of the center for an $E_k$-algebra over an $E_{k+1}$-algebra. It will be interesting to know if it reduces to Eq.\,(\ref{eq:Z-1-2}) in our cases.  
}
\end{rema}

\begin{expl} {\rm
The functor $\Z_2$ is given by the mathematical functor in Thm.\,\ref{prop:Z-functor} with $\Z_2^{(1)}(\EM_1)=Z_2^{(1)}(\EM_1)$. When we apply the functor $\Z_3$, we obtain
$$
Z_3^{(1)}(Z_2^{(1)}(\EM_1)) = \fun_{Z_2^{(1)}(\EM_1)|Z_2^{(1)}(\EM_1)}^{Z_2(\EA_2)|Z_2(\EB_2)}(Z_2^{(1)}(\EM_1), Z_2^{(1)}(\EM_1))^\rev=\one_3, 
$$
which follows from the fact that $Z_2(\EA_2)\boxtimes \overline{Z_2(\EB_2)} = \fun_{Z_2^{(1)}(\EM)|Z_2^{(1)}(\EM)}(Z_2^{(1)}(\EM), Z_2^{(1)}(\EM))$ and Eq.\,(\ref{eq:Z-1-2}) for $n=3$. 
}
\end{expl}

In general, we expect that the following mathematical result to be true. 
\begin{conj}
Let $\EA_{n+1}$ and $\EB_{n+1}$ be two unitary $n$-fusion categories and $\EM_n$ a unitary indecomposable $\EA$-$\EB$-bimodule (a unitary $n$-category). Let $Z_{n+1}(\EA)$, $Z_{n+1}(\EB)$ and $Z_{n+1}^{(1)}(\EM)$ be $\fun_{\EA|\EA}(\EA,\EA)$, $\fun_{\EB|\EB}(\EB,\EB)$ and $\fun_{\EA|\EB}(\EM, \EM)^\rev$, respectively. Then we have
$$
Z_{n+2}^{(1)}(Z_{n+1}^{(1)}(\EM_n)) = \fun_{Z_{n+1}^{(1)}(\EM)|Z_{n+1}^{(1)}(\EM)}^{Z_{n+1}(\EA)|Z_{n+1}(\EB)}(Z_{n+1}^{(1)}(\EM_n), Z_{n+1}^{(1)}(\EM_n))^\rev=\one_{n+2}.
$$
\end{conj}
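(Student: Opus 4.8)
The plan is to reduce the left-hand side to a Müger-type centralizer inside a Deligne product and then kill it by non-degeneracy. First I would apply the relative-centralizer description of $Z^{(1)}$ recorded in Eq.\,(\ref{eq:Z-1-2}) one dimension higher. The centers $Z_{n+1}(\EA)=\fun_{\EA|\EA}(\EA,\EA)$ and $Z_{n+1}(\EB)=\fun_{\EB|\EB}(\EB,\EB)$ are closed unitary braided fusion $n$-categories (Conjecture\,\ref{conj:f-Z-bf} together with $Z_{n+2}(Z_{n+1}(-))=\one$ from Conjecture\,\ref{conj:Z2=0}), so the wall $Z_{n+1}^{(1)}(\EM)=\fun_{\EA|\EB}(\EM,\EM)^\rev$ is a closed $(n+1)$D wall between the two closed orders $Z_{n+1}(\EA)$ and $Z_{n+1}(\EB)$. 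Substituting $\EA\rightsquigarrow Z_{n+1}(\EA)$, $\EB\rightsquigarrow Z_{n+1}(\EB)$, $\EM\rightsquigarrow Z_{n+1}^{(1)}(\EM)$ into Eq.\,(\ref{eq:Z-1-2}) identifies $Z_{n+2}^{(1)}(Z_{n+1}^{(1)}(\EM))$ with the full subcategory of $\fun_{Z_{n+1}^{(1)}(\EM)|Z_{n+1}^{(1)}(\EM)}(Z_{n+1}^{(1)}(\EM),Z_{n+1}^{(1)}(\EM))$ consisting of those objects having trivial double braiding with the images of $Z_{n+1}(\EA)$ and $Z_{n+1}(\EB)$.

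The second step is the double-center (folding) identity. Since $Z_{n+1}^{(1)}(\EM)$ is a closed wall, its \bulk is the stacking of the bulks on its two sides, i.e.\ I would establish
\be
\fun_{Z_{n+1}^{(1)}(\EM)|Z_{n+1}^{(1)}(\EM)}\bigl(Z_{n+1}^{(1)}(\EM),Z_{n+1}^{(1)}(\EM)\bigr) \simeq Z_{n+1}(\EA)\boxtimes \overline{Z_{n+1}(\EB)}.
\ee
For $n=1$ this is exactly the statement that the Drinfeld center of $\fun_{\EA|\EB}(\EM,\EM)$ equals $Z_2(\EA)\boxtimes\overline{Z_2(\EB)}$, which already underlies the $n=2$ instance computed just above the conjecture and the fully faithful functoriality of Thm.\,\ref{prop:Z-functor}; for general $n$ it is the boundary-bulk relation applied to the closed wall $Z_{n+1}^{(1)}(\EM)$, i.e.\ an instance of Eq.\,(\ref{eq:Z2=0}). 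Under this identification the two bulk-to-wall images become the canonical embeddings of the two tensor factors $Z_{n+1}(\EA)\boxtimes\mathbf 1$ and $\mathbf 1\boxtimes\overline{Z_{n+1}(\EB)}$.

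Now I would finish by the standard generation argument. On a Deligne product the double braiding is multiplicative over the factors, so an object symmetric to every object of $Z_{n+1}(\EA)\boxtimes\mathbf 1$ and of $\mathbf 1\boxtimes\overline{Z_{n+1}(\EB)}$ is symmetric to every simple object $p\boxtimes q$, hence to all of $Z_{n+1}(\EA)\boxtimes \overline{Z_{n+1}(\EB)}$; that is, it lies in the centralizer (Müger center) of the whole product. Both $Z_{n+1}(\EA)$ and $Z_{n+1}(\EB)$ are non-degenerate: being centers, their centralizers are $\hom_{Z_{n+2}(Z_{n+1}(-))}(1,1)\simeq\hom_{\one_{n+3}}(1,1)=\one_{n+2}$ by the characterization in the Remark following Def.\,\ref{def:centralizer} and Conjecture\,\ref{conj:Z2=0}. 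The reverse of a non-degenerate braided fusion $n$-category is again non-degenerate, and so is a Deligne product of two such, so its centralizer is trivial. Taking $\rev$ does not affect triviality, yielding $\one_{n+2}$.

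The hard part will be the two categorical inputs, which are precisely the facts that are proven for $n=1$ (Müger, ENO) but conjectural in general. The displayed folding identity is the boundary-bulk relation for a closed wall and asserts that $Z_{n+2}(Z_{n+1}^{(1)}(\EM))$ splits as the Deligne product of the two adjacent bulks; and the collapse of the relative centralizer to $\one_{n+2}$ rests on non-degeneracy forcing a trivial centralizer, which is the content of Conjecture\,\ref{conj:closed-f=bf} combined with $Z_{n+2}(Z_{n+1}(-))=\one$. I would also need to justify, for unitary braided fusion $n$-categories, that reverses and Deligne products preserve non-degeneracy and that the joint centralizer of the two factors coincides with the centralizer of the whole product; for $n=1$ these are classical, and in general they should follow from the $E_2$-structure and Lurie's center machinery invoked in Remark\,\ref{rema:lurie}.
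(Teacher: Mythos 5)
Your proposal is correct (modulo the same conjectural inputs the statement itself rests on) and takes essentially the paper's own route: the paper justifies this identity via the folding fact $\fun_{Z_{n+1}^{(1)}(\EM)|Z_{n+1}^{(1)}(\EM)}(Z_{n+1}^{(1)}(\EM),Z_{n+1}^{(1)}(\EM))\simeq Z_{n+1}(\EA)\boxtimes\overline{Z_{n+1}(\EB)}$ combined with the relative-centralizer description of Eq.\,(\ref{eq:Z-1-2}) applied one dimension higher, which are exactly your two steps. Your explicit closing argument---that the joint centralizer of the two factor images is the M\"{u}ger center of a non-degenerate Deligne product and hence trivial---is precisely the step the paper leaves implicit in its $n=1$ example, and you rightly flag that the folding identity and the non-degeneracy collapse are theorems only for $n=1$ (via Thm.\,\ref{prop:Z-functor}, Conjectures\,\ref{conj:f-Z-bf}, \ref{conj:closed-f=bf} and \ref{conj:Z2=0}) and remain conjectural for general $n$, matching the paper's own status of the statement as a conjecture.
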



\void{

\subsection{Center of $\prebf_{3}$-categories}

\begin{defn}{\rm
A centralizer $Z_{E_k}(f)$ of $f: A\to B$ in $\alg_{E_k}(\EC)$ is defined by the terminal object of all of the following commutative diagram:
$$
\xymatrix{
& Z_{E_k}(f) \otimes A \ar[rd]^{\underline{f}} & \\
A \ar[rr]^f   \ar[ru]^{\iota_{Z_{E_k}(f)}\otimes 1} & & B \, .
}
$$
Note that this definition is the same as that of internal hom except that all arrows in above diagram are $E_k$-algebra homomorphisms. When $f=\id_A: A \to A$, $Z_{E_k}(\id_A)$ is called an $E_k$-center of $A$. We denote it as $Z_{E_k}(A)$ for simplicity.
}
\end{defn}

\begin{thm} \cite{lurie2}
Let $k>0$, and let $\EC^\otimes$ be a symmetric monoidal $\infty$-category. Assume that $\EC$ is presentable and that the tensor product $\otimes: \EC \otimes \EC \to \EC$ preserves small colimits separately in each variable. Then
\bnu
\item For every morphism $f: A \to B$ in $\alg_{E_k}(\EC)$, there exists a centralizer $Z_{E_k}(f) \in \alg_{E_k}(\EC)$.
\item For every object $A$ in $\alg_{E_k}(\EC)$, there exists a center:
$$
Z_{E_k}(A) \in \alg(\alg_{E_k}(\EC)) \simeq \alg_{E_{k+1}}(\EC).
$$
\enu
\end{thm}

\begin{expl} {\rm
We give some example of the notion of $E_k$-center. As we will see that the $E_k$-center are all given by internal hom.
\begin{enumerate}

\item For an $E_0$-category $\EC$, $Z_{E_0}(\EC) = \fun(\EC, \EC)$.  In particular, $Z_{E_0}(\EC)$ is a usual monoidal category.

\item for an $E_1$-category $\ED$,  $Z_{E_1}(\ED) = \fun_{\ED|\ED}(\ED, \ED)$, where $\ED$ is viewed as an $E_0$-valued $\ED$-bimodule. If $\ED$ is a monoidal category, then $Z_{E_1}(\ED)$ is nothing but the monoidal center of $\ED$.

\item For an $E_2$-category $\EE$,  $Z_{E_2}(\EE) = \fun_{\EE|\EE}(\EE, \EE)$, where $\EE$ is viewed as an $E_1$-valued $\EE$-bimodule. If $\EE$ is just a braided tensor category, then $Z_{E_2}(\EE)$ is the centralizer of $\EE$. In particular, if $\EE$ is non-degenerate, such as a modular tensor category, then $Z_{E_2}(\EE)=\vect$ is trivial.

\end{enumerate}
}
\end{expl}

We give some example of the notion of $E_k$-center. As we will see that, for $k\ge1$, the $E_k$-center are all given by internal hom.

\begin{expl} {\rm
Consider the symmetric monoidal category $\vect$ of vector spaces.
\begin{enumerate}

\item An $E_0$-algebra in $\vect$ is a vector space $A$ with a distinguished vector $1\in A$. The $E_0$-center $Z_{E_0}(A)$ consists of those linear maps $A\to A$ which preserve $1$.  In particular, $Z_{E_0}(A)$ is an associative algebra.

\item An $E_1$-algebra is an associative algebra $A$. We have $Z_{E_1}(A) = \Hom_{A|A}(A,A)$, which is nothing but the usual center of $A$.

\item For $k\ge2$, an $E_k$-algebra is a commutative algebra $A$. We have $Z_{E_k}(A) = \Hom_A(A,A) \cong A$.

\end{enumerate}
}
\end{expl}

\begin{expl} {\rm
Consider the symmetric monoidal $(2,1)$-category $Cat$, whose objects are categories, morphisms are functors, $2$-morphisms are functor isomorphisms, tensor product is given by Cartesian product.
\begin{enumerate}

\item An $E_0$-algebra in $Cat$ is a category $\EC$ with a distinguished object $1\in\EC$. The $E_0$-center $Z_{E_0}(\EC)$ is the full subcategory of $\fun(\EC,\EC)$ consisting of those functors which preserve $1$.  In particular, $Z_{E_0}(\EC)$ is a monoidal category.

\item An $E_1$-algebra is a monoidal category $\EC$. We have $Z_{E_1}(\EC) = \fun_{\EC|\EC}(\EC,\EC)$, which is nothing but the monoidal center of $\EC$.

\item An $E_2$-algebra is a braided monoidal category $\EC$, and $Z_{E_2}(\EC)$ is the centralizer of $\EC$. In particular, if $\EC$ is a non-degenerate braided fusion category, such as a modular tensor category, then $Z_{E_2}(\EC)\simeq\vect$ is trivial.

\item For $k\ge3$, an $E_k$-algebra is a symmetric monoidal category $\EC$. We have $Z_{E_k}(\EC) = \fun_\EC(\EC,\EC) \simeq \EC$.

\end{enumerate}
}
\end{expl}

\begin{lemma}  \label{lem:E-2-center}
If $\EE$ is a braided tensor category, then $Z_{E_2}(\EE)$ is the centralizer of $\EE$: a full braided tensor subcategory of $\EE$ consisting of all objects $x$ in $\EE$ such that the composed map $x\otimes y \xrightarrow{c_{x,y}} y\otimes x \xrightarrow{c_{y,x}} x\otimes y$ is the identity map.
\end{lemma}
\pf
Consider any $E_2$-action of $\ED$ on $\EE$, i.e. a braided monoidal functor $\EF: \EE \times \ED \to \EE$. The restriction $\EF_\ED: \one_\EE \times \ED \to \EE$ is also a braided monoidal functor. The action of $\ED$ is via the functor $\EF_\ED$ which maps into $\EE$. Therefore, the universal property of the $E_2$-center implies that the $E_2$-center must be the maximal braided tensor subcategory $\EE'$ of $\EE$ that can have an $E_2$-action on $\EE$. Notice that $\EE' \times \EE \xrightarrow{\otimes} \EE$ must be a braided tensor functor. Take $x_1, x_2 \in \EE', y_1, y_2\in \EE$, the axioms of braided tensor functor implies that the following diagram:
$$
\xymatrix{
x_1\otimes x_2 \otimes y_1 \otimes y_2 \ar[rr]^{1c_{x_2,y_1}1}  \ar[d]_{c_{x_1,x_2}c_{y_1,y_2}} &  & x_1\otimes y_1 \otimes x_2 \otimes y_2 \ar[d]^{c_{x_1\otimes y_1, x_2\otimes y_2}}  \\
x_2 \otimes x_1 \otimes y_2 \otimes y_1 \ar[rr]^{1c_{x_1,y_2}1} & & x_2\otimes y_2 \otimes x_2 \otimes y_1.
}
$$
which is equivalent to the condition:
$$
x_2 \otimes y_1 \xrightarrow{c_{y_1,x_2}c_{x_2,y_1}=1} x_2 \otimes y_1
$$ for all $x_2,y_1$. In other words, $\EE'$ is the centralizer of $\EE$.
\epf

\begin{lemma}
If $\EE$ is a unitary fusion category such that it describes the excitations on the domain wall between two phases given by two unitary modular tensor categories $\EC$ and $\ED$. Then the $E_1$-center of $\EE$ is trivial.
\end{lemma}
\pf
What lies in the $E_1$-center of $\EE$ are monoidal functors from $\EE$ to $\EE$ such that it preserves the $E_1$-action of $\EC$ and $\ED$, i.e. a monoidal functor $f$ such that
$$
\xymatrix{
\EC \times \EE \times \ED \ar[r] \ar[d]_{1f1} &  \EE \ar[d]^f  \\
\EC \times \EE \times \ED \ar[r]  &  \EE.
}
$$
If we compose the diagram with $\one_\EE \hookrightarrow \EE$, we obtain that $f$ preserves the bulk-to-wall map $\EC \times \ED \to \EE$. Since $\EC$-$\ED$-action on $\EE$ is dominant, it is easy to verify that the only possible functor is the identity functor.
\epf

\subsection{Higher dimensional cases}

}

\section{Conclusions and outlooks}  \label{sec:conclusion}

Although the main result in this work is the boundary-bulk relation for topological orders in any dimension, another secrete goal of this work is to use this relation as a tool to determine what a proper categorical description of a topological order should be. As far as we can see, the categorical description given in this work seems works pretty well with the boundary-bulk relation. If this categorical description of topological orders indeed works, it suggests something surprising. 
Naively, one expect the higher dimensional theories to be much more complicated than the 3D theory due to the complexity of the fusion-braiding properties of higher dimensional excitations (see for example \cite{wl}). This work, however, suggests that the higher dimensional theories might be similar to the 3D theory. We give two possible generalizations of the 3D theory below. 
\bnu
\item Our theory suggests that the condensation theory in $n+$2D should be similar to that in 3D \cite{kong-anyon}. A closed $n+$2D topological order is given by a unitary braided fusion $n$-category $\EC_{n+2}$ with a trivial $Z_{n+2}$-center, a condensation should be completely determined by a commutative separable algebra $A$ in $\EC_{n+2}$ (possibly satisfying additional conditions). The condensed phase should be described by the unitary braided fusion $n$-category of local $A$-modules, and the gapped domain wall created between the original and condensed phases consists of confined excitations given by the unitary fusion $n$-category $\EC_A$ of $A$-modules in $\EC$. When the condensed phase is trivial, we believe that the bulk-to-boundary functor $L: \EC \to \EC_A$ determines a Lagrangian algebra $L^\vee(A)$ in $\EC$, where $L^\vee$ is the right adjoint functor of $L$, such that the category of local $L^\vee(A)$-modules in $\EC$ is trivial, i.e. $\EC_{L^\vee(A)}^{loc}=\one_{n+2}$. 
 
\item The mathematical description of a closed $n+$2D topological order enriched by invertible 1-codimensional walls might be similar to that in 3D \cite{bbcw,lkw1}. It is very easy to enrich a closed $n+$2D topological order $\EC_{n+2}$, a unitary braided fusion $n$-category, by invertible 1-codimensional walls. We assume that the equivalence classes of these invertible walls are labeled by the elements of a finite group $G$. In other words, $G$ acts on $\EC_{n+2}$ as automorphisms. Adding these invertible 1-codimensional walls to $\EC_{n+2}$, we obtain an extension $\EC_{n+2}^G$ of $\EC_{n+2}$. The $(n+$$2)$-category $\EC_{n+2}^G$ has finitely many equivalence classes of simple 1-morphisms labeled by the elements of $G$ and the identity element $1\in G$ is the identity 1-morphism $\id_\ast$. 
The $(n+$$2)$-category $\EC_{n+2}^G$ is not unitary because the 5th axiom in Def.\,\ref{def:unitary-n-cat} does not hold. In particular, each $\hom(g,h)$ is a non-zero unitary $n$-category for $g,h\in G$ in general. The structures of $\EC_{n+2}^G$ are completely encoded in the substructure $\oplus_{g,h\in G} \hom(g,h)$, which turns out to be a $G$-crossed unitary braided fusion $n$-category. It is the direct higher categorical analogue of a $G$-crossed unitary braided fusion 1-category in 3D (\cite{dgno,bbcw}). In particular, the expression $\EC_{n+2}^G=\oplus_{f\in G} \left( \oplus_{g^{-1}h=f} \hom(g,h) \right)$ gives a $G$-grading on $\EC_{n+2}^G$, and the fusion product $\hom(g,h) \times \hom(g',h')\to \hom(gh,g'h')$ is induced from the tensor product $g\otimes h\simeq gh$ and $g'\otimes h'\simeq g'h'$. As a consequence, we propose that symmetry enriched (by invertible 1-codimensional walls) closed $n+$2D topological orders are classified by $G$-crossed unitary braided fusion $n$-categories. We will give more details in \cite{lkw2}. 

\enu


\appendix

\section{Appendix}

\subsection{The definition of a unitary $n$-category} \label{sec:def-unitary-n-cat}


In this subsection, we give the definition of a unitary $n$-category. Before we start, we first review some elements of an $n$-category (see for example \cite{lurie3}).

\medskip
We assume that a good definition of $n$-category is chosen. We always use the subscript in $\EC_n$ to indict that $\EC$ is an $n$-category. Sometimes the subscript is omitted if it is clear from the context. An object in an $n$-category  $\EC_n$ is also called a $0$-morphism. We also refer to $\EC_n$ itself as the unique $-1$-morphism. In particular, a 0-category $\EC_0$ is just the set of 0-morphisms. A $\Cb$-linear 0-category $\EC_0$ is a vector space over $\Cb$. The opposite category $\EC_n^\op$ is defined by flipping all $n$-morphisms. For a $\Cb$-linear $0$-category $\EC_0$, $\EC_0^\op$ is the dual vector space $\hom_\Cb(\EC_0, \Cb)$. 
For an $n$-category $\EC_n$, we define the homotopy 1-category $\mathrm{h}_1\EC$ by the 1-category with the same set of objects as $\EC$ and 1-morphisms given by the equivalence classes of $1$-morphisms in $\EC$; we define the homotopy 2-category $\mathrm{h}_2\EC$ by the 2-category with the same set of objects and 1-morphisms as $\EC$ and 2-morphisms given by the equivalence classes of $2$-morphisms in $\EC$. 

\begin{defn}  {\rm
Let $\EC_2$ be a 2-category and $f: x \to y$ a 1-morphism. 
\bnu 
\item $f$ is said to have a left adjoint if there exist a 1-morphism $g: y \to x$, the unit 2-morphism $\eta: \id_x \Rightarrow g\circ f$ and the co-unit 2-morphism $\epsilon: f\circ g \Rightarrow \id_y$ such that 
$$
\id_f=(f \xrightarrow{\simeq} f \circ \id_x \xrightarrow{1\eta} f \circ g \circ f \xrightarrow{\epsilon 1} f), \quad\quad
\id_g=(g \xrightarrow{\simeq} \id_x \circ g \xrightarrow{\eta1} g\circ f \circ g \xrightarrow{1\epsilon} g). 
$$

\item $f$ is said to have a right adjoint if there exist a 1-morphism $h: y\to x$, the unit 2-morphism $\tilde{\eta}: \id_y \Rightarrow f\circ h$ and the co-unit 2-morphism 
$\tilde{\epsilon}: h\circ f \Rightarrow \id_x$ such that 
$$
\id_f=(f \xrightarrow{\simeq} \id_y \circ f  \xrightarrow{\tilde{\eta}1} f \circ h \circ f \xrightarrow{1\tilde{\epsilon}} f), \quad\quad
\id_h=(h \xrightarrow{\simeq} h\circ \id_y \xrightarrow{1\tilde{\eta}} h\circ f \circ h \xrightarrow{\tilde{\epsilon}1} h). 
$$
\enu
$\EC_2$ is said to {\it have adjoints for 1-morphisms} if every 1-morphism in $\EC_2$ has both a left and a right adjoint. 
}
\end{defn}

\begin{defn} {\rm
Let $\EC_n$ be an $n$-category for $n\geq 2$. $\EC$ has adjoints for $1$-morphisms if $\mathrm{h}_2\EC$ has adjoints for $1$-morphisms. For $1<k<n$, $\EC$ has adjoints for $k$-morphisms if, for any pair of objects $x, y\in \EC$, the $(n-1)$-category $\hom_\EC(x,y)$ has adjoints for all $(k-1)$-morphisms. $\EC$ is said to have adjoints if $\EC$ has adjoints for $k$-morphisms for all $0<k<n$. 
}
\end{defn}

A monoidal $n$-category $\EC_n$ can be defined by an $(n+1)$-category with a single object $\ast$. Namely, the $n$-category $\hom(\ast, \ast)$ is an $n$-category with a monoidal structure. A monoidal $n$-category $\EC_n$ is said to have duals for the objects if the homotopy 1-category $\mathrm{h}_1\EC$ is a rigid monoidal category. Then $\EC_n$ is said to have duals if $\EC_n$ has duals for objects and adjoints for $k$-morphisms for all $0<k<n$. It is equivalent to $\EC$ having adjoints when it is viewed as an $(n+1)$-category with a single object.

\begin{defn} \label{def:finite-cat} {\rm
A $1$-category $\EC_1$ is called {\it finite} if $\EC$ is abelian, $\Cb$-linear and has finitely many simple objects and every object is the product of finitely many simple objects. Here, an object $x$ is simple if $\hom(x,x)=\Cb$. 
}
\end{defn}

\begin{defn}   \label{def:unitary-n-cat}  {\rm
For $n\geq 0$, an $n$-category $\EC_n$ is {\it unitary} if the following conditions are satisfied:
\bnu

\item $\EC_n$ is {\it $\Cb$-linear}. That is, for every pair of $(n-1)$-morphisms $f,g:X\to Y$ in $\EC$, $\hom(f,g)$ is a finite dimensional vector space over $\Cb$. The compositions of the morphisms in $\EC$ respect this linear structure.

\item $\EC_n$ is {\it finite}. That is, $\EC$ is closed under finite products, has finitely many simple objects, every object is the product of finitely many simple objects and $\hom(i,j)$ is a finite $(n-1)$-category for simple objects $i$ and $j$. Here, an $(n-1)$-morphism $x$ is {\it simple} if $\hom(x,x)=\Cb$, and a $k$-morphism $y$, for $0\le k<n-1$, is {\it simple} if the $(k+1)$-identity morphism $\id_y$ is simple.

\item $\EC_n$ {\it has adjoints}. That is, every $k$-morphism, $1\le k<n$, has both a left adjoint and a right adjoint.

\item There is an equivalence $\delta:\EC_n\to\EC_n^\op$ which fixes all $k$-morphisms for $0\le k<n$, and is antilinear, involutive and positive on $n$-morphisms, i.e.
\begin{equation}  \label{eq:delta}
\delta(\lambda f) = \bar{\lambda} \delta(f), \quad\quad \delta\delta(f) = f, \quad\quad f\circ\delta(f)=0 \Rightarrow f=0,
\end{equation}
for $n$-morphism $f: X \to Y$ and $\lambda \in \Cb$.

\item For two non-isomorphic simple $k$-morphisms
$i^{[k]}$ and $j^{[k]}$, $\hom(i^{[k]},j^{[k]})=0_{n-k-1}$, where $0_{n-k-1}$ is the zero $(n-k-1)$-category that has only the zero object and zero morphisms. 
  
\enu
}
\end{defn}

\begin{rema} {\rm
Note that $\hom(x^{[k]},y^{[k]})$ is a unitary $(n-k-2)$-category for two $k$-morphisms $x,y\in\EC_n$. The physical meaning of the 5th axiom in Def.\,\ref{def:unitary-n-cat} was explained in \cite{kong-wen}. 
}
\end{rema}

\begin{rema} {\rm
If $\EC_n$ is a unitary $n$-category, then $\EC_n$ has a terminal object $0$ as the zero product by definition. Since $\hom(x,0)\simeq\hom(0,x)$ by duality, we see that $0$ is a zero object. More generally, finite coproducts in $\EC$ coincide with finite products by duality. In this case, such a (co-)product is also called a {\it direct sum}. In a unitary $n$-category, the direct sum $x\oplus y$ (or the coproduct) of two objects $x$ and $y$ is characterized by the property that
$\hom(z, x\oplus y) \simeq \hom(z, x) \oplus \hom(z, y)$
as $(n-1)$-categories for all $z$. 
}
\end{rema}

\begin{prop}
Let $\EC_n$ be a unitary $n$-category containing a $k$-morphism $f:x\to y$, $1\le k<n$. Then the left adjoint and the right adjoint of $f$ are canonically isomorphic.
\end{prop}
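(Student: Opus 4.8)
The plan is to prove the statement by combining two standard facts about a $2$-category---that adjoints of a $1$-morphism are unique up to canonical isomorphism, and that an antilinear involution reversing the top cells interchanges left and right adjoints---and then to reduce the general $k$-morphism case to this $2$-categorical situation. The payoff is that it suffices to exhibit one $1$-morphism that is simultaneously a left and a right adjoint of $f$, after which uniqueness manufactures the canonical comparison.

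First I would fix the local $2$-category in which $f$ becomes a $1$-morphism. By the definition of ``has adjoints for $k$-morphisms'', the left and right adjoints of $f\colon x\to y$ are computed in the homotopy $2$-category $\EB:=\mathrm{h}_2\ED$, where $\ED$ is the iterated hom-category obtained by applying $\hom_\EC(-,-)$ to $f$ until it sits as a $1$-morphism; the objects of $\EB$ are the relevant $(k-1)$-morphisms, its $1$-morphisms the $k$-morphisms $f,g,h,\dots$, and its $2$-morphisms are equivalence classes of $(k+1)$-morphisms of $\EC$, among them the units and counits $\eta,\epsilon,\tilde\eta,\tilde\epsilon$. Since hom-categories of a unitary $n$-category are again unitary, $\ED$ is unitary, and I would record the standard lemma that if $g,g'$ are both left adjoints of $f$ then there is a unique invertible $2$-morphism $g\Rightarrow g'$ intertwining the units and counits, and dually for right adjoints.

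The heart of the argument is the case of top codimension $k=n-1$, where the units and counits are genuine $n$-morphisms and the involution $\delta\colon\EC_n\to\EC_n^\op$ of Axiom~4 acts directly. Given a left adjunction $(g,\ \eta\colon\id_x\Rightarrow g\circ f,\ \epsilon\colon f\circ g\Rightarrow\id_y)$, I would apply $\delta$, which reverses the direction of $n$-morphisms while fixing $f,g,x,y$, to obtain $n$-morphisms $\delta(\epsilon)\colon\id_y\Rightarrow f\circ g$ and $\delta(\eta)\colon g\circ f\Rightarrow\id_x$. These have exactly the shape of the unit and counit of a right adjunction exhibiting the same object $g$ as a right adjoint of $f$, and the two triangle identities for this right adjunction are the images under the functor $\delta$ of the triangle identities for the original left adjunction, using that $\delta$ is involutive and antilinear. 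Thus $g$ is at once a left and a right adjoint of $f$, and the uniqueness lemma yields the canonical isomorphism $g\cong h$.

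It remains to reduce general $k$ with $1\le k<n-1$ to the top-codimensional case, and this is the step I expect to be the main obstacle. For such $k$ the cells $\eta,\epsilon$ are $(k+1)$-morphisms, which $\delta$ fixes rather than reverses, so $\delta$ cannot be applied naively. The resolution I would pursue is a downward induction on codimension: by the recursive definition the adjoints of $f$ as a $k$-morphism are the adjoints of a top-dimensional morphism inside a unitary hom-category of strictly smaller categorical dimension, and each such hom-category carries its own antilinear involution coming from the global unitary structure---this is the statement that the top-level dagger of $\EC_n$ propagates to a coherent system of daggers on every hom-$(n-k-2)$-category, where full dualizability enters (cf.\ Remark~\ref{rema:time-reverse}). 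The genuine technical work is to verify that this induced involution on the $(k+1)$-cells is compatible with horizontal and vertical composition and transports the triangle identities correctly, so that the swapping argument of the previous paragraph applies verbatim one categorical level down; granting this coherence, the induction closes and the canonical isomorphism of left and right adjoints follows in all codimensions.
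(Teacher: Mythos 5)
Your argument for the top codimension $k=n-1$ is exactly the paper's: apply $\delta$ to the unit and counit of a left adjunction for $g$ to obtain the unit and counit of a right adjunction for the \emph{same} $g$, then invoke uniqueness of adjoints to get the canonical isomorphism. The gap is in your reduction for $1\le k<n-1$, and the "coherence" you defer is not merely technical --- it is unavailable. The involution $\delta$ of Axiom 4 in Def.\,\ref{def:unitary-n-cat} exists only on $n$-morphisms and fixes all lower cells, and the unitary structure inherited by each iterated hom-category again consists of a $\delta$ acting on its \emph{top} cells, which are still $n$-morphisms of $\EC_n$. Passing from $f$ as a $k$-morphism of $\EC_n$ to $f$ as a $1$-morphism of an $(n-k+1)$-dimensional hom-category leaves the distance between the level of $\eta,\epsilon$ (level $k+1$, i.e.\ $2$-cells of that hom-category) and the top level unchanged, so your downward induction never reaches the $2$-categorical base case unless $k=n-1$ to begin with; there is no induced antilinear involution reversing $(k+1)$-cells to be propagated. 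Remark\,\ref{rema:time-reverse} does not rescue this: full dualizability gives equivalences such as $\overline{\EC}_n\simeq\EC_n^\op$, but these are equivalences rather than identity-on-cells involutions --- they send $f$ and $g$ to their duals and hence cannot implement the swap while keeping $f$ and $g$ fixed.

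The paper closes the intermediate case by a different and more elementary device, which is the one missing from your proposal: for $k<n-1$ the unit $u:\id_y\to f\circ g$ and counit $v:g\circ f\to\id_x$ are $(k+1)$-morphisms with $k+1<n$, so by Axiom 3 they themselves admit adjoints. The adjoints $u^\vee: f\circ g\to\id_y$ and $v^\vee:\id_x\to g\circ f$ have precisely the directions of the counit and unit exhibiting $g$ as a \emph{right} adjoint of $f$, and the triangle identities for $(u^\vee,v^\vee)$ follow from those for $(u,v)$ because taking adjoints sends identities to identities and composites to reversed composites of adjoints. No dagger below the top level is needed; $\delta$ enters only at $k=n-1$, where $u,v$ are $n$-morphisms and have no adjoints. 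As written, your proof is complete only in that top case.
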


\pf
Let $g:Y\to X$ be the left adjoint of the $k$-morphism $f$ with unit $u:\id_y\to f\circ g$ and counit $v:g\circ f\to\id_x$. For $1\leq k<n$, the left (or right) adjoint of $u$ and $v$, i.e. $u^\vee:f\circ g\to\id_y$ and $v^\vee:\id_x\to g\circ f$ exhibit $g$ as the right adjoint of $f$. 
When $k=n-1$, $\delta(u)$ and $\delta(v)$ exhibit $g$ as the right adjoint of $f$. 
\epf

\begin{expl} {\rm
Def.\,\ref{def:unitary-n-cat} of a unitary $n$-category is heavily loaded. We unravel this definition in a few lower dimensional cases. Recall Sec.\,\ref{sec:time-reverse} for the definition of $\EC^\op$. 

\bnu

\item 
When $n=0$, $\EC_0^\op$ is obtained from $\EC_0$ by flipping the arrows. It makes sense if we interpret the vector space $\EC_0$ over $\Cb$ as $\hom_\Cb(\Cb, \EC_0)$. As a result, $\EC_0^\op=\hom_\Cb(\EC_0, \Cb)$. Then the conditions in (\ref{eq:delta}) guarantee that $\EC_0$ is a finite dimensional Hilbert space. 

\item When $n=1$, we recover the usual definition of a unitary 1-category (see for example \cite{mueger4}), which is defined as an abelian $\Cb$-linear finite $\ast$-category, where a $\ast$-category means a family of maps $\ast:\hom_\EC(x,y) \to \hom_\EC(y,x)$ (given by $\delta$) such that 
$$(g \circ f)^\ast = f^\ast \circ g^\ast, \quad\quad (\lambda f)^\ast = \bar{\lambda} f^\ast,\quad\quad f^{\ast\ast} = f,\quad\quad \forall \lambda \in \Cb^\times,
$$
satisfying the positivity condition $f\circ f^\ast =0 \Rightarrow f=0$. Note that $\ast = \delta(-)$. 

\item When $n=2$, a unitary $2$-category is a $\Cb$-linear finite $2$-category having adjoints for 1-morphisms such that all hom spaces are unitary $1$-categories and all coherence isomorphisms are unitary, i.e. 
\be  \label{eq:unitarity-coherence}
\delta(\alpha_{f,g,h}) = \alpha_{f,g,h}^{-1}, \quad \delta(l_f) = l_f^{-1}, \quad \delta(r_f) = r_f^{-1}, \quad
\delta(\eta)= \tilde{\epsilon}, \quad \delta(\epsilon)=\tilde{\eta},
\ee
for $1$-morphisms $f,g,h$, where $\alpha$, $l/r$ and $\eta, \tilde{\eta}, \epsilon, \tilde{\epsilon}$ are the associator, the left/right unit isomorphism and duality maps, respectively. Note that Eq.\,(\ref{eq:unitarity-coherence}) follows from the definition of coherence isomorphisms in the opposite category (see Sec.\,\ref{sec:time-reverse}). 

\item When $n>2$, the second axiom in Def.\,\ref{def:unitary-n-cat} encodes the unitarity 
of all coherence isomorphisms. For example, a unitary 3-category with a unique simple object $\ast$ and a unique simple $1$-morphism $\id_\ast$ (in $\hom(\ast, \ast)$) is just the usual unitary braided fusion 1-category, where the braiding is unitary, i.e. $\delta(c_{x,y}) = c_{x,y}^{-1}$ for $x\otimes y \xrightarrow{c_{x,y}} y\otimes x$. 
\enu
}
\end{expl}

\begin{defn} \label{def:unitary-n-functor} {\rm
An $n$-functor $\EF: \EC_n\to \ED_n$ between two unitary $n$-categories $\EC$ and $\ED$ is called {\it unitary} if $\EF$ is $\Cb$-linear for $n$-morphisms and $\EF \circ \delta = \delta \circ \EF$. 
}
\end{defn}

\subsection{The universal property of the \bulk~with higher morphisms} \label{sec:universal-higher-morphism}

In this subsection, we describe the universal property of the \bulk with higher isomorphisms. We drop the Minimal Assumption (see Remark\,\ref{rema:minimal}).

\medskip
Now the action $\rho: P_n(\Z_n(\EC_n)) \boxtimes \EC_n \to \EC_n$ is unital in the sense that there is a 2-isomorphism $\gamma: \rho \circ (\iota_{\Z_n(\EC_n)} \boxtimes \id_{\EC_n}) \overset{\simeq}{\Rightarrow} \id_{\EC_n}$, which is equivalent to the commutativity (up to a 2-isomorphism) of the following diagram:
\be  \label{diag:univ-property-1}
\raisebox{2em}{
\xymatrix@R=0.5em@C=2.5em{
& P_n(\Z_n(\EC_n)) \boxtimes \EC_n \ar[rdd]^\rho &  \\
& \Downarrow \,\, \gamma &  \\
\EC_n \ar[rr]^{\id_{\EC_n}} \ar[uur]^{\iota_{P_n(\Z_n(\EC_n))} \boxtimes \id_{\EC_n}} & & \EC_n\, .
}}
\ee
Then the universal property of the \bulk becomes much more complicated. A complete definition was given by Lurie \cite{lurie2}. We illustrate the first two layers of the structures below. 

\bigskip
\noindent {\bf Universal property of the \bulk}: {\it 
The triple $(P_n(\Z_n(\EC_n)), \rho, \gamma)$ is terminal among all such triples. More precisely, if $(\EX_n, f,\phi)$ is such a triple, where $\EX_n$ is an $n$D topological order and $f: \EX_n \boxtimes \EC_n \to \EC_n$ a morphism such that the following diagram:
\be  \label{diag:X-n}
\raisebox{2em}{
\xymatrix@R=0.5em@C=2.5em{
& \EX_n \boxtimes \EC_n \ar[rdd]^f &  \\
& \Downarrow \,\, \phi &  \\
\EC_n \ar[rr]^{\id_{\EC_n}} \ar[uur]^{\iota_{\EX_n} \boxtimes \id_{\EC_n}} & & \EC_n
}}
\ee
is commutative up to a 2-isomorphism $\phi: f\circ (\iota_{\EX_n} \boxtimes \id_{\EC_n}) \Rightarrow \id_{\EC_n}$, then there is a morphism between the triangles (\ref{diag:univ-property-1}) and (\ref{diag:X-n}) given by a triple $(\underline{f}, \alpha_{\underline{f}},\Phi_{\underline{f}})$, where $\underline{f}: \EX_n \to \Z_n(\EC_n)$ is a 1-morphism and  $\alpha_{\underline{f}}: \rho \circ (\underline{f} \boxtimes \id_{\EC_n}) \Rightarrow f$ a 2-isomorphism such that the following diagram:
\be  \label{diag:univ-prop}
\raisebox{4em}{
\xymatrix@R=0.4em@C=1em{
&  & P_n(\Z_n(\EC_n)) \boxtimes \EC_n \ar@/^2.8pc/[rrdddd]^\rho &  & \\
& & &  & \\
& \iota\boxtimes\id \Rightarrow & \EX_n \boxtimes \EC_n \ar[uu]^{\underline{f} \boxtimes \id_{\EC_n}} \ar[rrdd]^f & \alpha_{\underline{f}} \Downarrow & \\
& & \Downarrow \,\, \phi & & \\
\EC_n \ar[uurr]^{\iota_{\EX_n} \boxtimes \id_{\EC_n}} \ar@/^2.8pc/[uuuurr]^{\iota_{P_n(\Z_n(\EC_n))} \boxtimes \id_{\EC_n}}  \ar[rrrr]^{\id_{\EC_n}} & & & & \EC_n\, .
}}
\ee
is commutative up to a 3-isomorphism $\Phi_{\underline{f}}: \phi \circ (\id\boxtimes\alpha_{\underline{f}}) \circ (\iota_{\underline{f}\circ\iota_{\EX_n}}\boxtimes\id) \Rrightarrow \gamma$. Moreover, the triple $(\underline{f}, \alpha_{\underline{f}},\Phi_{\underline{f}})$ is unique in the sense that, if $(g,\alpha_g,\Phi_g)$ is another triple, then there is an 2-isomorphism $\beta^{(1)}: \underline{f}\Rightarrow g$ such that the diagram consisting of $\beta^{(1)}$, $\alpha_{\underline{f}}$ and $\alpha_g$ is commutative up to a 3-isomorphism $\beta^{(2)}$ such that the diagram consisting of
$\beta^{(2)}$, $\Phi_{\underline{f}}$ and $\Phi_g$ is commutative up to a 4-isomorphism $\beta^{(3)}$. And the triple $(\beta^{(1)}, \beta^{(2)}, \beta^{(3)})$ is unique in a similar sense, so on and so forth. 
}

\medskip
We will not prove above universal property of the \bulk in this work as a complete exposition would lead us too far away, so we do not go further on this subject here.
\void{
The physical configuration associated to the composed morphism $f\circ (\iota_{\EX_n} \boxtimes \id_{\EC_n})$ is depicted in the following picture.
\be \label{eq:univ-proof-2}
\raisebox{-20pt}{
 \begin{picture}(140, 65)
   \put(0,15){\scalebox{2}{\includegraphics{pic-Xn-action.eps}}}
   \put(0,15){
     \setlength{\unitlength}{.75pt}\put(-18,-19){
     \put(-5, 32)       { $\EC_n$}
     \put(92, 60)       { $\Z_n(\EX_n)$}
     \put(120, 6)      { $f_n^{(0)}$ }
     \put(50, 10)     { $ \Z_n(\EC_n) $}
     \put(38, 78)     { $ \EX_n $}

     \put(155, 35)     { $\Z_n(\EC_n)$ }
     }\setlength{\unitlength}{1pt}}
  \end{picture}}
\ee
By Def.\,\ref{def:phi}, $\phi$ amounts to a pair $(\phi^{(0)}, \phi^{(1)})$ of invertible domain walls, where
$$
\phi^{(0)}: \EX_n \boxtimes_{\Z_n(\EX_n)} f^{(0)} \xrightarrow{\simeq} \id_{\EC_n}^{(0)}=P_n(\Z_n(\EC_n))
$$
and $\phi^{(1)}$ is shown in the following diagram:
\be  \label{diag:phi-1}
\raisebox{1em}{
\xymatrix@R=1em@C=0.2em{
\EC_n\boxtimes_{\Z_n(\EC_n)}(\EX_n \boxtimes_{\Z_n(\EX_n)} f^{(0)}) \ar@/^2pc/[rrrrrr]^{f^{(1)}} \ar@/_2pc/[rrrrrr]_{\id_{\EC_n} \boxtimes_{\Z_n(\EC_n)}\phi^{(0)} } & & & \simeq \,\,\, \Downarrow \, \phi^{(1)} & & & \EC_n \,.
}}
\ee

Notice that the pair $(f^{(0)}, \phi^{(0)})$ defines a morphism $\underline{f}: \EX_n \to P_n(\Z_n(\EC_n))$. Now we draw the physical configuration associated to
the composed morphism $\rho \circ (\underline{f} \boxtimes \id_{\EC_n})$ as follows:
\be \label{eq:univ-proof-2}
\raisebox{-45pt}{ \begin{picture}(140, 95)
   \put(0,15){\scalebox{2}{\includegraphics{pic-univ-proof-1.eps}}}
   \put(0,15){
     \setlength{\unitlength}{.75pt}\put(-18,-19){
     \put(-10, 78)       { $\EX_n$}
     \put(-5, 25)      { $\EC_n$}
     \put(78,55)  { $f_n^{(0)}$ }
     \put(128, 20)       { $\Z_n(\EC_n)$}
     \put(135, 68)     { $ \overline{\Z_n(\EC_n)} $}
     \put(40, 10)     { $ \Z_n(\EC_n) $}
     \put(95, 103)     { $ \Z_n(\EC_n) $}
     \put(40, 85)     { $ \Z_n(\EX_n) $}
     \put(200, 115)   { $ \Z_n(\EC_n) $}
     \put(200, 60)     { $\one_{n+1}$ }
     }\setlength{\unitlength}{1pt}}
  \end{picture}}
\ee
where three green ``dots" are all labeled by $P_n(\Z_n(\EC_n))$. By deforming the pictures topologically, we see that there is a 2-isomorphism $\alpha_{\underline{f}}: \rho \circ (\underline{f} \boxtimes \id_{\EC_n}) \Rightarrow f$ such that $\alpha_{\underline{f}}^{(0)}=\id_{f^{(0)}}$ and
$\alpha_{\underline{f}}^{(1)} = (\phi^{(1)})^{-1}$. Moreover, one can check that the identity
$\phi \circ (\id\boxtimes\alpha_{\underline{f}}) \circ (\iota_{\underline{f}\circ\iota_{\EX_n}}\boxtimes\id) = \id_{\id_{\EC_n}}$
holds. This proves the existence of $\underline{f}$.

\smallskip
It remains to prove the uniqueness of such $\underline{f}$. Assume a morphism $g: \EX_n \to P_n(\Z_n(\EC_n))$ and a 2-isomorphism $\alpha_g: \rho \circ (g \boxtimes \id_{\EC_n}) \Rightarrow f$ satisfy the same condition. By Def.\,\ref{def:phi}, we obtain immediately $\alpha_g^{(0)}: g^{(0)} \to f^{(0)}$ is an isomorphism.
One computes that
\be
\phi \circ (\id\boxtimes\alpha_g) \circ (\iota_{g\circ\iota_{\EX_n}}\boxtimes\id)
= (\gamma, (\id\boxtimes\phi^{(1)}) \circ (\id\boxtimes\alpha_g^{(1)}))
\ee
where $\gamma=\phi^{(0)} \circ (\id_{\EX_n}\boxtimes_{\Z_n(\EX_n)}\alpha_g^{(0)}) \circ (g^{(1)})^{-1}$.
Since the left hand side of the equation is 3-isomorphic to $\id_{\id_{\EC_n}}$ by our assumption, there exists an invertible domain wall between $\gamma$ and $P_{n-1}(P_n(\Z_n(\EC_n)))$ by definition. That is, there is an invertible domain wall $\beta^{(1)}$ as in the following diagram
\be  \label{eq:2-morphism-pf}
\raisebox{1em}{
\xymatrix@R=0.8em@C=0.2em{
\EX_n \boxtimes_{\Z_n(\EX_n)} g^{(0)} \ar[rr]^{g^{(1)}}
\ar[ddr]_{\id_{\EX_n} \boxtimes_{\Z_n(\EX_n)} \alpha_g^{(0)} } & &
P_n(\Z_n(\EC_n))  \\
& \beta^{(1)} \,\, \Downarrow \,\, \simeq & \\
& \EX_n \boxtimes_{\Z_n(\EX_n)} f^{(0)} \ar[uur]_{\phi^{(0)}} &
}}
\ee
Then the pair $(\alpha_g^{(0)},\beta^{(1)})$ defines a 2-isomorphism $g\Rightarrow\underline{f}$, as desired.
}


\subsection{Weak morphisms} \label{sec:weak-n-morphism}

There are more general physical realization of a universal process of mapping excitations in the $\EC_n$-phase to the $\ED_n$-phase. For this reason, we would like to introduce the notion of a {\it weak morphism} between two topological orders.
\begin{defn}  \label{def:morphism-1} {\rm
A weak morphism $f$ from an $n$D topological order $\EC_n$ to another one $\ED_n$ is a triple $(f_{n+1}^{(1)}, f_{n}^{(2)}, f_{n-1}^{(3)})$ such that
\bnu
\item $f_{n+1}^{(1)}$ is an $n+$1D (potentially anomalous) topological order,
\item $f_n^{(2)}$ is a gapped domain wall between $f_{n+1}^{(1)}$ and other phases, 
\item $f_{n-1}^{(3)}: f_n^{(2)}\boxtimes_{f_{n+1}^{(1)}} \EC_n \xrightarrow{\simeq} \ED_n$ is an isomorphism.
\enu
}
\end{defn}

\begin{expl}  {\rm
We give some examples.
\begin{enumerate}

\item When $\ED_n=\EC_n$, the identity weak morphism $\id_\EC$ is defined by
\be  \label{eq:id-2}
\id_\EC :=(\one_{n+1}, \one_n, \id_{\EC_n}).
\ee

\item When $a: \EC_n \to \ED_n$ is an isomorphism, it can also expressed as a triple $a=(\one_{n+1}, \one_n, a)$. Namely, the information of an isomorphism is completely encoded in the third component of the triple, i.e. $a=a_{n-1}=a_{n-1}^{(3)}$.

\item For each $n$D topological order $\EC_n$, there is a natural {\it unit weak morphism} $\one_n \xrightarrow{\iota_\EC} \EC_n$ given by
\be  \label{eq:iota}
\iota_\EC = (\one_{n+1}, \EC_n, \one_n\boxtimes \EC_n=\EC_n \xrightarrow{\id_\EC} \EC_n).
\ee


\end{enumerate}
}
\end{expl}

There are morphisms between two morphisms.
\begin{defn} \label{def:higher-morphism} {\rm
Let $f$ and $g$ be two weak morphisms from $\EC_n$ to $\ED_n$.
A 2-morphism $\phi: f \Rightarrow f'$ from $f$ to $f'$,
\void{
sometimes denoted by a diagram,
$$
\raisebox{1em}{
\xymatrix@R=1em@C=0.2em{
\EC_n \ar@/^1.2pc/[rr]^f \ar@/_1.2pc/[rr]_{g} & \Downarrow \phi &  \ED_n
}}
$$
}
is a pair of morphisms $(a,b)$, where
\bnu
\item $a: f_{n+1}^{(1)} \to g_{n+1}^{(1)}$ is a morphism;
\item $b: f_n^{(2)} \to g_n^{(2)}$ is a morphism.
\enu
such that they can form the following triangle-shaped physical configuration
\be  \label{eq:higher-weak-morphism}
\raisebox{-60pt}{ \begin{picture}(140, 124)
   \put(-10,0){\scalebox{2}{\includegraphics{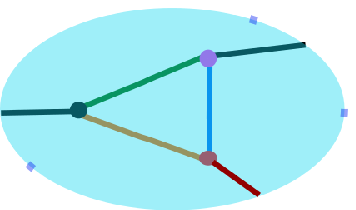}}}
   \put(31,26){
     \setlength{\unitlength}{.75pt}\put(-18,-19){
     \put(0, 73)        { $ f_n^{(2)} $}
     \put(104, 5)       { $b_n^{(2)}$}
     \put(114, 114)      { $\EC_n$}
     \put(38, 26)     { $ b_{n+1}^{(1)} $}
     \put(65, 59)     { $ a_{n+2}^{(1)} $}
     \put(38, 92)     { $ f_{n+1}^{(1)} $}
     \put(126, 60)     { $a_{n+1}^{(2)}$ }
     }\setlength{\unitlength}{1pt}}
  \end{picture}}
\ee
in which all the $(n+2)$-phases are not necessarily closed. $\phi$ is called {\it closed} if $a_{n+2}^{(1)}$ is closed.
}
\end{defn}

\begin{rema} {\rm
The isomorphisms $f_{n-1}^{(3)}$, $g_{n-1}^{(3)}$, $a_n^{(3)}$ and $b_{n-1}^{(3)}$ is not explicit shown in (\ref{eq:higher-weak-morphism}). Their roles can be seen by
squeezing the ``triangle" in (\ref{eq:higher-weak-morphism}) to a single ``point", which is the $n$D $\ED_n$-phase. This collapsing process can be described by the composition of the following isomorphisms:
$$
(b_n^{(2)} \boxtimes_{b_{n+1}^{(1)}} f_n^{(2)}) \boxtimes_{a_{n+1}^{(2)} \boxtimes_{a_{n+2}^{(1)}} f_{n+1}^{(1)}}   \EC
 \xrightarrow{\simeq}
(b_n^{(2)} \boxtimes_{b_{n+1}^{(1)}} f_n^{(2)}) \boxtimes_{g_{n+1}^{(1)}} \EC
\xrightarrow{\simeq} g_n^{(2)} \boxtimes_{g_{n+1}^{(1)}}  \EC \xrightarrow{\simeq} \ED,
$$
in which the first isomorphism is induced by $a_n^{(3)}$ and the second one by $b_{n-1}^{(3)}$, and the last one is defined by $g_{n-1}^{(3)}$. By choosing a different way of collapsing, we obtain the following identity:
$$
b_n^{(2)} \boxtimes_{ b_{n+1}^{(1)} \boxtimes_{a_{n+2}^{(1)}} a_{n+1}^{(2)} } \ED_n\simeq b_n^{(2)} 
\boxtimes_{ b_{n+1}^{(1)} \boxtimes_{a_{n+2}^{(1)}} a_{n+1}^{(2)} } (f_n^{(2)}\boxtimes_{f_{n+1}^{(1)}} \EC_n) \simeq \ED_n,
$$
where the first isomorphism is defined by $(f_{n-1}^{(3)})^{-1}$ and
the second one is due to the independence of how we collapse the ``triangle". Above two mathdisplays summarize what roles the isomorphisms $f_{n-1}^{(3)}$, $g_{n-1}^{(3)}$, $a_n^{(3)}$ and $b_{n-1}^{(3)}$ play in the collapsing processes.
}
\end{rema}

\begin{rema}  {\rm
Weak morphisms are not composable in general. Two morphisms between weak morphisms $\phi: f \Rightarrow g$ and $\psi: g\Rightarrow h$ cannot be composed either.
It is reasonable because physical realizations are not composable in general.}
\end{rema}

If $f: \EC_n \to \ED_n$ be a morphism, then the triple $(\Z_n(\EC_n), f^{(0)}, f^{(1)})$ is automatically a weak morphism, which is universal in a certain sense (see Prop.\,\ref{prop:univ-weak-morphism}). In this case, a 2-isomorphism $\phi$ between two morphisms is automatically a 2-morphism between two weak morphisms.


\medskip
\begin{prop} \label{prop:univ-weak-morphism}
Let $f: \EC_n \to \ED_n$ be a weak morphism between two simple topological orders $\EC_n$ and $\ED_n$. There is a morphism $\tf: \EC_n \to \ED_n$ such that there is a unique closed 2-morphism from $f$ to $\tf$. 
\end{prop}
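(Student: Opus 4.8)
The plan is to reduce the arbitrary $(n+$$1)$D ambient phase $f_{n+1}^{(1)}$ that appears in the weak morphism to the canonical bulk $\Z_n(\EC_n)$, and to recognize this reduction as exactly the universal factorization guaranteed by Thm.\,\ref{thm:universal}. The weak morphism presents $\ED_n$ as $f_n^{(2)}\boxtimes_{f_{n+1}^{(1)}}\EC_n$, so that $\EC_n$ is realized as a gapped boundary of $f_{n+1}^{(1)}$. First I would invoke the unique-bulk hypothesis together with the universal property of the center: since $\Z_n(\EC_n)$ is terminal among $(n+$$1)$D phases admitting $\EC_n$ as a gapped boundary, the $f_{n+1}^{(1)}$-action on $\EC_n$ factors, and there is a canonical morphism $a\colon f_{n+1}^{(1)}\to\Z_n(\EC_n)$ of $(n+$$1)$D topological orders compatible with this boundary. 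The terminality is precisely the statement that $a$ is unique once its ambient $(n+$$2)$D phase is required to be closed; this is the higher-dimensional avatar of the universal property of the pair $(P_n(\Z_n(\EC_n)),\rho)$ proved in Thm.\,\ref{thm:universal}.

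Next I would define the genuine morphism $\tf=(\tf_n^{(0)},\tf_{n-1}^{(1)})$ by pushing $f$ forward along $a$. Using $a$ to regard $P_n(\Z_n(\EC_n))$ as a left $f_{n+1}^{(1)}$-module, set $\tf_n^{(0)}:=f_n^{(2)}\boxtimes_{f_{n+1}^{(1)}}P_n(\Z_n(\EC_n))$, which is a gapped wall between $\Z_n(\EC_n)$ and $\Z_n(\ED_n)$ exactly as in the natural-morphism example preceding this section. The collapse identity $P_n(\Z_n(\EC_n))\boxtimes_{\Z_n(\EC_n)}\EC_n=\EC_n$ of Eq.\,(\ref{eq:BCD}) together with associativity (\ref{eq:boxtimes-associative}) gives $\tf_n^{(0)}\boxtimes_{\Z_n(\EC_n)}\EC_n = f_n^{(2)}\boxtimes_{f_{n+1}^{(1)}}\EC_n$, and composing with $f_{n-1}^{(3)}$ supplies the required isomorphism $\tf_{n-1}^{(1)}\colon \tf_n^{(0)}\boxtimes_{\Z_n(\EC_n)}\EC_n\xrightarrow{\simeq}\ED_n$. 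The morphism $b\colon f_n^{(2)}\to\tf_n^{(0)}$ induced through the gluing, paired with $a$, then assembles into the triangle-shaped configuration (\ref{eq:higher-weak-morphism}); checking that it closes up coherently amounts to repeatedly applying (\ref{eq:boxtimes-associative}) and (\ref{eq:BCD}) to slide the layer $P_n(\Z_n(\EC_n))$ past $f_{n+1}^{(1)}$.

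For closedness, I would argue that the ambient $(n+$$2)$D phase of the canonical $a$ is $\Z_{n+1}(\Z_n(\EC_n))$, which is trivial by Eq.\,(\ref{eq:Z2=0}), so that $(a,b)$ is indeed a closed $2$-morphism. Uniqueness is where the universal property of Thm.\,\ref{thm:universal} does the real work. Given any closed $2$-morphism from $f$ to a genuine morphism $\tf'$, its first component is a morphism $a'\colon f_{n+1}^{(1)}\to\Z_n(\EC_n)$ whose ambient $(n+$$2)$D phase is closed; the uniqueness clause of the universal property then forces $a'=a$. The wall $(\tf')_n^{(0)}$ must satisfy $(\tf')_n^{(0)}\boxtimes_{\Z_n(\EC_n)}\EC_n\simeq\ED_n$, and by the rigidity established in Remark\,\ref{rema:unique-f0} (Eq.\,(\ref{eq:unique-f0}), identifying $\tf_n^{(0)}\simeq\ED_n\boxtimes_{\EC_n\boxtimes\EC_n^\op}\EC_n$) this determines $(\tf')_n^{(0)}\simeq\tf_n^{(0)}$; the isomorphisms $f_{n-1}^{(3)}$ and the coherence data of the triangle then identify the entire $2$-morphism with $(a,b)$. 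Under the Minimal Assumption (Remark\,\ref{rema:minimal}) these equivalences are equalities, yielding the unique closed $2$-morphism.

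The hard part will be making the uniqueness argument genuinely rigorous rather than heuristic: one must show that requiring the ambient $(n+$$2)$D phase to be closed is exactly the condition that translates, through the unique-bulk hypothesis, into the universal property of $\Z_n(\EC_n)$ applied one dimension up, so that terminality of $(P_n(\Z_n(\EC_n)),\rho)$ pins down $a$ and hence $\tf$. The coherence bookkeeping of the higher-categorical triangle (\ref{eq:higher-weak-morphism}) --- tracking the isomorphisms $a_n^{(3)}$, $b_{n-1}^{(3)}$ and $f_{n-1}^{(3)}$ through the collapsing process --- is the other delicate point, though it reduces to formal manipulations with $\boxtimes$ via (\ref{eq:boxtimes-associative}) and (\ref{eq:BCD}).
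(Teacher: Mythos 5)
Your construction of $\tf$ is essentially the paper's: the wall $\tf_n^{(0)}=f_n^{(2)}\boxtimes_{f_{n+1}^{(1)}}P_n(\Z_n(\EC_n))$ agrees with the natural morphism $\EC_n\to\EB_n\boxtimes_{\EA_{n+1}}\EC_n$ given in the examples of Sec.\,\ref{sec:morphisms} (with $\EB_n=f_n^{(2)}$, $\EA_{n+1}=f_{n+1}^{(1)}$), hence with the paper's $\EE_n=P_n(\EY_{n+1})\boxtimes_{\Z_{n+1}(f^{(1)})}f_n^{(2)}$, and your collapse computation via Eqs.\,(\ref{eq:boxtimes-associative}) and (\ref{eq:BCD}) is a clean algebraic version of the paper's ``the dimensional reduction is independent of how one squeezes'' argument. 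But there is a genuine gap in how you produce and pin down the 2-morphism. The canonical $a\colon f_{n+1}^{(1)}\to\Z_n(\EC_n)$ cannot come from Thm.\,\ref{thm:universal}: that universal property concerns stacking actions $\EX_n\boxtimes\EC_n\to\EC_n$ by phases of the \emph{same} dimension as $\EC_n$, whereas here $\EC_n$ is a gapped \emph{boundary} of the higher-dimensional, generally anomalous phase $f_{n+1}^{(1)}$, and no action in the sense of Thm.\,\ref{thm:universal} is available; your ``higher-dimensional avatar'' is precisely what is not established. What actually supplies $a$ --- and the component $b\colon f_n^{(2)}\to\tf_n^{(0)}$, which you never construct beyond ``induced through the gluing'' --- is the \emph{strong} unique-bulk hypothesis: the paper applies it to the whole configuration $[f_n^{(2)},f_{n+1}^{(1)},\EC_n]$ of Fig.\,\ref{fig:from-pre-to-morphism}\,(a) to obtain the unique $(n+2)$D bulk $\Z_{n+1}(f^{(1)})$ together with the unique bulk walls $\EX_{n+1}$ over $f_n^{(2)}$ and $\EY_{n+1}$ over $\EC_n$, and the folding of this configuration \emph{is} the closed 2-morphism. (Note also that the $f_{n+1}^{(1)}$-module structure on $P_n(\Z_n(\EC_n))$ you need for $\tf_n^{(0)}$ presupposes $a$, so this gap propagates into your construction.) Uniqueness is then immediate in the paper: closedness forces $a_{n+2}^{(1)}$ to be the unique bulk, and the strong hypothesis fixes all the remaining walls in the triangle (\ref{eq:higher-weak-morphism}); your substitute --- the unproven universal property one dimension up plus the rigidity of Remark\,\ref{rema:unique-f0} --- is exactly the part you concede is heuristic, and Remark\,\ref{rema:unique-f0} anyway pins down the \emph{target morphism} up to isomorphism, which is a different statement from uniqueness of the closed 2-morphism $f\Rightarrow\tf$ that the proposition asserts.

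There is also a concrete error in your closedness check. Viewing $a$ as a weak morphism, its ambient $(n+2)$D phase is the bulk of its \emph{source}, namely $\Z_{n+1}(f_{n+1}^{(1)})$, not $\Z_{n+1}(\Z_n(\EC_n))$. The latter is trivial by Eq.\,(\ref{eq:Z2=0}), and if the ambient phase were trivial the wall condition would read $a_{n+1}^{(2)}\boxtimes f_{n+1}^{(1)}\simeq\Z_n(\EC_n)$, i.e.\ $\Z_n(\EC_n)$ would factor as a stacking, which fails in general. The conclusion survives --- $\Z_{n+1}(f^{(1)})$ is closed because it is a \bulk, again by Eq.\,(\ref{eq:Z2=0}) --- but only with the correct identification.
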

\begin{proof}
Let $f=(f_{n+1}^{(1)}, f_n^{(2)}, f_{n-1}^{(3)})$ be a weak morphism $\EC_n \to \ED_n$. Using the strong unique-bulk hypothesis, find the unique \bulk of the physical configuration of $f_n^{(2)} \boxtimes_{f_{n+1}^{(1)}} \EC_n$ depicted in Fig.\,\ref{fig:from-pre-to-morphism} (a). Then we fold the vertical box anti-clockwise while keep the position of the ``line segment" $[f_n^{(2)}, f_{n+1}^{(1)}, \EC_n]$ fixed. Then we obtain
an $(n+1)$D physical configuration as shown in Fig.\,\ref{fig:from-pre-to-morphism} (b), in which the $\EC_n$-phase remains the same but now becomes a gapped boundary of its \bulk $\Z_n(\EC_n)$, $\EE_n=P_n(\EY_{n+1}) \boxtimes_{\Z_{n+1}(f^{(1)})} f_n^{(2)}$ and $\Z_n(\ED_n)= \EX_{n+1} \boxtimes_{\Z_{n+1}(f^{(1)})} \EY_{n+1}$.
Moreover, we must have $\EE_n \boxtimes_{\Z_n(\EC_n)} \EC_n \simeq \ED_n$. Indeed, the dimensional reduction is independent of how we squeezing the configuration in detail. So one can choose to squeeze the topless box in (a) horizontally, and obtain a vertical ``line" given by $\Z_n(\ED_n)$ with the bottom end given by $\ED_n$. Therefore, the identity $\EE_n \boxtimes_{\Z_n(\EC_n)} \EC_n \simeq \ED_n$ must hold. Then there is a morphism $\tf=(\EE_n, \tf_{n-1}^{(1)})$ such that the above dimensional reduction process defines a closed 2-morphism from $(f_{n+1}^{(1)}, f_n^{(2)}, f_{n-1}^{(3)})$ to $(\Z_n(\EC_n), \EE_n, \tf_{n-1}^{(1)})$. This closed 2-morphism is unique by the strong unique-bulk hypothesis.
\end{proof}

\begin{figure}[t]
$$
 \begin{picture}(150, 90)
   \put(40,10){\scalebox{2}{\includegraphics{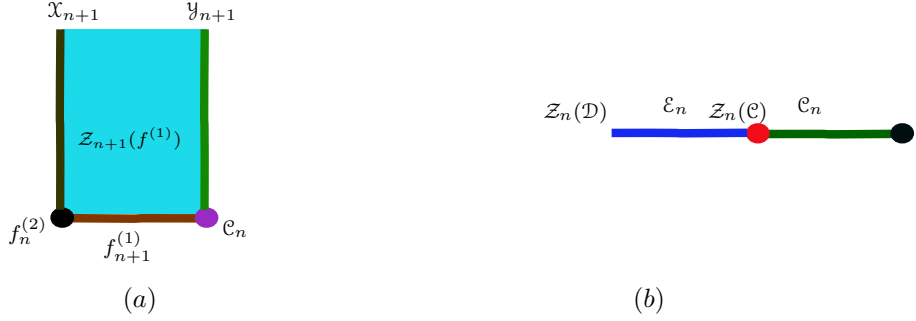}}}
   \put(0,-55){
     \setlength{\unitlength}{.75pt}\put(-18,-19){
     \put(50, 102)       {\footnotesize $f_n^{(2)}$}
     \put(98, 95)     {\footnotesize $ f_{n+1}^{(1)} $}
     \put(158, 105)     {\footnotesize $\EC_n$ }
     \put(85, 150)    {\scriptsize $\Z_{n+1}(f^{(1)})$}
     \put(70, 215)     {\footnotesize $\EX_{n+1}$}
     \put(140,215)    {\footnotesize $\EY_{n+1}$}
     }\setlength{\unitlength}{1pt}}
  \end{picture}
  \quad\quad\quad
  \quad\quad
   \begin{picture}(140, 90)
   \put(20,42){\scalebox{2}{\includegraphics{pic-def-morphism-2.eps}}}
   \put(10,5){
     \setlength{\unitlength}{.75pt}\put(-18,-19){
     \put(123,84)      {\footnotesize $\Z_n(\EC)$}
     \put(40, 84)     {\footnotesize $ \Z_n(\ED) $}
     \put(168, 87)     {\footnotesize $\EC_n$}
     \put(100, 87)     {\footnotesize $\EE_n$}
     }\setlength{\unitlength}{1pt}}
  \end{picture}
$$
$$
(a) \quad\quad\quad\quad\quad\quad\quad\quad\quad\quad\quad\quad\quad\quad
\quad\quad\quad\quad(b)
$$
  \caption{{\small We start from a physical configuration depicted in (a), in which $f_n^{(2)} \boxtimes_{f_{n+1}^{(1)}} \EC_n \simeq \ED_n$. Both $\EX_{n+1}$ and $\EY_{n+1}$ are simple and uniqueness fixed by the unique-bulk hypothesis. By folding the topless box anti-clockwise to the horizontal line, we obtain the physical configuration in (b).}}
  \label{fig:from-pre-to-morphism}
\end{figure}

\void{

\subsection{A few mathematical results on fusion categories} \label{sec:prop-M-N}

In this subsection, we collect a few mathematical results on fusion categories that are needed in this work. Their proofs will be given in \cite{kong-zheng}.

\medskip
Given an indecomposable multi-fusion 1-category $\EC$, two left $\EC$-modules $\EM,\EN$, Tambara's tensor product $\EM^\op\boxtimes_\EC \EN$ is defined similar to the usual tensor product of modules over an ordinary algebra \cite{tambara}. When both $\EM$ and $\EN$ are semisimple, one can show that $\EM^\op\boxtimes_\EC \EN\simeq \fun_\EC(\EM, \EN)$ \cite{eno2009}. 
\void{
In fact, the balanced functor
\be
\EM^\op \times \EN \to \fun_\EC(\EM,\EN), \quad (x,y)\mapsto \CHom_{\EC,\EM}(x,-)\otimes y
\ee
induces an equivalence $\EM^\op \boxtimes_\EC \EN \simeq \fun_\EC(\EM,\EN)$. Here $\CHom$ denotes the internal hom.
}

\void{
\begin{prop}  \label{prop:M-N2}
Let $\EC$ and $\ED$ be two indecomposable multi-fusion 1-categories, $\EM$ a semisimple left $\EC$-module, $\EM'$ a semisimple $\EC$-$\ED$-bimodule and $\EN$ a semisimple left $\ED$-module. The balanced functor
\be \label{eq:M-N3}
\fun_{\EC}(\EM, \EM') \times \EN \to \fun_{\EC}(\EM, \EM'\boxtimes_\ED\EN), \quad (f,y) \mapsto f(-) \boxtimes y
\ee \label{eq:M-N4}
induces an equivalence
\be
\fun_{\EC}(\EM, \EM')\boxtimes_\ED\EN \simeq \fun_{\EC}(\EM, \EM'\boxtimes_\ED\EN).
\ee
\end{prop}

\begin{proof}
We have equivalences
\be \label{eq:M-N6}
\fun_{\EC}(\EM, \EM')\boxtimes_\ED\EN
\simeq \EM^\op \boxtimes_{\EC} \EM' \boxtimes_{\ED} \EN
\simeq \fun_{\EC}(\EM, \EM'\boxtimes_\ED\EN).
\ee
Composing them with the balanced functor
\begin{align*}
\EM^\op \times \EM' \times \EN & \to \fun_{\EC}(\EM, \EM')\boxtimes_\ED\EN, \\
(x,x',y) & \mapsto (\CHom_{\EC,\EM}(x,-)\otimes x')\boxtimes y
\end{align*}
we obtain the balanced functor
\begin{align*}
\EM^\op \times \EM' \times \EN & \to \fun_{\EC}(\EM, \EM'\boxtimes_\ED\EN), \\
(x,x',y) & \mapsto \CHom_{\EC,\EM}(x,-)\otimes(x'\boxtimes y),
\end{align*}
which factors through the balanced functor \eqref{eq:M-N3}. Consequently, the total equivalence of \eqref{eq:M-N6} is induced from \eqref{eq:M-N3}.
\end{proof}
}

\begin{prop}[\cite{kong-zheng}]  \label{prop:M-N}
Let $\EC,\ED,\EE$ be indecomposable multi-fusion categories, $\EM, \EM'$ two semisimple $\EC$-$\ED$-bimodules and $\EN, \EN'$ two semisimple $\ED$-$\EE$-bimodules. The balanced functor
\be \label{eq:M-N1}
\fun_{\EC|\ED}(\EM, \EM') \times \fun_{\ED|\EE}(\EN, \EN') \to \fun_{\EC |\EE}(\EM\boxtimes_\ED \EN, \EM'\boxtimes_\ED \EN'), \quad (f,f') \mapsto f\boxtimes f'
\ee
induces an equivalence
\be \label{eq:M-N}
\fun_{\EC|\ED}(\EM, \EM') \boxtimes_{Z(\ED)} \fun_{\ED|\EE}(\EN, \EN') \simeq \fun_{\EC |\EE}(\EM\boxtimes_\ED \EN, \EM'\boxtimes_\ED \EN').
\ee
When $\EM=\EM'$ and $\EN=\EN'$, \eqref{eq:M-N} is a monoidal equivalence.
\end{prop}

\void{
\begin{proof}
We have equivalences:
\begin{align}
& \fun_{\EC|\ED}(\EM, \EM') \boxtimes_{Z(\ED)} \fun_{\ED|\EE}(\EN, \EN') \nn
&\hspace{2cm} \simeq
(\EM^\op \boxtimes_{\EC\boxtimes \ED^\rev} \EM') \boxtimes_{Z(\ED)}
(\EN^\op \boxtimes_{\ED\boxtimes \EE^\rev} \EN')  \nn
&\hspace{2cm} \simeq
(\EM^\op \boxtimes_\EC \EM') \boxtimes_{\ED^\rev \boxtimes \ED} \ED^\op \boxtimes_{Z(\ED)} \ED \boxtimes_{\ED^\rev \boxtimes \ED} (\EN^\op \boxtimes_{\EE^\rev} \EN')  \nn
&\hspace{2cm} \simeq
(\EM^\op \boxtimes_\EC \EM') \boxtimes_{\ED^\rev \boxtimes \ED} (\ED \boxtimes \ED^\rev) \boxtimes_{\ED^\rev \boxtimes \ED} (\EN^\op \boxtimes_{\EE^\rev} \EN')  \nn
&\hspace{2cm} \simeq
(\EM^\op \boxtimes_\EC \EM') \boxtimes_{\ED^\rev \boxtimes \ED}
(\EN^\op \boxtimes_{\EE^\rev} \EN')  \nn
&\hspace{2cm} \simeq
(\EM^\op \boxtimes_{\ED^\rev} \EN^\op) \boxtimes_{\EC\boxtimes \EE^\rev} (\EM' \boxtimes_\ED \EN') \nn
&\hspace{2cm} \simeq
(\EM \boxtimes_\ED \EN)^\op \boxtimes_{\EC\boxtimes \EE^\rev} (\EM' \boxtimes_\ED \EN') \nn
&\hspace{2cm} \simeq
\fun_{\EC|\EE}(\EM \boxtimes_\ED \EN, \EM' \boxtimes_\ED \EN'). \label{eq:M-N-pf1}
\end{align}
Observe that the monoidal equivalence $\ED^\op \boxtimes_{Z(\ED)} \ED \simeq \fun_{Z(\ED)}(\ED,\ED) \simeq \ED \boxtimes \ED^\rev$ carries $\one_\ED \boxtimes_{Z(\ED)} \one_\ED$ to $\bigoplus_d d^\vee \boxtimes d$ where $d$ runs over all simple objects of $\ED$.

Consider the balanced functor
\begin{align*}
\EM^\op \times \EM' \times \EN^\op \times \EN' & \to \fun_{\EC|\ED}(\EM, \EM') \boxtimes_{Z(\ED)} \fun_{\ED|\EE}(\EN, \EN'), \\
(x,x',y,y') & \mapsto (\CHom_{\EC\boxtimes\ED^\rev,\EM}(x,-)\otimes x') \boxtimes_{Z(\ED)} (\CHom_{\ED\boxtimes\EE^\rev,\EN}(y,-)\otimes y').
\end{align*}
Composing it with \eqref{eq:M-N-pf1}, we obtain the balanced functor
\begin{align*}
\EM^\op \times \EM' \times \EN^\op \times \EN' & \to \fun_{\EC|\EE}(\EM \boxtimes_\ED \EN, \EM' \boxtimes_\ED \EN'), \\
(x,x',y,y') & \mapsto \bigoplus_d \CHom_{\EC\boxtimes\EE^\rev,\EM\boxtimes_\ED\EN}(x\boxtimes_\ED (d^{\vee\vee}\otimes y),-)\otimes(x'\boxtimes_\ED (d\otimes y')),
\end{align*}
which factors through the balanced functor \eqref{eq:M-N1}. Consequently, the total equivalence of \eqref{eq:M-N-pf1} is induced from \eqref{eq:M-N1}.
\end{proof}
}

\void{
\begin{rema} \label{rema:M-N} {\rm
When $\EM=\EM'$ and $\EN=\EN'$, \eqref{eq:M-N} is a monoidal equivalence.
}
\end{rema}
}

Let $f:\EC\to\ED$ be a monoidal functor. Then $\fun_{\EC|\EC}(\EC, \ED)$ is a indecomposable multi-fusion category \cite{gnn} which can be described as follows.
An object is a pair:
$$
(d, \beta_{-,d}=\{f(c) \otimes d \xrightarrow{\beta_{c,d}} d \otimes f(c)\}_{c\in \EC})
$$
where $d\in \ED$ and $\beta_{c,d}$ is an isomorphism in $\ED$ natural with respect to the variable $c\in \EC$ and satisfying $\beta_{c',d} \circ \beta_{c,d} = \beta_{c'\otimes c, d}$. A morphism $(d,\beta)\to(d',\beta')$ is defined by a morphism $\psi: d \to d'$ respecting $\beta_{c,d}$ and $\beta'_{c,d'}$ for all $c\in \EC$. The monoidal structure is given by the formula $(d,\beta)\otimes(d',\beta')=(d\otimes d',\beta'\circ\beta)$

\begin{cor}   \label{cor:M-N}
Let $\EC\to\ED$ be a monoidal functor between fusion categories. The evaluation functor $\EC\times\fun_{\EC|\EC}(\EC,\ED) \to \ED$, $(x,f)\mapsto f(x)$ induces a monoidal equivalence
\be   \label{eq:M-N5}
\EC\boxtimes_{Z(\EC)}\fun_{\EC|\EC}(\EC,\ED) \simeq \ED.
\ee
\end{cor}

\begin{proof}
It follows from $\fun_{\vect|\EC}(\EC,\EC)\simeq\EC$ and $\fun_{\vect|\EC}(\EC\boxtimes_\EC\EC,\EC\boxtimes_\EC\ED) \simeq \ED$.
\end{proof}

As an example of dimensional reduction, when $\EC=\EE=\vect$, above result implies that
\be
\fun_{\vect|\ED}(\EM, \EM) \boxtimes_{Z(\ED)} \fun_{\ED|\vect}(\EN, \EN) \simeq
\fun_{\vect|\vect}(\EM \boxtimes_\ED \EN, \EM \boxtimes_\ED \EN), \label{eq:M-N-fusion}
\ee
which is a multi-fusion category with a trivial Drinfeld center \cite{eno2009}.

\begin{rema} \label{rema:C-ZC-C} {\rm
In the case, $\ED=\EC$ and $\EM=\EN=\EC$, (\ref{eq:M-N-fusion}) gives a monoidal equivalence:
\be  \label{eq:C-ZC-C-1}
\EC \boxtimes_{Z(\EC)} \EC^\rev \simeq \fun_{\vect|\vect}(\EC, \EC).
\ee
On the other hand, there is an equivalence
\be  \label{eq:C-ZC-C-2}
\fun_{\vect|\vect}(\EC, \EC) \simeq \EC^\op \boxtimes \EC.
\ee
as categories. Therefore, the category $\EC \boxtimes_{Z(\EC)} \EC^\rev$ has another monoidal structure inherit from that of $\EC^\op \boxtimes \EC$. This monoidal structure is different from that in (\ref{eq:C-ZC-C-1}). In particular, the right hand side of (\ref{eq:C-ZC-C-1}) is a multi-fusion category, that of (\ref{eq:C-ZC-C-2}) is a fusion category. The monoidal structure in (\ref{eq:C-ZC-C-1}) is the one used in Example\,\ref{expl:lw-mod}. 
}
\end{rema}
}

\small

\end{document}